\documentclass[final]{lipics-v2021}

\usepackage{wrapfig}
\usepackage{graphicx} 
\usepackage{amssymb,amsthm,amsmath}
\usepackage{algorithm}
\usepackage[noend]{algpseudocode}
\usepackage{comment}
\usepackage{tikz}
\usepackage{subcaption}
\usetikzlibrary{arrows.meta, decorations.pathmorphing}

\newif\iffull\fullfalse

\newcommand{\en}{{\tt n}}
\newcommand{\ex}{{\tt x}}
\newcommand{\EN}{N}
\newcommand{\EX}{X}
\newcommand{\PP}{{\cal P}}
\newcommand{\SP}{{\cal SP}}

\newcommand{\MS}{{\bf MS}}

\newcommand{\NP}{\emph{NP}}
\newcommand{\edge}{\!\rightarrow\!}
\usepackage{tikz-cd}
\usetikzlibrary{decorations.pathmorphing}
\usepackage[customcolors]{hf-tikz}
\hfsetfillcolor{gray!20} 
\hfsetbordercolor{white}
\newcommand{\pat}{\rightsquigarrow}
\newcommand{\W}{\cal W}

\title{An Incremental Algorithm for Checking the Possibility of Braess Paradox in Dynamic Nets}
\author{Dario {Fiorenza}}{Sapienza, University of Rome, Italy}{fiorenza@di.uniroma1.it}{https://orcid.org/0000-0001-8216-9886}{}
\author{Daniele {Gorla}}{Sapienza, University of Rome, Italy}{gorla@di.uniroma1.it}{https://orcid.org/0000-0001-8859-9844}{}
\author{Ivano {Salvo}}{Sapienza, University of Rome, Italy}{salvo@di.uniroma1.it}{https://orcid.org/0000-0003-3111-701X}{}
\authorrunning{D. Fiorenza, D. Gorla, I. Salvo}
\Copyright{Dario Fiorenza, Daniele Gorla, Ivano Salvo}

\ccsdesc[500]{Theory of computation~Dynamic graph algorithms}
\ccsdesc[500]{Theory of computation~Algorithmic game theory}
\ccsdesc[300]{Theory of computation~Network games}

\keywords{Incremental Algorithm, Algorithmic Game Theory, Braess Paradox}
\date{April 2026}

\begin{document}
\nolinenumbers
\maketitle
\sloppy

\begin{abstract}
Braess paradox is a well-known phenomenon that originates when latency at Wardrop equilibrium in traffic networks decreases because of removing edges.
The possibility of having the paradox was called {\em vulnerability} by Roughgarden in 2006 and was characterized later on by graph-theoretical notions, both for undirected and for directed graphs.
In this paper we provide an algorithm for the incremental case of checking vulnerability for dynamically evolving graphs.
The crucial idea to keep the amortized cost linear for every edge addition is that we do not need to run the vulnerability algorithm on the whole graph, but only on a well-identified subgraph, determined by the edge that we are adding.
Overall, to add $m$ edges, we pay a cost of $O(m^2)$; this aligns with the $O(m^2)$ cost of the state-of-the-art static algorithm for vulnerability.
\end{abstract}


\section{Introduction}
{\em Traffic networks} \cite{braessFormal,BI97} provide a model for studying selfish routing, where   
non-cooperative agents travel from a source node $s$ to 
a destination node $t$. Since the latency experienced by an agent 
while traveling along a path depends on network congestion, 
traffic in a network is modeled as a non-cooperative game, 
which stabilizes to equilibrium when all agents cannot improve their latency by 
choosing a different path.
This notion of equilibrium was defined by Wardrop \cite{War52} in the
context of transport analysis.
In the selfish routing model, edges of a directed graph are labeled by a function 
specifying the latency that agents experience traveling on an edge 
in terms of the flow passing through it. 

As a representative example, consider the so-called {\em Wheatstone network} in Fig.~\ref{fig:Wheat}(1).
The edges $u\edge v$, $s\edge v$ and $u\edge t$ are  labeled by the constant functions 0 and 1: 
independently of how much traffic travels along them,
the passage from $u$ to $v$ is instantaneous, whereas it creates a delay of 1 in the other two cases. 
Finally, the delay on edges $s\edge u$ and $v\edge t$ 
is linear in the amount of traffic traveling along them.
The (non-optimal) flow of value $1$ where all selfish agents choose the path $s\edge u\edge v\edge t$ is at Wardrop equilibrium, 
and the overall delay experienced is $1+0+1=2$.
In contrast, the (optimal) flow in which agents distribute half along the path $s\edge u\edge t$ and half 
along $s\edge v\edge t$ is {\em not} at equilibrium, but the delay experienced is smaller (viz., 3/2). 
This is indeed the delay obtained with a flow of 
value 1 at Wardrop equilibrium in the subnet of the Wheatstone network depicted in Fig.~\ref{fig:Wheat}(2).
{\em Braess paradox} \cite{braessFormal,braessOriginal} 
originates when latency at Wardrop equilibrium decreases because of removing edges:
the Wheatstone network is a minimal example of this counterintuitive phenomenon. 

\begin{figure}[t]
	\begin{tabular}{ccc}
		\begin{minipage}{0.29\textwidth}
			\center
			\begin{tikzcd}[column sep=.8cm,row sep=.4cm]
			& u \ar[rd, "1"]\ar[dd,"0"]\\
			s \ar[ur,"x"] \ar[dr,"1" '] && t\\
			& v \ar[ru,"x" ']\\
			&(1)
			\\
			\end{tikzcd}
		\end{minipage}
		&
		\begin{minipage}{0.29\textwidth}
			\center
			\begin{tikzcd}[column sep=.8cm,row sep=.4cm]
			& u \ar[rd,"1"]\\
			s \ar[ur,"x"] \ar[dr,"1" '] && t\\
			& v \ar[ru,"x" ']\\
			&(2)
			\\
			\end{tikzcd}
		\end{minipage}
		&
		\begin{minipage}{0.33\textwidth}
			\center
			\begin{tikzcd}[column sep=.8cm,row sep=.4cm]
			& u \ar[dd]\ar[dr] & \\
			s \ar[ru]\ar[rd] & & t\\ 
			& v \ar[ru]\\
			&(3)
			\\
			\end{tikzcd}
		\end{minipage}
		\vspace{-3mm}
	\end{tabular}
	\caption{(1) The Wheatstone network; (2) Its optimal subgraph; (3) The graph $\W$.}
	\label{fig:Wheat}
	\vspace{-3mm}
\end{figure}

Braess paradox has been studied for decades. The results that are related with ours
start with \cite{Rough06}, where it is shown that, given a multigraph, a latency
function on its edges and a total amount of flow, it is \NP-hard to prove
whether or not the resulting network suffers from the Braess paradox. 
A change of perspective suggested in \cite{Rough06} is to study the
Braess paradox from a graph-theoretical point of view: in particular, a problem left open in \cite{Rough06} is the characterization of {\em vulnerable} $st$-graphs, which are those that admit instances (i.e., assignments of latency functions to edges) that generate the paradox. 
A characterization of vulnerable undirected multigraphs is presented in~\cite{Milch06}, where
it is proved that an undirected graph is vulnerable if and only if it is not series-parallel \cite{RS42}. This characterization only holds for graphs where all nodes and edges lie on at least one 
simple $st$-path. In \cite{ChenEtal15}, the same characterization is proved for directed multigraphs that satisfy the last condition, called {\em irredundancy} therein. 

However, while checking (ir)redundancy and finding the Maximum Irredundant Subgraph (MIS) can be efficiently done for undirected graphs \cite{CinesiFull}, the same does not hold for the directed ones: in \cite{CGS19} it is proved that recognizing if an edge is redundant (i.e., if it does not belong to any simple $st$-path) in a digraph is an NP-Hard problem.
Moreover, in \cite{CGS18} the authors prove that vulnerability coincides with containing an irredundant subgraph homeomorphic to the graph $\W$ of Fig.~\ref{fig:Wheat}(3)
and in \cite{CGS19} they provide a $O(nm^2)$ algorithm to state if a given graph is vulnerable or not; in \cite{MS23} a faster algorithm (that we call \MS\ in the following from the names of authors) is given which runs in $O(m^2)$.
Both these algorithms have the peculiar property that, if the input graph is not vulnerable, they also return its MIS.

Hence, the class of non-vulnerable graphs is strongly related to the very well-known and studied class of series-parallel graphs. Indeed, the two classes coincide in the undirected case; in the directed one, a graph is non-vulnerable if and only if its MIS is series-parallel.
Once these characterizations have been proved and the associated polynomial-time algorithms have been devised, the next challenging algorithmic question is how to handle dynamically evolving graphs in this context. Indeed, inter-node connections can change during time, either by appearing or by disappearing; this may consequently change the possibility of having the Braess Paradox. 
We stress that in this paper the term ‘‘dynamic'' refers exclusively to the evolution of the network topology (more precisely, to the possibility of adding edges) and does not refer to flows that change over time, as in the case, e.g., of \cite{MLS13,9921998}. Indeed, vulnerability is a notion that does not depend at all on the flows involved and its aim is precisely to characterise those topologies that, for whatever flow and demand, are immune to the Braess's paradox.

Having an algorithm that efficiently answers the vulnerability question without considering each time the whole graph is crucial, mostly in the dynamic case, where edges are added and removed sequentially without any information on future changes.
Dynamic algorithms are nowadays a mature research area \cite{BDGA23,chen2025recent,DynBook,HenFulDyn22}; they face, in a dynamic setting,  many important graph problems like, e.g.,
 the dynamic tree structure introduced  \cite{TarDyTr}, the graph connectivity \cite{JR87} and the minimum spanning tree \cite{JTDynProb01}.

In this paper we provide an algorithm for vulnerability in the incremental case, where edges are added one at a time. 
We start with a non-vulnerable graph $G$, since vulnerable graphs remain so after any edge addition. $G$ being not vulnerable, we can have its MIS by running \MS, i.e. the vulnerability algorithm in \cite{MS23}. 
When adding a new edge $e$, this can be redundant or irredundant in the new graph; moreover, its addition can make irredundant edges that were redundant in $G$. This can potentially open new paths which can create the homeomorphic copy of $\W$ inside the MIS of $G \cup \{e\}$, thus making the latter graph vulnerable.
So, the key aspect of our work is to find a linear time way to ascertain this fact and to calculate the new MIS, in the case in which non-vulnerability is preserved after the edge addition. To get this, the crucial observation is that we do not need to run again \MS\ on the whole resulting graph but only on the  subgraph formed by the edges that become irredundant thanks to the edge addition (if any).
Note that this subgraph can be arbitrary big; so, in general we cannot beat the cost of the static algorithm in one single iteration. However, we can prove that the amortized cost of every addition is $O(m)$, since the cost of the algorithm depends on the irredundant edges and, once an edge enters into the MIS, it will not be considered anymore in successive runs of \MS. 

To sum up, the main contribution of our paper is an algorithm that efficiently deals with edge additions by restricting as much as possible the part of the graph on which \MS\ is invoked as a sub-procedure and by avoiding analyzing that part more than once.
To obtain this, we start by refining the cost analysis of \MS, by showing that it is $O(km)$, where $k$ is the number of irredundant edges of the graph (the worst case clearly remains $O(m^2)$). 

The rest of the paper is organized as follows. In Section 2 we provide some preliminary background and definitions. In Section 3 we characterize the kinds of edge addition: impactless (the new edge is redundant in the resulting graph), impactful (the new edge is irredundant in the resulting graph), or destructive (the new edge makes the resulting graph vulnerable). Section 4 presents the algorithms for linearly checking these conditions, the incremental algorithm and its amortized analysis. Section 5 concludes the paper by drawing orthogonal and future research directions.
Most proofs and some auxiliary results are contained in the Appendix.

\section{Preliminaries}

\label{sec:vuln}

%
A \emph{directed multigraph} (or  \emph{multi-digraph}) $G = (V,E)$ consists of a
set $V$ of \emph{vertices} (or \emph{nodes}) and a set $E$ of \emph{edges}. 
Every edge relates a pair of vertices; if $e$ relates $(u,v)$, we say that $e$ is
an \emph{output} of $u$ and an \emph{input} of $v$.
%
When the name of an edge is not relevant but only its extremes are,
we denote an edge with the pair it relates.
 
A \emph{path} $P$ is a sequence $u_1 e_1 u_2 \ldots u_{n-1} e_{n-1} u_n$ (for $n \geq 1$) of nodes and edges 
such that $e_i$ relates $(u_i,u_{i+1})$, for all $i<n$;
$u_1$ and $u_n$ are called {\em extremes} and $u_2,\ldots,u_{n-1}$ are called {\em internal nodes}.
When only the extremes of the path $P$ are relevant, we write
$u_1 \stackrel P \pat u_n$ (or simply $u_1 \pat u_n$ if we ignore the name given to the path).
If no node appears more than once in $P$, we say that $P$ is {\em simple} (or {\em acyclic}).
We fix two different nodes: a \emph{source}, written $s$, and a \emph{target}, written $t$.
The resulting triple $(G,s,t)$ is called {\em net}; when $s$ and $t$ are understood, we sometimes
simply write it as $G$ (i.e., without specifying the source and target) and use the term ‘graph' instead of ‘net'.
An \emph{st-path} is a 
path from $s$ to $t$. The set of $st$-paths in $G$ is denoted by $\PP(G)$; 
the set of 
simple $st$-paths is denoted by $\SP(G)$. 
We say that a net is $st$-{\em connected} if every node is touched by at least 
one $st$-path.
%
%

\label{sec:IRRED}

The formal definitions of Braess Paradox and vulnerable nets are deferred to Appendix~\ref{app:braess}, 
since they are not fundamental to appreciate our results. Indeed, for our aims it suffices to know that a net is vulnerable if and only if one of the following two characterizations hold. 

The first characterisation of vulnerable nets holds for {\em irredundant} nets,
defined as those nets containing only irredundant vertices and edges.
As defined in \cite{ChenEtal15} (and derived from \cite{Milch06}), a vertex or an edge is {\em irredundant}
if it appears in a simple $st$-path, and {\em redundant} otherwise. 
The characterisation given by \cite{CinesiFull} states that an irredundant net is vulnerable if and only if it is series-parallel
(this can be checked in linear time \cite{VTL82}).

\begin{definition}[Two-Terminal Series-Parallel Graphs (TTSP) \cite{Duf65}]
A graph $G=(V,E)$, with two distinguished vertices $s$ and $t$, is a {\em TTSP-graph} if it can be turned into the single edge $(s,t)$ by a sequence of the following operations:
\begin{itemize}
\item Replace a pair of parallel edges with a single edge that connects their common endpoints;
\item Replace a pair of edges $(u,w)$ and $(w,v)$ with the edge $(u,v)$, whenever $w$ has degree 2 and it is different from $s$ and $t$.
\end{itemize}
\end{definition}

The second characterisation of vulnerability holds for generic (i.e., not necessarily irredundant) graphs. Indeed, verifying if a net is irredundant is an \NP-complete problem, as shown in \cite{CGS19}.
Hence, in \cite{CGS18} the authors give a characterisation of vulnerability based on the embedding of $\W$ within the given graph. 

\begin{definition}[\cite{LR80}] 
	A {\em subgraph homeomorphism} from $H$ to $G$
	is a pair of injective mappings $(\phi,\psi)$, from $V_H$ to $V_G$ and
	from $E_H$ to $\SP(G)$ respectively, 
	such that $\psi(e) = \phi(x) \leadsto \phi(y)$, for every $e \in  E_H$ that relates $(x,y)$.
	A homeomorphism is called {\em node-disjoint} if all paths in the image of $\psi$ are pairwise 
	node-disjoint, up-to their extremes.
\end{definition}

\begin{definition}
	\label{def:stEmbedding}
	An {\em st-embedding} of the Wheatstone graph $\W$ from Fig.~\ref{fig:Wheat}(3) into a net $(G,s',t')$
	is a node-disjoint subgraph homeomorphism $(\phi,\psi)$ from $\W$ to $G$ such that
	there are (possibly empty) node-disjoint simple paths from $s'$ to $\phi(s)$ and from $\phi(t)$ to $t'$ that
	are pairwise node-disjoint up-to their extremes with all paths in the image of $\psi$.
\end{definition}

\begin{figure}[t]
    \centering

\begin{tikzpicture}[every node/.style={circle,draw, minimum size=4mm, inner sep=2pt}]
\node (s) at (0,0) {$s'$};
\node (a) at (1,0) {$a$};
\node (b) at (2,0) {$b$};
\node (c) at (3,0) {$c$};
\node (t) at (4,0) {$t'$};
\node (d) at (1,-1) {$d$};
\node (e) at (2,-1) {$e$};
\node (f) at (3,-1) {$f$};
\node (g) at (0,-1) {$g$};
\node (h) at (4,-1) {$h$};

\draw [->] (s) -- (a);
\draw [->] (a) -- (b);
\draw [->] (b) -- (c);
\draw [->] (c) -- (t);
\draw [->] (a) to (e);
\draw [->] (e) to (f);
\draw [->] (f) to (b);
\draw [->] (f) to (t);
\draw [->] (e) to (d);
\draw [->] (d) to (a);
\draw [->] (s) to (g);
\draw [->] (f) to (h);
\end{tikzpicture}
    \caption{A vulnerable graph containing three redundant nodes and four redundant edges}
    \label{fig:examplegraph}
\end{figure}

To better clarify these concepts, consider the graph in Fig.~\ref{fig:examplegraph}.
There, the edges relating $(e,d)$ and $(d,a)$ are redundant, since they can only appear in cyclic (i.e., non-simple) $s't'$-paths; consequently, also $d$ is redundant.
Moreover, $g$ and $h$, as well as the edges $(s',g)$ and $(f,h)$, are redundant, since they do not belong to any $s't'$-path (hence, the graph is not $s't'$-connected).
All other nodes and edges are irredundant.
Finally, the net is vulnerable, since there is an $st$-embedding of $\W$ into it: this is shown through the following node-disjoint subgraph homeomorphism from the graph in Fig.~\ref{fig:Wheat}(3) into the net in Fig.~\ref{fig:examplegraph}:
\begin{itemize}
\item $\phi$ maps $s$ to $a$, $u$ to $f$, $v$ to $b$, and $t$ to $t'$;
\item $\psi$ maps $(s,u)$ to $a \rightarrow e \rightarrow f$, $(s,v)$ to $a \rightarrow b$, $(u,v)$ to $f \rightarrow b$, $(v,t)$ to $f \rightarrow t'$, and $(u,t)$ to $b \rightarrow c \rightarrow t'$.
\end{itemize}
The path from $s'$ to $\phi(s) (= a)$ is not empty (it is made up by the single edge $(s',a)$), whereas the path from $t'$ to $\phi(t) (= t')$ is empty.
Notice that all paths involved are node-disjoint, excluding their extremes.

In \cite{CGS19} the authors devise an $O(nm^2)$ algorithm to determine if a given net is vulnerable by looking for a $\W$-embedding; this has been refined in \cite{MS23}, where  a faster $O(m^2)$ algorithm is provided. The latter one, that we call \MS\ (after the authors' names), is based on two alternated decompositions, the series one and the parallel one. They split the net sequentially until either each component is just a single edge, or they detect a homeomorphic copy of $\W$.
%
To obtain a linear time amortized cost for every addition, we start by refining the complexity analysis of \MS.

\begin{proposition}
\label{prop:MS}
Let $(G,s,t)$ be a net with $|V_G|=n$ and $|E_G|=m$. If $G$ has $k$ irredundant edges, then \MS$(G,s,t)$ costs $O(km)$.
\end{proposition}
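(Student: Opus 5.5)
We plan to redo the cost analysis of \MS\ by tracking the recursion tree produced by its alternating series and parallel decompositions, and to charge the work at each node of this tree to the irredundant edges of the corresponding component. The structure of \MS\ (as described in \cite{MS23}) is the following. On a net it performs a series decomposition: it finds the cut vertices separating $s$ from $t$ and splits the net into series components. Then, on each component, it performs a parallel decomposition, splitting it into parallel components between the two terminals. Each such step discards redundant parts and costs linear time in the size of the current (sub)net. The algorithm recurses until it reaches single edges or detects an embedding of $\W$. In the original analysis, the $O(m^2)$ bound comes from charging $O(m)$ per level and bounding the depth, or the number of internal nodes of the recursion tree, by $O(m)$. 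We will instead bound this count by $O(k)$.

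The key steps, in order, are as follows.

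\begin{enumerate}
\item We observe that every leaf of the decomposition tree that is reached without finding $\W$ is a single irredundant edge of $G$. The decomposition only keeps components lying on simple $st$-paths: the edges kept at the end are precisely those of the MIS. So the number of leaves is at most $k$.
\item We argue that every internal node of the recursion tree has at least two children. A series or parallel split is only recursed upon when it is nontrivial; a trivial split either terminates (single edge) or, in the alternation, is immediately followed by the other kind of split. We absorb such trivial steps into the cost of the parent. It then follows that the number of internal nodes is $O(k)$.
\item We bound the work done at each internal node by $O(m)$. Each decomposition step (cut-vertex / reachability computation, pruning of redundant parts, partition into components) runs in linear time on the subgraph it receives, and every subgraph has at most $m$ edges. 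When $\W$ is detected the algorithm stops, contributing only an extra $O(m)$.
\end{enumerate}

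Multiplying gives $O(k)\cdot O(m)=O(km)$. Alternatively, and more tightly, the subnets at the same depth are edge-disjoint, so each level costs $O(m)$, and the depth is bounded by the number of leaves, $O(k)$.

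The main obstacle is step 2. We must make sure that redundant edges cannot force many ``useless'' recursion steps, that is, long chains of nodes with a single surviving child caused only by redundant material. To handle this, the plan is to show that each linear-time decomposition step of \MS\ already removes all redundant edges that are redundant relative to the current component before recursing. Then any component passed to a recursive call consists of edges that are irredundant in that component, and, by the way components compose in series and parallel, irredundant in $G$. In particular, a non-leaf node with only one nontrivial child would mean that a split was trivial, which \MS\ does not recurse on. If this pruning property is not explicit in \cite{MS23}, we would add it as a preprocessing step at each call. It is linear: keep only edges reachable from $s$ and co-reachable to $t$ in the component. We would then argue that the remaining redundancy, due to cycles, is exactly what the series decomposition at cut vertices eliminates.
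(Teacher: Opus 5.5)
Your skeleton is the paper's: a recursion tree of alternating \textsc{Series\_Decomposition}/\textsc{Fast\_Parallel\_Decomposition} steps, $O(m)$ work per call (or per level), and an $O(k)$ bound because every nontrivial split separates distinct irredundant edges. Your node count (at most $k$ leaves, branching at least two) is a harmless variant of the paper's ``depth $O(k)$ times $O(m)$ per level''. It is even slightly more robust, since it only needs each single call to cost $O(m)$, not disjointness of all subgraphs on a level. The problem is how you plan to discharge step~2, which you rightly single out as the crux.

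Your plan needs \MS\ (possibly after pruning to edges reachable from $s$ and co-reachable to $t$) to hand each recursive call a component made only of edges irredundant in it. Cut-vertex splitting would then dispose of cycle-induced redundancy. This is false, so it cannot be proved. Both decompositions return \emph{induced} subgraphs, which in general contain redundant edges; this is precisely why \textsc{SP\_Test} tests whether $\tilde G$, and not $G$, is a single edge. For instance, let $(a,b)$ be an edge of $MIS(G)$ where $b$ is entered only from $a$, and add a back edge $(b,a)$. This edge is redundant, yet it passes your reachability test. Every part whose MIS contains $(a,b)$ contains both $a$ and $b$, so $(b,a)$ is carried, as an induced edge, into every call down to the leaf whose MIS is $(a,b)$.

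None of this is needed, though. Your one-line justification in step~2 is the right one, and it follows from the specification of \MS, not from pruning. Both decompositions split according to the MIS $\tilde G$ of the current subgraph into the maximum number of parts, each containing an edge of $MIS(G)$. Moreover, \textsc{SP\_Test} answers No as soon as \textsc{Fast\_Parallel\_Decomposition} returns a single part whose MIS is not a single edge; this is the paper's ``if $h=1$, either the component is a single edge or $G$ is vulnerable''. Hence, in a call that is not a leaf and does not answer No:
\begin{itemize}
\item a series step yielding one part is followed by a parallel step yielding at least two, because a single part with single-edge MIS would make $\tilde G$ itself a single edge;
\item a series step yielding $r\ge 2$ parts gives at least $r$ recursive calls.
\end{itemize}
So every call is a leaf, stops with No, or spawns at least two calls on parts with strictly fewer irredundant edges. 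Redundant edges merely ride along inside the parts and only affect the per-call cost, which is $O(m)$ anyway. Put this in place of your last paragraph, and read step~1 as ``each leaf's MIS is a distinct edge of $MIS(G)$'', since leaves may still contain redundant edges. With these changes your argument goes through.
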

Note that this result keeps the worst case to $O(m^2)$, if $k \in O(m)$.

To conclude, we remark that both the algorithms in \cite{CGS19} and \cite{MS23} have the interesting property that, if the net is not vulnerable, they also return the unique (as proved in \cite{CGS19}) \textit{Maximum Irredundant Subnet} (MIS) of that net, which is a series-parallel graph \cite{ChenEtal15,CinesiFull}.
Coming back to the graph of Fig.~\ref{fig:examplegraph}, its MIS is the subgraph obtained by removing $d$, $g$, $h$ and their incident edges;
clearly, it is not series-parallel, being the graph vulnerable. In contrast, the same graph but without the edge relating $(f,b)$ is non-vulnerable and, indeed, its MIS (still obtained by removing the same nodes and edges) is series-parallel.

Having the MIS of the current graph is the fundamental ingredient in what follows for checking the effects of adding a new edge to the net, as discussed in the next Sections.


\section{The Effects of Adding an Edge}

In this section, we study what can happen to a non-vulnerable graph $G= (V,E)$ after adding a new edge $e$ (which relates two existing vertices);
to use a simple notation, this will be written as $G \cup \{e\}$, to mean the pair $(V, E\cup\{e\})$.
We remark that in our setting it makes sense to consider only non-vulnerable graphs for the addition because, as a consequence of the characterisation given in \cite{CGS19}, if the original graph is already vulnerable, it will remain so even after any number of edge additions. So, in what follows, whenever we speak about the addition of an edge to a graph, we assume that the starting graph is non-vulnerable.

Being $G$ not vulnerable, we can assume to have its MIS (that we denote as $MIS(G)$) and this is series-parallel. If we add a new edge $e$ to $G$, we have three different situations:
\begin{itemize}
    \item $e$ is redundant in $G \cup \{e\}$: the net remains not vulnerable and the edge is not added to the MIS;
    \item $e$ is irredundant in $G \cup \{e\}$ and follows the series-parallel structure of $MIS(G)$: the net remains not vulnerable and $e$ is added to the MIS;
    \item $e$ is irredundant in $G \cup \{e\}$ but breaks the series-parallel structure of $MIS(G)$: the net becomes vulnerable.
\end{itemize}

We will see that these cases can be identified in linear time. The main problem is that $e$ can be linked with the MIS  not directly, but through other edges that were redundant in $G$ which, because of the new paths that can be created, can either become irredundant or stay redundant in $G \cup \{e\}$. Hence, when we add $e$, we have to look at all those redundant edges (which form what we call the {\em Redundant Connected Component} of $e$ -- see Definition~\ref{def:RCC} later on) and check its status in terms of vulnerability.

To do that, we first introduce a partial order relation on the nodes of the graph.

\newpage

\begin{definition}
\label{def:ordering}
    Let $G$ be an acyclic $st$-graph; given two nodes $a,b\in V_G$, 
    \begin{itemize}
    \item we write $a<b$, and say that {\em $a$ precedes $b$}, 
    if $a$ and $b$ are distinct and there is a simple $st$-path in $G$ where $a$ appears before $b$; 
    \item we say that they are {\em comparable} if $a=b$, or $a<b$, or $b<a$.
    \end{itemize}
    Notationally, we write: $A < b$ whenever every $a \in A$ precedes $b$; $a < B$ whenever $a$ precedes every $b \in B$; and
    $A < B$ whenever every $a \in A$ precedes every $b \in B$.
\end{definition}

We remark that ‘$<$' is defined only when the vertices belong to an acyclic graphs. Hence, in what follows, whenever we use it in the context of a generic (i.e., not necessarily acyclic) graph $G$, we actually mean that the relation holds for $MIS(G)$.

We then introduce the concept of \textit{linkable} nodes. The idea is that, when we add a new edge, a whole connected component can become irredundant; a first preliminary check to see if this may break the series-parallel structure of the original MIS is to consider the whole component only as a single edge.

\begin{definition}
\label{def:linkable}
    Given two nodes $a,b\in MIS(G)$, we say that $a$ and $b$ are {\em linkable} if the addition of an edge relating $(a,b)$ does not break the series-parallel structure of $MIS(G)$.
\end{definition}

\begin{wrapfigure}{r}{0.5\textwidth}
\vspace*{-.8cm}
\begin{tikzpicture}[every node/.style={circle,draw, minimum size=4mm, inner sep=2pt}]
\node (s) at (0,0) {$s$};
\node (a) at (1,1) {$a$};
\node (b) at (3,1) {$b$};
\node (c) at (2,0) {$c$};
\node (d) at (4,0) {$d$};
\node (t) at (6,0) {$t$};
\draw [->, bend left] (s) to (a);
\draw [->] (a) -- (b);
\draw [->, bend left] (b) to (d);
\draw [->] (s) -- (c);
\draw [->] (c) -- (d);
\draw [->] (d) -- (t);
\draw [->,bend left= 30, blue, dashed] (a) to (b);
\draw [->, bend left, blue, dashed] (s) to (d);
\draw [->, bend right = 40, red, dashed] (c) to (t);
\draw[->, bend left= 50, red, dashed] (a) to (t);
\end{tikzpicture}
\vspace*{-.8cm}
\end{wrapfigure}
For example, consider the MIS formed by the black edges in the graph to the right.
In this case, the pairs $a$ and $b$ and $s$ and $d$ are linkable (try and add the blue edges to the black ones), whereas the pairs $c$ and $t$ and $a$ and $t$ are not (try and add the red edges to the black ones).

\medskip

\begin{remark}\em
\label{rem:linkable}
Clearly, computing  whether two vertices are linkable can be done in $O(m)$, by using the linear-time algorithm for checking series-parallel (see \cite{VTL82}).
Moreover, whenever $a$ and $b$ are linkable, we can add any $s_{new} t_{new}$-series-parallel graph to $MIS(G)$ by identifying $a$ with $s_{new}$ and $b$ with $t_{new}$.
In the previous example, being $a$ and $b$ linkable, we can add the $ab$-series-parallel graph below to the left to the original graph and obtain the new (still series-parallel) graph below to the right:
\vspace*{-.4cm}
\begin{figure}[H]
    \centering
\begin{tikzpicture}[every node/.style={circle,draw, minimum size=4mm, inner sep=2pt}]
\node (a) at (0,0) {$a$};
\node (u) at (.7,.5) {$u$};
\node (v) at (.7,-.5) {$v$};
\node (w) at (1.4,0) {$w$};
\node (b) at (2.5,0) {$b$};
\draw [->] (a) -- (u);
\draw [->] (a) -- (v);
\draw [->] (a) -- (w);
\draw [->] (u) -- (w);
\draw [->] (v) -- (w);
\draw [->] (w) -- (b);
\end{tikzpicture}
\hspace{2cm}
\begin{tikzpicture}[every node/.style={circle,draw, minimum size=4mm, inner sep=2pt}]
\node (s) at (0,0) {$s$};
\node (a) at (1,1) {$a$};
\node (u) at (1.7,1.5) {$u$};
\node (v) at (1.7,.5) {$v$};
\node (w) at (2.4,1) {$w$};
\node (b) at (3.5,1) {$b$};
\node (c) at (2.4,0) {$c$};
\node (d) at (4.5,0) {$d$};
\node (t) at (5.5,0) {$t$};
\draw [->, bend left] (s) to (a);
\draw [->, bend left=80] (a) to (b);
\draw [->] (a) -- (u);
\draw [->] (a) -- (v);
\draw [->] (a) -- (w);
\draw [->] (u) -- (w);
\draw [->] (v) -- (w);
\draw [->] (w) -- (b);
\draw [->, bend left] (b) to (d);
\draw [->] (s) -- (c);
\draw [->] (c) -- (d);
\draw [->] (d) -- (t);
\end{tikzpicture}
\end{figure}
\end{remark}

\medskip
We then formalize which redundant edges we have to take in consideration to potentially become irredundant after adding a new edge.
\begin{definition}
\label{def:RCC}
    The {\em Redundant Connected Component} (RCC) of the edge $e$, denoted $RCC(e)$, is the subgraph formed by 
    the nodes and edges 
    of any path containing 
    $e$ and such that: (1) it only uses redundant edges of $G$, and (2) 
    none of its internal nodes is in $MIS(G)$.
    We say that a node $\en$ is an {\em entry} of $RCC(e)$ if it belongs to $MIS(G)$ 
    and it is the starting extreme of any of the above paths.
%
    We say that a node $\ex$ is an {\em exit} of $RCC(e)$ if it belongs to $MIS(G)$ 
    and it is the ending extreme of any of the above paths.
\end{definition}

\begin{wrapfigure}{r}{0.4\textwidth}
\begin{tikzpicture}[every node/.style={circle,draw, minimum size=4mm, inner sep=2pt}]
\node (s) at (0,0) {$s$};
\node (a) at (1,1) {$a$};
\node (b) at (3,1) {$b$};
\node (c) at (2,0) {$c$};
\node (t) at (4,0) {$t$};
\node (d) at (2,1.7) {$d$};
\node (f) at (4,1) {$f$};
\node (g) at (0,1) {$g$};
\node (h) at (1,-1) {$h$};
\node (i) at (3,-1) {$i$};
\node (l) at (0,-1) {$l$};
\node (m) at (4,-1) {$m$};

\draw [->] (s) -- (a);
\draw [->] (a) -- (b);
\draw [->] (b) -- (t);
\draw [->] (s) -- (c);
\draw [->] (c) -- (t);
\draw [->, bend right, green] (t) to (f);
\draw [->, bend right, green] (f) to (b);
\draw [->, bend right, green] (b) to (d);
\draw [->, bend left, green] (g) to (a);
\draw [->, bend left, green] (m) to (i);
\draw [->, red] (s) to (h);
\draw [->, red] (i) to (t);
\draw [->, red] (c) to (h);
\draw [->, red] (l) to (h);
\draw [->, red,] (h) -- node[above,draw=none, text=black] {$e$} (i);
\end{tikzpicture}
\end{wrapfigure}
Let us consider the side graph, where the edge $e$ relating $(h,i)$ is added to $MIS(G)$ (depicted in black).
Here, the red edges are those of $RCC(e)$, the green ones are the redundant edges that are not in $RCC(e)$, the entries of $RCC(e)$ are $s$ and $c$, and its only exit is $t$.

The following definition summarizes the possible effects that the addition of an edge to a graph can have; these lead to the three main branches of our algorithm in  Section~\ref{sec:mainAlgo}. In defining the three cases, we have to take into account the possible effects of adding the whole $RCC(e)$ while adding $e$ to the MIS.

\begin{definition}
\label{def:I-I-D}
    Let $e$ be the edge to be added, with $N$ and $X$ be the entries and the exits of $RCC(e)$.
    We say that the addition of $e$ is 
    \begin{itemize}
    \item {\em impactless}, if $X \leq N$ in $G$;
    \item {\em impactful}, if there exists exactly one $(\en,\ex)\in N \times X$ such that in $G$ it holds that $$\en<\ex\ \land\ \en\text{ and } \ex\text{ are linkable}\ \land\ \forall (\en',\ex') \in N \times X\,.\, ((\en' \neq \en \lor \ex' \neq \ex) \Rightarrow \en'\ge \ex')$$ 
    In this case, we say that the pair $(\en,\ex)$ is the {\em entry-exit pair} for $e$;
    \item {\em destructive}, otherwise.
    \end{itemize}
\end{definition}

The names we chose for the different possible cases of edge addition are justified by the following three Lemmas; these are the main results for building our algorithm and will be the main ingredients for the proof of the correctness theorem (Theorem~\ref{thm:correct}).

The first Lemma states that the first case of Definition~\ref{def:I-I-D} implies that the added edge is redundant in the resulting net;
so, its addition does not change the assumed non-vulnerability of the original graph and no check must be carried out on its RCC
(hence, adding $e$ has no impact on the amortized cost of the incremental algorithm).

\begin{lemma}\label{redred}
If the addition of $e$ is impactless, then $e$ is redundant in $G\cup \{e\}$ and ${MIS(G\cup\{e\})}=MIS(G)$.
\end{lemma}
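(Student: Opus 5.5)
We argue by contradiction: assume $e$ is irredundant in $G\cup\{e\}$ and take a simple $st$-path $P$ of $G\cup\{e\}$ through $e$. We then locate where $P$ leaves and re-enters $MIS(G)$. Let $\en$ be the last node of $MIS(G)$ on $P$ before $e$, and $\ex$ the first node of $MIS(G)$ on $P$ after $e$. Both exist, because $s,t\in MIS(G)$ and, by Definition~\ref{def:RCC}, we may assume the endpoints of $e$ are not themselves in $MIS(G)$ (otherwise they serve as $\en$ or $\ex$). We first check that the subpath of $P$ from $\en$ to $\ex$ uses only redundant edges of $G$ (together with $e$) and has no internal node in $MIS(G)$. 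Every edge of $P$ other than $e$ lies in $G$, and an edge of $G$ with both extremes outside $MIS(G)$ cannot be in $MIS(G)$. The subpath is therefore one of the paths in Definition~\ref{def:RCC}, so $\en\in N$ and $\ex\in X$. Since the addition is impactless, $\ex\le\en$ in $MIS(G)$.

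Next we exploit the structure of $P$ outside this subpath. Let $P_1$ be the prefix $s\pat\en$ and $P_2$ the suffix $\ex\pat t$. Neither uses $e$, so both are paths in $G$, and $P_1$ and $P_2$ are node-disjoint because $P$ is simple. If $\en=\ex$, $P$ is not simple, which is immediate. If $\ex<\en$, there is a simple $st$-path $Q$ in $MIS(G)$ passing first through $\ex$ and then through $\en$. The idea is that $P_1$ followed by $P_2$ then yields a simple $st$-path in $G$ visiting $\en$ before $\ex$. We need the stronger fact that $P_1$ and $P_2$ can be taken inside $MIS(G)$ and that the resulting paths witness $\en<\ex$. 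This would contradict antisymmetry of $<$ on the acyclic series-parallel graph $MIS(G)$: $\ex<\en$ and $\en<\ex$ together force a cycle through simple paths of an acyclic graph, which is impossible because every irredundant subnet is acyclic in the relevant sense.

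The main obstacle is exactly this step. $P_1$ and $P_2$ may wander through redundant parts of $G$, and it is not obvious that a simple $st$-path of $G$ visiting $\en$ before $\ex$ gives $\en<\ex$ in $MIS(G)$. We plan to handle it as follows. Concatenating $P_1$, an $\en\pat\ex$ connection, and $P_2$ is not directly available. Instead, note that $P_1\cdot\text{(the RCC segment)}\cdot P_2$ being simple makes every edge of $P_1$ and $P_2$ irredundant in $G\cup\{e\}$. We then argue that the edges of $P_1$ and $P_2$ are irredundant already in $G$, or else they belong to the RCC, which contradicts the choice of $\en,\ex$ as the extreme MIS nodes. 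To get this, we replace the segment between $\en$ and $\ex$ by the path of $Q$ between $\ex$ and $\en$, reversed in order, which is absurd. So the cleaner route is to show the contradiction directly: the prefix $P_1$ ends at $\en$, $P_2$ starts at $\ex$, and $Q$ goes through $\ex$ and then $\en$. By the series-parallel structure (no $\W$-embedding), we show that $P_1$ must meet $Q[\ex,t]$ or $P_2$ must meet $Q[s,\en]$ in a way that produces either a repeated node on $P$ or an $st$-embedding of $\W$ in $G$. The latter contradicts non-vulnerability via the characterisation of \cite{CGS18}.

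Finally, once $e$ is redundant in $G\cup\{e\}$, we show $MIS(G\cup\{e\})=MIS(G)$. Any new simple $st$-path of $G\cup\{e\}$ must use $e$, because a simple path avoiding $e$ is already a simple path of $G$. Since there is none through $e$, the sets of simple $st$-paths coincide, and so do the irredundant nodes and edges.
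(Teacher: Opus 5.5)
Your skeleton is the paper's. You take $\en$ and $\ex$ as the last and first $MIS(G)$-nodes of $P$ around $e$, and note $\en\in N$, $\ex\in X$. (For the intermediate edges, the relevant fact is that each has at least one extreme outside $MIS(G)$, not both.) Hence $\ex\le\en$; you dismiss $\en=\ex$ by simplicity and keep the node-disjoint prefix $P_1:s\pat\en$ and suffix $P_2:\ex\pat t$, both paths of $G$. Your closing argument for $MIS(G\cup\{e\})=MIS(G)$ is also the paper's. However, the step that carries the whole lemma --- deriving a contradiction from $\ex<\en$ together with $P_1,P_2$ --- is never carried out. The first routes you try do not work:
\begin{itemize}
\item $P_1$ followed by $P_2$ is not a path, since $P_1$ ends at $\en$ and $P_2$ starts at $\ex$. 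The natural junction is the $RCC(e)$ segment through $e$, which is not in $G$, so nothing yields $\en<\ex$ in $MIS(G)$.
\item Nothing lets you take $P_1,P_2$ inside $MIS(G)$.
\item The claim that their edges are irredundant in $G$ or lie in $RCC(e)$ is unproved. It is essentially Proposition~\ref{Prop:RE}, whose impactless case the paper derives from this very lemma.
\end{itemize}
Your fallback, ``$P_1$ must meet $Q[\ex,t]$ or $P_2$ must meet $Q[s,\en]$ in a way that produces \ldots\ an $st$-embedding of $\W$'', is only an announcement. As phrased the dichotomy is vacuous, since $P_1$ always meets $Q[\ex,t]$ at $\en$, and no embedding is exhibited.

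What is missing is exactly the paper's Proposition~\ref{greaterredundant}: if $v<u$ in $MIS(G)$, there are no node-disjoint simple paths $s\pat u$ and $v\pat t$ in $G$. The lemma then follows in one line with $u=\en$, $v=\ex$. That proof takes the simple $st$-path $Q$ of $MIS(G)$ through $\ex$ and then $\en$. Since $\ex\notin P_1$, the path $P_1$ leaves $Q$ at some $a$ before $\ex$ and returns at some $b$ beyond $\ex$. Since $\en\notin P_2$, the path $P_2$ leaves $Q$ at some $c$ before $\en$ and returns at some $d$ beyond $\en$. The detours $a\pat b$ and $c\pat d$ are disjoint because $P_1\cap P_2=\emptyset$. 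This gives an $st$-embedding of $\W$ in $G$, contradicting non-vulnerability:
\begin{itemize}
\item terminals $a$ and $d$;
\item one side $Q[a,c]$ followed by the detour $c\pat d$;
\item the other side the detour $a\pat b$ followed by $Q[b,d]$;
\item bridge $Q[c,b]$;
\item extensions $Q[s,a]$ and $Q[d,t]$.
\end{itemize}
To make this rigorous, you must choose the leave/return points so that the detours have no internal node on $Q$ and the four points appear along $Q$ in the order $a,c,b,d$. Otherwise the two detours are nested or in series and yield no $\W$. Since $P_1$ and $P_2$ may wander through redundant parts of $G$ and revisit $Q$ non-monotonically, this choice is the real content of the step, and your proposal does not supply it.
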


\begin{figure}
    \centering

\begin{tikzpicture}[every node/.style={circle,draw, minimum size=4mm, inner sep=2pt}]
\node (s) at (0,0) {$s$};
\node (a) at (1,0) {$a$};
\node (b) at (2,0) {$b$};
\node (t) at (3,0) {$t$};
\node (c) at (1,-1) {$c$};
\node (d) at (2,-1) {$d$};

\draw [->] (s) -- (a);
\draw [->] (a) -- (b);
\draw [->] (b) -- (t);
\draw [->, red] (c) to (s);
\draw [->, red] (c) to (a);
\draw [->, red] (b) to (d);
\draw [->, red] (t) to (d);
\draw [->, red,] (d) -- node[above,draw=none, text=black] {$e$} (c);
\end{tikzpicture}
    \caption{An example where the addition of $e$ is impactless}
    \label{fig:impacteless}
\end{figure}

As a concrete case, consider the graph in Fig.~\ref{fig:impacteless}, where the edge $e$ relating $(d,c)$ is added; all other edges already belong to $G$ (we depict in  black the edges in $MIS(G)$ and in red those of $RCC(e)$).
The entries are $b$ and $t$, and the exits are $s$ and $a$; the addition of $e$ is impactless, since $\{s,a\} < \{b,t\}$ in $G$. Notice that $e$ (together with $RCC(e)$) is redundant in the new graph.
\medskip

The next two Lemmas apply when the added edge respects the series-parallel structure of the MIS. 
If we are in the second case of Definition~\ref{def:I-I-D}, this implies that we can add all the irredundant part of $RCC(e)$ to $MIS(G)$ without breaking its series-parallel structure and, actually, this gives the MIS of the new net.
However, $RCC(e)$ may contain a homeomorphic copy of $\W$ and so $\MS$ must be run on it (we will show later in Lemma~\ref{irrvuln} that this suffices); hence, in this case the addition of $e$ has an impact on the amortized analysis.

\begin{lemma}\label{irredirred}
    If the addition of $e$ is impactful, then $e$ is irredundant in $G\cup \{e\}$. Moreover, if $(\en,\ex)$ is its entry-exit pair, all the edges of $RCC(e)$ that become irredundant in $G \cup \{e\}$ (seen as an $\en\ex$-net) are a new parallel component that can be added to $MIS(G)$ to form $MIS(G \cup \{e\})$.
\end{lemma}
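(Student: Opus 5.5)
We first show that $e$ is irredundant in $G\cup\{e\}$. Since $(\en,\ex)$ is the entry-exit pair, $e$ lies on a path $\en \pat e \pat \ex$ inside $RCC(e)$ whose internal nodes avoid $MIS(G)$; we may take it simple. Because $\en<\ex$ in $MIS(G)$, there is a simple $st$-path $P$ in $MIS(G)$ with $\en$ before $\ex$. Write $P = s\pat \en \pat \ex \pat t$ and replace the middle segment $\en\pat\ex$ by the $RCC(e)$ path. The prefix and suffix lie in $MIS(G)$, while the internal nodes of the new segment are outside $MIS(G)$. So the spliced walk is a simple $st$-path through $e$, and $e$ is irredundant.

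Next we identify exactly which edges become irredundant. Let $R$ be the set of edges of $RCC(e)$ lying on some simple $\en\ex$-path within $RCC(e)$, i.e.\ the MIS of $RCC(e)$ viewed as an $\en\ex$-net. We must show two things: every edge of $R$ is irredundant in $G\cup\{e\}$, and no other edge of $G\cup\{e\}$ outside $MIS(G)\cup R$ becomes irredundant.

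The first claim follows from the same splicing argument applied to any simple $\en\ex$-path in $RCC(e)$.

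For the converse, take a simple $st$-path $Q$ in $G\cup\{e\}$ that uses some previously redundant edge $f$. We may assume $Q$ passes through $e$: otherwise $Q$ lies in $G$ and $f$ would already be irredundant in $G$. Decompose $Q$ into maximal segments outside $MIS(G)$. Each such segment starts and ends at nodes of $MIS(G)$ (because $s,t\in MIS(G)$) and has internal nodes outside it.

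Consider first a segment that avoids $e$. It is a path in $G$ that could replace a portion of a simple $st$-path of $MIS(G)$. Proving that it cannot yield a new simple $st$-path in $G$ needs care; this is the main obstacle. We intend to use maximality of $MIS(G)$ together with uniqueness of the MIS: a segment $a\pat b$ with $a<b$ in $MIS(G)$ would make its edges irredundant in $G$, which is a contradiction. Hence such segments go "backwards" ($a\ge b$). Along $Q$, the nodes of $MIS(G)$ must nevertheless progress overall from $s$ to $t$. Segments through $e$ are segments of $RCC(e)$ from some entry $\en'$ to some exit $\ex'$. By the impactful condition every pair other than $(\en,\ex)$ satisfies $\en'\ge\ex'$, so these too go backwards.

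To conclude, we will argue via the series-parallel (hence acyclic, with a consistent order) structure of $MIS(G)$. A simple path that goes backwards can only be compensated by revisiting a node, or by using the unique forward jump $\en\pat\ex$. A case analysis on the order of visited $MIS(G)$ nodes should then show that all non-MIS edges used by $Q$ lie on an $\en\ex$ segment of $RCC(e)$, i.e.\ in $R$. We expect this step to be the delicate one: it is where we must rule out a backward excursion combining with the forward jump. We plan to use that $P$ restricted to $[\en,\ex]$ is replaced wholesale, so any backward excursion would revisit a node of $Q$.

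Finally, since $\en$ and $\ex$ are linkable and $R$ is irredundant as an $\en\ex$-net, Remark~\ref{rem:linkable} lets us attach $R$ as a parallel component between $\en$ and $\ex$. This applies provided $R$ is series-parallel; otherwise vulnerability arises, which is checked separately by \MS\ as discussed later. Combined with the characterization above, this gives $MIS(G\cup\{e\}) = MIS(G)\cup R$.
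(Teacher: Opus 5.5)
Your first step (splicing an $\en\pat\ex$ path of $RCC(e)$ through $e$ between the disjoint pieces $s\pat\en$ and $\ex\pat t$ of a simple $st$-path of $MIS(G)$) and your closing appeal to linkability match the paper's argument.

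The genuine gap is the step you defer: proving that a simple $st$-path $Q$ of $G\cup\{e\}$ contains no backward segment, in particular none combined with the forward jump $\en\pat\ex$. You only announce a case analysis. The mechanism you sketch (the order of $MIS(G)$ plus ``a backward excursion would revisit a node of $Q$'') cannot work as stated, because it never uses linkability of $(\en,\ex)$. The paper's own example after Proposition~\ref{Prop:RE} shows this:
\begin{itemize}
\item $MIS(G)$ is $s\edge a\edge b\edge t$, $a\edge c\edge t$, which is series-parallel, and $(b,a)$ is redundant in $G$;
\item adding $e=(s,b)$ gives the simple path $s\edge b\edge a\edge c\edge t$;
\item this is the forward jump followed by a backward excursion into the replaced portion of $P$, and no node is revisited.
\end{itemize}
Here $(s,b)$ is the only entry--exit pair and $s<b$; the only impactfulness condition that fails is linkability. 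So what is missing is a $\W$-embedding argument that exploits linkability and non-vulnerability of $G$.

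The paper splits this work in two. The proof of this lemma only rules out simple paths crossing $RCC(e)$ at a pair $(\en',\ex')\neq(\en,\ex)$; there $\ex'\le\en'$, and Proposition~\ref{greaterredundant} is used as in Lemma~\ref{redred}. Backward excursions through redundant edges outside $RCC(e)$ are handled separately by the three-case analysis of Proposition~\ref{Prop:RE}. There, cases 1 and 2 contradict linkability of $(\en,\ex)$ and case 3 exhibits a copy of $\W$ in $MIS(G)$. Since you fold the whole description of $MIS(G\cup\{e\})$ into this lemma, that analysis (or an equivalent one) is exactly what your proof lacks.

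Two smaller repairs are also needed.
\begin{itemize}
\item ``We may take it simple'' is not automatic, because shortcutting the walk $\en\pat u\edge v\pat\ex$ can drop $e$. You must note that if its backward and forward pieces met in a node $y$, then $\en\pat y\pat\ex$ would give a simple $\en\ex$-path of $G$ with internal nodes outside $MIS(G)$. Spliced as above, it would make redundant edges of $G$ irredundant, a contradiction.
\item Excluding $a<b$ for a segment $a\pat b$ avoiding $e$ does not give $a\ge b$, because `$<$' is only partial. You must also exclude the incomparable case. This works: simple paths $s\pat a$ and $b\pat t$ in $MIS(G)$ are then disjoint, and the same splicing applies.
\end{itemize}
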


\begin{figure}
    \centering
\begin{tikzpicture}[every node/.style={circle,draw, minimum size=4mm, inner sep=2pt}]
\node (s) at (0,0) {$s$};
\node (a) at (1,0) {$a$};
\node (b) at (2,0) {$b$};
\node (t) at (3,0) {$t$};
\node (c) at (1,-1) {$c$};
\node (d) at (2,-1) {$d$};

\draw [->] (s) -- (a);
\draw [->] (a) -- (b);
\draw [->] (b) -- (t);
\draw [->, red] (d) to (s);
\draw [->, red] (a) to (c);
\draw [->, red] (d) to (b);
\draw [->, red] (t) to (c);
\draw [->, red,] (c) -- node[below,draw=none, text=black] {$e$} (d);
\end{tikzpicture}
    \caption{An example where the addition of $e$ is impactful}
    \label{fig:impactful}
\end{figure}

To see an example of impactful addition, consider the graph in Fig.~\ref{fig:impactful} (with the same graphical conventions as those in Fig.~\ref{fig:impacteless}).
The entries now are $a$ and $t$ and the exits are $s$ and $b$. Consider the pair $a$ and $b$; it is the entry-exit pair for $e$. Indeed, the two vertices are linkable and $a<b$ in $G$; moreover, $a>s$, $t>s$ and $t>b$. Thus, the addition of $e$ is impactful and, indeed, $e$ is irredundant in the new graph; moreover, $MIS(G \cup \{e\})$ has the new parallel component $a \rightarrow c \rightarrow d \rightarrow b$.


Finally, the third Lemma states that, if we are in the third case of Definition~\ref{def:I-I-D}, then the added edges (i.e., $e$ and the irredundant part of its RCC) break the series-parallel structure of the original MIS;
thus, the addition of $e$ destroys the original non-vulnerability of $G$.

\begin{lemma}\label{vulnvuln}
    If the addition of $e$ is destructive, then $G\cup \{e\}$ is vulnerable.
\end{lemma}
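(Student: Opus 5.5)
We need to prove Lemma~\ref{vulnvuln}: if the addition of $e$ is neither impactless nor impactful, then $G\cup\{e\}$ is vulnerable. The plan is to exhibit an $st$-embedding of $\W$ in $G\cup\{e\}$, relying on the characterisation of \cite{CGS18}. Equivalently, we exhibit an irredundant subgraph of $G\cup\{e\}$ that is not series-parallel, using the facts that $MIS(G)$ is series-parallel and irredundant.

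Since the addition is not impactless, there is a pair $(\en,\ex)\in N\times X$ with $\ex\not\le\en$. Fix the corresponding path $\en\pat\ex$ inside $RCC(e)$ through $e$; its internal nodes avoid $MIS(G)$. Two cases arise.

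\emph{Case (a): $\en$ and $\ex$ are incomparable.} Take a simple $st$-path in $MIS(G)$ through $\en$ and one through $\ex$. Since $MIS(G)$ is series-parallel, the two nodes lie in distinct parallel branches of some parallel component with poles $p<q$. The path $s\pat\en$, then the RCC path, then $\ex\pat t$ is a simple $st$-path, possibly after shortcutting. This path crosses between two parallel branches, and together with the two branch paths it yields the Wheatstone configuration: $p$ maps to $s$, $q$ maps to $t$, and $\en,\ex$ act as $u,v$, with the crossing path as the middle edge. We must check node-disjointness using the fact that RCC internal nodes are outside $MIS(G)$.

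\emph{Case (b): $\en<\ex$.} In this case, failure of impactfulness means that either $\en,\ex$ are not linkable, or a second pair $(\en',\ex')\ne(\en,\ex)$ also satisfies $\en'<\ex'$ or is incomparable. If the two nodes are not linkable, the path $\en\pat\ex$ is irredundant in $G\cup\{e\}$, since it extends to a simple $st$-path through prefix and suffix paths in $MIS(G)$ that avoid it. Adding it to $MIS(G)$ acts like adding an edge $(\en,\ex)$, which by Definition~\ref{def:linkable} breaks series-parallelism. Hence the irredundant subgraph $MIS(G)$ plus this path is not TTSP, and by \cite{CinesiFull,ChenEtal15} the net is vulnerable. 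Monotonicity of vulnerability under adding edges (\cite{CGS19}) handles the rest of the graph.

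If instead there are two distinct forward pairs, both RCC paths pass through $e$. Combining the prefix of one (up to $e$) with the suffix of the other gives paths $\en\pat\ex'$ and $\en'\pat\ex$ sharing the segment through $e$. Since these paths meet at $e$, they create a crossing between two series-parallel attachments, which again produces a $\W$ with $e$'s segment as the bridge. Concretely, if $\en\ne\en'$, the two entries feed a common node $h$ before $e$; this yields two disjoint routes into the bridge, and the $MIS(G)$ paths supply the remaining sides.

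The main obstacle is this last subcase. It requires showing that, for every relative order of $\en,\en',\ex,\ex'$, the resulting paths can be made simple and node-disjoint, or else reduce to the non-linkable subcase. I would first try to reduce everything to a single "virtual edge" argument: treat $RCC(e)$ as a gadget with several distinct terminals attached to $MIS(G)$. Then I would show that any series-parallel extension of $MIS(G)$ by a connected gadget with two distinct forward terminal pairs is impossible without creating $\W$.
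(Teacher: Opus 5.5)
\textbf{There is a genuine gap: the third case is not proved.} Your case split matches the paper's: an incomparable entry--exit pair, a forward pair that is not linkable, and two distinct forward pairs. Your first two cases are essentially sound. The third case carries most of the weight of the lemma, and for it you give only a plan. The picture you sketch there is also not the right one. The segment through $e$ cannot be the bridge of $\W$. The node $h$ where the routes from two distinct entries merge has two incoming routes and one outgoing route. That is the profile of the vertex $v$ of $\W$, not of $u$. In the paper's argument (case 3), the bridge is the route from the second entry into the merge node $c$. The segment from $c$ through $e$ to an exit plays the role of the side $v\to t$.

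\textbf{The missing construction.} You first need a fact that you never state, although your ``possibly after shortcutting'' in case (a) and the simple $\en\pat\ex$ path through $e$ in the non-linkable case both rely on it. The fact is this: if $\ex\not\le\en$, there is no route from $\en$ to $\ex$ that avoids $e$, uses only edges of $RCC(e)$, and has no internal node in $MIS(G)$. Otherwise the node-disjoint $MIS(G)$ paths $s\pat\en$ and $\ex\pat t$ would extend it to a simple $st$-path of $G$ through edges that are redundant in $G$.

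With this fact, the third case goes as follows.
\begin{itemize}
\item Take $\en\neq\en'$. If $\en=\en'$, then $\ex\neq\ex'$ and you argue dually after $e$.
\item Assume all entry--exit pairs are comparable, since otherwise your case (a) applies.
\item By symmetry assume $\en'\not<\en$. Then $\en<\ex'$. If $\en<\en'$, this holds by transitivity. If $\en,\en'$ are incomparable, it holds because $\ex'\le\en$ would give $\en'<\en$.
\item Fix a simple route $R'$ from $\en'$ through $e$ to $\ex'$, and a backward route $R$ from $\en$ to the source of $e$. Let $h$ be the first node of $R$ lying on $R'$.
\item The fact, applied to $\en<\ex'$, forces $h$ to lie before $e$ on $R'$. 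Hence $\en\stackrel{R}{\pat}h$, $\en'\stackrel{R'}{\pat}h$ and $h\stackrel{R'}{\pat}\ex'$ are internally disjoint and avoid $MIS(G)$.
\item If $\en<\en'$, a simple $MIS(G)$ path $s\pat\en\pat\en'\pat\ex'\pat t$ completes an $st$-embedding of $\W$. It maps $s,u,v,t$ to $\en,\en',h,\ex'$, and the bridge is $\en'\pat h$.
\item If $\en,\en'$ are incomparable, let $p,q$ be the poles of the parallel component separating them. Since $\ex'$ lies above nodes internal to two different branches, $\ex'\ge q$. The branch path $p\pat\en$ followed by $R$, together with $p\pat\en'\pat q\pat\ex'$, gives the embedding with $s\mapsto p$.
\end{itemize}
Linkability plays no role in this case. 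Without this construction, or an equivalent one, the proposal does not establish the lemma.
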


\begin{figure}[t]
\begin{tabular}{cccc}
\begin{minipage}{0.16\textwidth}
\center
\begin{tikzpicture}[every node/.style={circle,draw, minimum size=4mm, inner sep=2pt}]
\node (s) at (0,0) {$s$};
\node (a) at (.7,-1) {$a$};
\node (b) at (.7,1) {$b$};
\node (t) at (1.5,0) {$t$};
\node (c) at (.7,-0.4) {$c$};
\node (d) at (.7,0.3) {$d$};
\node[draw=none] (1) at (.7,-2) {$(a)$};

\draw [->] (s) -- (a);
\draw [->] (s) to (b);
\draw [->] (a) -- (t);
\draw [->] (b) to (t);
\draw [->, red] (a) to (c);
\draw [->, red] (d) to (b);
\draw [->, red,] (c) -- node[right,draw=none, text=black] {$e$} (d);
\end{tikzpicture}
\end{minipage}
&
\begin{minipage}{0.24\textwidth}
\center
\begin{tikzpicture}[every node/.style={circle,draw, minimum size=4mm, inner sep=2pt}]
\node (s) at (0,0) {$s$};
\node (a) at (2,1) {$a$};
\node (b) at (2,-1) {$b$};
\node (t) at (2.5,0) {$t$};
\node (c) at (0.4,-1) {$c$};
\node (d) at (1.2,-1) {$d$};
\node[draw=none] (2) at (1.2,-2) {$(b)$};

\draw [->] (s) -- (a);
\draw [->] (a) to (t);
\draw [->] (a) -- (b);
\draw [->] (b) -- (t);
\draw [->, red] (s) to (c);
\draw [->, red] (d) to (b);
\draw [->, red,] (c) -- node[above,draw=none, text=black] {$e$} (d);
\end{tikzpicture}
\end{minipage}
&
\begin{minipage}{0.26\textwidth}
\center
\begin{tikzpicture}[every node/.style={circle,draw, minimum size=4mm, inner sep=2pt}]
\node (s) at (0,0) {$s$};
\node (a) at (.7,1) {$a$};
\node (b) at (2,0) {$b$};
\node (t) at (2.8,0) {$t$};
\node (c) at (.7,-1) {$c$};
\node (d) at (1.5,-1) {$d$};
\node[draw=none] (3) at (1.2,-2) {$(c)$};

\draw [->] (s) -- (a);
\draw [->] (a) -- (b);
\draw [->] (b) -- (t);
\draw [->, red] (s) to (c);
\draw [->, red] (a) to (c);
\draw [->, red] (d) to (b);
\draw [->, red,] (c) -- node[above,draw=none, text=black] {$e$} (d);
\end{tikzpicture}
\end{minipage}
&
\begin{minipage}{0.24\textwidth}
\center
\begin{tikzpicture}[every node/.style={circle,draw, minimum size=4mm, inner sep=2pt}]
\node (s) at (0,0) {$s$};
\node (a) at (.7,-1) {$a$};
\node (f) at (.7,1) {$f$};
\node (c) at (.7,-.4) {$c$};
\node (d) at (.7,.3) {$d$};
\node (b) at (1.5,0) {$b$};
\node (t) at (2.3,0) {$t$};
\node[draw=none] (4) at (.7,-2) {$(d)$};

\draw [->] (s) to (a);
\draw [->] (a) to (b);
\draw [->] (b) -- (t);
\draw [->] (s) to (f);
\draw [->] (f) to (b);
\draw [->, red] (d) to (f);
\draw [->, red] (a) to (c);
\draw [->, red] (d) to (b);
\draw [->, red,] (c) -- node[left,draw=none, text=black] {$e$} (d);
\end{tikzpicture}
\end{minipage}
\end{tabular}
\caption{Four graphs exemplifying the cases for Lemma~\ref{vulnvuln}.}
\label{fig:red}
\vspace{-3mm}
\end{figure}

\iffull
As we have just seen, 
\else
The proof of the previous Lemma shows that
\fi
a destructive addition can occur for four different possible reasons; in Figure~\ref{fig:red}, we provide an exemplification of all these cases. In all of them, $e$ and its redundant connected component contribute to the creation of a homeomorphic copy of $\W$, making the new graph vulnerable.
The first case is exemplified by Fig.~\ref{fig:red}(a): in this case, the addition is destructive because the entry $a$ and the exit $b$ are not comparable in the original graph; so, the addition can be neither impactless nor impactful.
The second case is exemplified by Fig.~\ref{fig:red}(b): here, the entry $s$ and the exit $b$ are such that $s<b$ (hence, the addition cannot be impactless), but $s$ and $b$ are not linkable (hence, the addition cannot be impactful).
For the third case, consider Fig.~\ref{fig:red}(c): here we have two entries, $s$ and $a$, that both precede the exit $b$, i.e. $s<b$ and $a<b$, and clearly these two pairs are both linkable.
Finally, consider Fig.~\ref{fig:red}(d): here, the entry $a$ and the exit $b$ form the only linkable pair such that $a<b$; however, $a$ and the other exit $f$ are not comparable in the original graph.

The last ingredient for the correctness of our algorithm is Lemma~\ref{irrvuln}, whose proof needs a preliminary Proposition,
which allows us to check, in case of a non-destructive addition, only the edges in $RCC(e)$ and to ignore all the other redundant edges of the graph.

\begin{proposition}\label{Prop:RE}
If the addition of $e$ to $G$ is non-destructive, then every redundant edge in $G$ which does not belong to $RCC(e)$ remains redundant in $G \cup \{e\}$.
\end{proposition}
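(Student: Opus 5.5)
We argue by contradiction. Suppose some edge $f$, redundant in $G$ and not in $RCC(e)$, becomes irredundant in $G\cup\{e\}$. Then there is a simple $st$-path $P$ in $G\cup\{e\}$ through $f$. Since $f$ is redundant in $G$, $P$ cannot lie entirely in $G$, so $P$ must use $e$. We split $P$ at the nodes of $MIS(G)$ into maximal segments whose internal nodes are outside $MIS(G)$. The segment containing $e$ uses only edges that are redundant in $G$ (or $e$ itself) and has no internal node in $MIS(G)$, so by Definition~\ref{def:RCC} it lies in $RCC(e)$. It runs from some entry $\en\in N$ to some exit $\ex\in X$; if it started or ended without touching $MIS(G)$ it would contain $s$ or $t$, which are in $MIS(G)$. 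Any segment of $P$ containing $f$ and not $e$ is a path in $G$ between two nodes of $MIS(G)$ (call them $a$ before $b$ along $P$). It uses some redundant edges, and since $f\notin RCC(e)$, it is disjoint from that segment.

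Next we use non-destructiveness, i.e.\ that the addition is impactless or impactful. In the impactless case, $\ex\le\en$ in $MIS(G)$. In the impactful case, either $(\en,\ex)$ is the entry-exit pair or again $\en\ge\ex$. The nodes of $MIS(G)$ met by $P$ appear in a linear order along $P$. We compare this with the partial order $<$ of the acyclic, series-parallel net $MIS(G)$.

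The key step is to show that $P$ can be rerouted into a simple $st$-path of $G$ that still contains $f$, which contradicts the redundancy of $f$ in $G$. We replace the $e$-segment $\en\pat\ex$ by a path inside $MIS(G)$ that avoids the nodes used by the rest of $P$. If $\en<\ex$ and they are linkable (the impactful pair), the series-parallel decomposition of $MIS(G)$ puts $\en$ and $\ex$ as terminals of a parallel-extensible component. A simple path $\en\pat\ex$ inside that component should be node-disjoint from the other parts of $P$, because in a TTSP net any simple $st$-path visits components in the order given by $<$.

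If instead $\ex\le\en$, then $P$ goes "backwards" in $MIS(G)$ through the $e$-segment. We show that the remaining pieces of $P$ in $G$ already give a simple $s\pat a$ and $b\pat t$ through $MIS(G)$ avoiding each other. To do this we use that every node of $MIS(G)$ lies on a simple $st$-path and that $<$ is consistent with all simple paths in $MIS(G)$. We then reconnect via the path containing $f$ to get a simple $st$-path in $G$ through $f$.

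The main obstacle is this disjointness argument: showing that the replacement path within $MIS(G)$ does not intersect the segments of $P$ outside $MIS(G)$, in particular the segment through $f$. To handle it, I would first use Lemma~\ref{irredirred} and Lemma~\ref{redred}, which identify $MIS(G\cup\{e\})$ as either $MIS(G)$ or $MIS(G)$ plus a parallel component between $\en$ and $\ex$. This series-parallel structure (Remark~\ref{rem:linkable}) lets us argue about the positions $a<b$ of the segment through $f$.

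Given this, $f$ irredundant in $G\cup\{e\}$ means the segment $a\pat b$ through $f$ attaches to $MIS(G\cup\{e\})$ so that $a$ and $b$ appear on a common simple path. In the impactless case that path avoids $e$ altogether, since $MIS(G\cup\{e\})=MIS(G)$, so $f$ would already be irredundant in $G$, a contradiction. In the impactful case, if the witnessing path goes through the new parallel component, we swap that component for a parallel branch of $MIS(G)$ between $\en$ and $\ex$. Such a branch exists because $\en<\ex$, and by the parallel decomposition it is internally disjoint from the rest of the path.
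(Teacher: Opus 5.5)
Your setup is correct: the $e$-segment of $P$ lies in $RCC(e)$ and runs from an entry to an exit, and the impactless case follows from Lemma~\ref{redred}. The impactful case, however, has a genuine gap at exactly the step you flag as the main obstacle.

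Your claim that the $e$-segment can be replaced by an $\en\ex$-path of $MIS(G)$ avoiding the rest of $P$ rests only on ``in a TTSP net any simple $st$-path visits components in the order given by $<$''. That is a fact about simple paths \emph{of} $MIS(G)$. The prefix $s\pat\en$ and suffix $\ex\pat t$ of $P$ are paths of $G$ that use redundant edges (at least $f$), and these can run against $<$. So nothing you say prevents them from touching nodes of $MIS(G)$ strictly between $\en$ and $\ex$ and blocking every $\en\ex$-path of $MIS(G)$. Excluding this is the whole content of the proposition. It needs linkability of $(\en,\ex)$ and non-vulnerability of $G$ applied to configurations involving the parts of $P$ outside $MIS(G)$, and your sketch never supplies this.

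The paper proceeds as follows. It fixes a simple $P'=s\pat\en\pat\ex\pat t$ in $MIS(G)$ and lets $\hat u$ be the first node of the prefix of $P$ lying on the $\en\pat t$ part of $P'$ (dually $\hat v$ for the suffix). Your rerouting idea appears here, but only as a first step: the prefix up to $\hat u$ followed by $P'$ is a simple $st$-path of $G$, so the newly irredundant edges on the prefix must lie after $\hat u$. The contradiction then comes from explicit $\W$-embeddings. If $\hat u$ or $\hat v$ lies on the $\en\pat\ex$ part of $P'$, one gets an embedding in $MIS(G)$ plus the edge $(\en,\ex)$, contradicting linkability. Otherwise the pieces of $P$ and $P'$ give an embedding in $MIS(G)$ itself, contradicting non-vulnerability.

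Your $\ex\le\en$ branch is also unjustified as written. One of the pieces $s\pat a$, $b\pat t$ of $P$ passes through $e$, so it is not in $G$. That branch is instead closed directly by Proposition~\ref{greaterredundant}, applied to the node-disjoint prefix and suffix of $P$.

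Your fallback via $MIS(G\cup\{e\})$ is circular. The proof of Lemma~\ref{irredirred} only shows that $e$ becomes irredundant and that new simple $st$-paths through $RCC(e)$ enter at $\en$ and leave at $\ex$. That $MIS(G\cup\{e\})$ is \emph{exactly} $MIS(G)$ plus the newly irredundant part of $RCC(e)$ is Corollary~\ref{cor:newMIS}, which the paper derives \emph{from} the proposition you are proving. Once you assume that description, $f\notin MIS(G)$ and $f\notin RCC(e)$ give $f\notin MIS(G\cup\{e\})$ immediately, so nothing has been proved.

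The closing ``swap'' step has the same defect. Claiming that a parallel $\en\ex$-branch of $MIS(G)$ is internally disjoint from the rest of the witnessing path ``by the parallel decomposition'' is the same unproved disjointness claim again. The decomposition of $MIS(G)$ does not control a path that leaves $MIS(G)$ through redundant edges such as $f$.
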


\begin{wrapfigure}{r}{0.35\textwidth}
\vspace*{-.25cm}
\begin{tikzpicture}[every node/.style={circle,draw, minimum size=4mm, inner sep=2pt}]
\node (s) at (0,0) {$s$};
\node (a) at (1,0) {$a$};
\node (b) at (2,0) {$b$};
\node (c) at (1,1) {$c$};
\node (d) at (2,1) {$d$};
\node (e) at (3,0) {$e$};
\node (t) at (4,0) {$t$};
\draw [->] (s) -- (a);
\draw [->] (a) -- (b);
\draw [->] (b) -- (e);
\draw [->] (e) -- (t);
\draw[->, red] (a) to (c);
\draw[->, red] (d) to (e);
\draw [->, red] (c) to node[below,draw=none, text=black] {$e$} (d);

\draw [->, blue, bend left] (e) to (b);
\draw [->, blue, bend left=40] (e) to (s);
\draw [->, blue, bend right] (e) to (d);
\end{tikzpicture}
\end{wrapfigure}
To visualize the situation of the previous Proposition, consider the side graph.
Here, we are adding the edge $e$ relating $(c,d)$, the black edges form $MIS(G)$ whereas the red ones are $RCC(e)$. The blue edges are those not in $RCC(e)$ which are redundant in $G$: all of them remain redundant after adding $e$.

\begin{wrapfigure}{r}{0.35\textwidth}
\vspace*{-.25cm}
\begin{tikzpicture}[every node/.style={circle,draw, minimum size=4mm, inner sep=2pt}]
\node (s) at (0,0) {$s$};
\node (a) at (1,0) {$a$};
\node (b) at (2,1) {$b$};
\node (t) at (3,0) {$t$};
\node (c) at (2,-1) {$c$};

\draw [->] (s) -- (a);
\draw [->] (a) -- (b);
\draw [->] (a) -- (c);
\draw [->] (b) -- (t);
\draw [->] (c) -- (t);
\draw [->, red,bend left] (s) to node[above,draw=none, text=black] {$e$} (b);
\draw [->, bend left, blue] (b) to node[below,draw=none, text=black] {$e'$} (a);
\end{tikzpicture}
\vspace*{-.4cm}
\end{wrapfigure}
In contrast, in the side  graph, when we add $e$, such an edge is irredundant in the resulting net and makes $e'$ irredundant too, even though $e' \not\in RCC(e)$; however, this example does not contradict the previous Proposition, since the addition of $e$ is destructive (in fact, the new net is vulnerable).

\begin{corollary}
\label{cor:newMIS}
    If the addition of $e$ to $G$ is impactful, then the MIS of $G \cup \{e\}$ is $MIS(G)$ together with all the edges and vertices of $RCC(e)$ that become irredundant after the addition of $e$.
\end{corollary}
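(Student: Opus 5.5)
We prove the corollary by showing two inclusions between $MIS(G \cup \{e\})$ and the graph $H$ obtained from $MIS(G)$ by adding the edges and vertices of $RCC(e)$ that become irredundant once $e$ is added. Recall that the MIS of a net is exactly the subgraph formed by its irredundant nodes and edges. It is unique by \cite{CGS19}. So we only have to classify every edge of $G \cup \{e\}$ as redundant or irredundant in the new net.

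\emph{First inclusion: no irredundant edge is lost.} Every edge of $MIS(G)$ lies on a simple $st$-path of $G$. That path is still a simple $st$-path of $G \cup \{e\}$, so the edge stays irredundant. The same holds for the vertices of $MIS(G)$. By definition, the edges of $RCC(e)$ that become irredundant after the addition are irredundant in $G \cup \{e\}$. By Lemma~\ref{irredirred}, $e$ itself is irredundant, and it belongs to $RCC(e)$. Hence $H \subseteq MIS(G \cup \{e\})$.

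\emph{Second inclusion: nothing else becomes irredundant.} Every edge of $G \cup \{e\}$ is of one of three kinds. It is in $MIS(G)$; or it is in $RCC(e)$ (this covers $e$, and every edge of $RCC(e)$ is either redundant or irredundant after the addition); or it is a redundant edge of $G$ that is not in $RCC(e)$. An impactful addition is non-destructive by Definition~\ref{def:I-I-D}. So Proposition~\ref{Prop:RE} applies to the third kind: such edges stay redundant in $G \cup \{e\}$. Therefore the irredundant edges of $G \cup \{e\}$ are exactly those of $H$.

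Vertices need a brief check. A vertex is irredundant exactly when it lies on a simple $st$-path, which holds exactly when it is an endpoint of some irredundant edge. Here we use $s \neq t$, and we note that $s$ and $t$ themselves are in $MIS(G)$. So the vertex sets also match, giving $MIS(G \cup \{e\}) = H$.

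The main subtlety concerns what ``becomes irredundant'' means. Lemma~\ref{irredirred} speaks of edges of $RCC(e)$ that are irredundant in the $\en\ex$-net, whereas the corollary refers to irredundancy in the $st$-net $G \cup \{e\}$. The argument above works directly with $st$-irredundancy, so it does not depend on the two notions agreeing. If one also wants the resulting graph to be described as $MIS(G)$ plus a parallel component, one invokes Lemma~\ref{irredirred} to identify the two notions. That lemma asserts that the $\en\ex$-irredundant part is precisely what is added to $MIS(G)$ to form $MIS(G \cup \{e\})$, which is consistent with the description of $H$ above.
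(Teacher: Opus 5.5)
Your proof is correct and follows essentially the same route as the paper: $MIS(G)$ survives because every simple $st$-path of $G$ stays simple in $G\cup\{e\}$, and Proposition~\ref{Prop:RE} (applicable because an impactful addition is non-destructive) rules out any redundant edge outside $RCC(e)$ becoming irredundant. Your explicit vertex check and your remark on $st$- versus $\en\ex$-irredundancy only spell out details that the paper's two-line proof leaves implicit.
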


\begin{lemma}\label{irrvuln}
Let the addition of $e$ be impactful with entry-exit pair $(\en,\ex)$; then, $G\cup \{e\}$ is vulnerable if and only if $RCC(e)$  (seen as an $\en\ex$-net) is vulnerable.
\end{lemma}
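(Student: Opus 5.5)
We reduce both directions to the characterisation recalled in Section~\ref{sec:vuln}: a net is non-vulnerable if and only if its MIS is series-parallel. Let $R$ denote the subnet of $RCC(e)\cup\{e\}$ made of the edges and vertices that are irredundant in $G\cup\{e\}$. By Corollary~\ref{cor:newMIS}, $MIS(G\cup\{e\}) = MIS(G)\cup R$. By Lemma~\ref{irredirred}, $R$ is a new parallel component hung between $\en$ and $\ex$. The plan is to show that $R$ is exactly $MIS(RCC(e))$, where $RCC(e)$ is regarded as an $\en\ex$-net. The statement then reduces to the fact that a parallel composition of series-parallel pieces at linkable nodes is series-parallel exactly when each piece is.

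\textbf{Step 1 (identifying $R$).} We show that an edge $f$ of $RCC(e)$ is irredundant in $G\cup\{e\}$ if and only if it lies on a simple $\en\ex$-path inside $RCC(e)$.

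For the ``if'' direction, take a simple $\en\ex$-path $Q$ in $RCC(e)$ through $f$. Its internal nodes lie outside $MIS(G)$ by Definition~\ref{def:RCC}. Since $\en<\ex$, there is a simple $st$-path in $MIS(G)$ visiting $\en$ before $\ex$. Replacing its $\en\ex$ segment with $Q$ gives a simple $st$-path through $f$.

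For the ``only if'' direction, take a simple $st$-path $P$ in $G\cup\{e\}$ through $f$. Consider the maximal subpath of $P$ through $f$ whose internal nodes avoid $MIS(G)$. This subpath uses only edges of $RCC(e)$ and goes from some entry $\en'$ to some exit $\ex'$. Because $P$ is simple, $\en'$ precedes $\ex'$ on $P$. We then argue that $\en'<\ex'$ in $MIS(G)$, either from the order of $P$ or via Proposition~\ref{Prop:RE}, since the remaining portions of $P$ use only MIS edges or edges of $RCC(e)$. Impactfulness forces $\en'\ge\ex'$ for every pair other than $(\en,\ex)$, so $(\en',\ex')=(\en,\ex)$.

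Hence $R$ is precisely the MIS of the $\en\ex$-net $RCC(e)$, where we take only the portion reachable as simple $\en\ex$-paths. \textbf{This step is the main obstacle.} The difficulty is justifying that the order of $\en'$ and $\ex'$ along an arbitrary simple path $P$ in $G\cup\{e\}$ transfers to the relation $<$ in $MIS(G)$. I would first try to show that the portions of $P$ outside the subpath can be rerouted to simple paths of $MIS(G)$. Failing that, I would use the uniqueness of the MIS together with Lemma~\ref{irredirred}, which already states that the newly irredundant edges form an $\en\ex$ parallel component.

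\textbf{Step 2 (the ``if'' direction).} Suppose $RCC(e)$, as an $\en\ex$-net, is vulnerable. Then it contains an $\en\ex$-embedding of $\W$ in the sense of Definition~\ref{def:stEmbedding}. Extend the two attaching paths with a simple $s\en$-path and a simple $\ex t$-path of $MIS(G)$, which exist because $\en<\ex$. These can be chosen disjoint from the embedding, since the embedding's internal nodes lie outside $MIS(G)$ and the two MIS paths meet only at their ends of a single simple $st$-path. The result is an $st$-embedding in $G\cup\{e\}$, which is therefore vulnerable by \cite{CGS18}.

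\textbf{Step 3 (the ``only if'' direction).} Suppose $RCC(e)$ is non-vulnerable. Then its MIS, which is $R$ by Step~1, is series-parallel between $\en$ and $\ex$. Since $\en$ and $\ex$ are linkable, Remark~\ref{rem:linkable} shows that $MIS(G)\cup R$ is series-parallel. By Corollary~\ref{cor:newMIS} this graph is $MIS(G\cup\{e\})$, so $G\cup\{e\}$ is non-vulnerable, which is the contrapositive of the required implication.
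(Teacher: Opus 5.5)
Your overall structure matches the paper's, made more explicit. Step~2 is exactly the paper's ($\Leftarrow$) argument, and Steps~1 and~3 unpack what the paper compresses into ``by Lemma~\ref{irredirred} and Proposition~\ref{Prop:RE}, the only possibility is that $RCC(e)$ is vulnerable''.

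The real gap is the ``only if'' half of Step~1. You leave it open, and your main plan there aims at the wrong claim. You do not need $\en'<\ex'$ in $MIS(G)$. Impactfulness already gives $\en'\ge\ex'$ for every pair other than $(\en,\ex)$, and $\en'\neq\ex'$ on a simple path. So it suffices to exclude $\ex'<\en'$, which is exactly what Proposition~\ref{greaterredundant} does, as in the proof of Lemma~\ref{redred}.

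First take the maximal segment of $P$ through $e$; it lies in $RCC(e)$ directly by Definition~\ref{def:RCC}. The prefix $s\leadsto\en'$ and suffix $\ex'\leadsto t$ of $P$ avoid $e$, so they are node-disjoint simple paths of $G$. Then $\ex'<\en'$ would contradict Proposition~\ref{greaterredundant}, so this segment goes from $\en$ to $\ex$.

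Now suppose the segment through $f$ does not contain $e$. Its edges are in $RCC(e)$ by Proposition~\ref{Prop:RE}, so it runs from an entry $a$ to an exit $b$. Compare it with the $e$-segment. If it comes first on $P$, the pair $(a,\ex)$ forces $\ex<a$, and the prefix $s\leadsto a$ and suffix $\ex\leadsto t$ of $P$ contradict Proposition~\ref{greaterredundant}. If it comes later, the pair $(\en,b)$ forces $b<\en$, and the prefix $s\leadsto\en$ with the suffix $b\leadsto t$ give the same contradiction. Hence $f$ lies on the $e$-segment, which is a simple $\en\ex$-path in $RCC(e)$.

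Your fallback of quoting Lemma~\ref{irredirred} amounts to the same thing. Its proof asserts precisely that every new simple $st$-path enters $RCC(e)$ at $\en$ and leaves at $\ex$, and that is all the paper relies on.

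A smaller point concerns the claim that the internal nodes of a simple $\en\ex$-path in $RCC(e)$ lie outside $MIS(G)$ ``by Definition~\ref{def:RCC}''. You use it in the ``if'' half of Step~1, again in Step~2, and implicitly in Step~3, where Remark~\ref{rem:linkable} needs the inner nodes of $R$ to be fresh. This is not definitional: such a path could a priori pass through an $MIS(G)$ node $z$ that is both an exit and an entry. Impactfulness excludes it. The pairs $(\en,z)$ and $(z,\ex)$ differ from $(\en,\ex)$, so $z\le\en$ and $\ex\le z$ with $z\notin\{\en,\ex\}$. That gives $\ex<\en$, contradicting $\en<\ex$. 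With these two additions your argument is complete, and it is somewhat more careful than the paper's own.
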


\begin{wrapfigure}{r}{0.35\textwidth}
\vspace*{-.5cm}
\begin{tikzpicture}[every node/.style={circle,draw, minimum size=4mm, inner sep=2pt}]
\node (s) at (0,0) {$s$};
\node (a) at (1,0) {$a$};
\node (b) at (2,0) {$b$};
\node (c) at (3,0) {$c$};
\node (t) at (4,0) {$t$};
\node (d) at (1,-1) {$d$};
\node (e) at (2,-.5) {$h$};
\node (f) at (2,-1.5) {$f$};
\node (g) at (3,-1) {$g$};
\draw [->] (s) -- (a);
\draw [->] (a) -- (b);
\draw [->] (b) -- (c);
\draw [->] (c) -- (t);
\draw[->, red] (d) to (e);
\draw[->, red] (d) to (f);
\draw[->, red] (e) to (g);
\draw[->, red] (f) to (g);
\draw[->, red] (e) to (f);
\draw [->, red, bend right] (a) to node[left,draw=none, text=black] {$e$} (d);
\draw [->, red, bend right] (g) to (c);
\end{tikzpicture}
\vspace*{-.4cm}
\end{wrapfigure}
The scenario of the previous Lemma is exemplified by the side graph. When adding the edge $e$ relating $(a,d)$, the red edges are $RCC(e)$ and the entry-exit pair is $(a,c)$. Note that $a$ and $c$ are linkable, but the whole graph is vulnerable since the $ac$-net in red is vulnerable.


\section{The Incremental Algorithm}
The algorithm we propose exploits the information returned by running the \MS\ algorithm by \cite{MS23} on a non-vulnerable net (i.e. its MIS, which is series-parallel). Assuming that the starting net is not vulnerable, we analyze the RCC of the added edge and recognize the type of addition, among the three ones of the previous section. In the impactful case, we need to run \MS\ on the RCC, but we can achieve an amortized cost of the whole algorithm of $O(m)$.

The incremental vulnerability check is obtained by combining 5 algorithms. First, we rely on the algorithm \MS\ by \cite{MS23}. Then, we develop an algorithm for calculating the RCC of the new edge $e$ (Algorithm~\ref{alg:RCC}) and an algorithm for ascertaining whether the addition of $e$ is impactless/impactful/destructive (Algorithm~\ref{alg:R-I-V}); the latter one relies on another algorithm that adapts the series-parallel recognition algorithm by \cite{VTL82} to efficiently calculate the minimum vertex greater (w.r.t. the ordering defined in Definition~\ref{def:ordering}) than a given set of vertices (Algorithm~\ref{alg:minFoll}).
Finally, we can provide our incremental algorithm (Algorithm~\ref{alg:vulnGue}).

\subsection{Calculating the Redundant Connected Component}

We start with Algorithm~\ref{alg:RCC} which finds $RCC(e)$ and also returns all the entries $\{\en_i\}_i$ as $\EN$ and all the exits $\{\ex_j\}_j$ as $\EX$. We assume that all edges and nodes of $MIS(G)$ have already been identified in $G$. Since the algorithm consists of two visits, its cost is linear.

\begin{algorithm}[t]
    \caption{$RCC(G,e)$}
    \begin{algorithmic}[1]
        \Require All edges and nodes of $MIS(G)$ are marked in $G$
        \State $rcc=\{e\}$, $\EN = \EX = \emptyset$
        \State Do a backward visit from the starting vertex of $e$, stopping at $MIS(G)$: during this visit, add all the touched edges to $rcc$ and
        all the touched nodes of $MIS(G)$ to $\EN$.
        \State Do a forward visit from the ending vertex of $e$, stopping at $MIS(G)$: during this visit, add all the touched edges to $rcc$ and
        all the touched nodes of $MIS(G)$ to $\EX$.
        \State\Return $(rcc,\EN,\EX)$
    \end{algorithmic}
\label{alg:RCC}
\end{algorithm}

    \begin{proposition}
    \label{prop:costRCC}
$RCC(G,e)$ costs $O(m)$.
    \end{proposition}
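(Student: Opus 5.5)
The plan is to read Algorithm~\ref{alg:RCC} as a constant amount of initialization followed by two graph traversals, and to bound each traversal by $O(n+m)$. Since $G$ is $st$-connected, or can be assumed so after discarding isolated vertices, this is $O(m)$. The precondition says that the nodes and edges of $MIS(G)$ are already marked, so deciding whether a node lies in $MIS(G)$ (the stopping test) or whether an edge is redundant takes $O(1)$ per query. I would store $rcc$, $\EN$ and $\EX$ as boolean marks on edges and nodes, or as lists guarded by such marks, so that each insertion also costs $O(1)$.

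First I would make the backward visit of line 2 precise. It is a BFS/DFS on the reversed graph, started at the source vertex of $e$. It explores only input edges that are unmarked, i.e.\ redundant in $G$. When it reaches a node of $MIS(G)$, it records that node in $\EN$ and does not expand it further. Every other reached node is marked visited and expanded exactly once. So each node is dequeued at most once, and each edge is examined at most once, when its head is expanded. With adjacency lists, this visit costs $O(n+m)$.

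The forward visit of line 3 is symmetric: it follows output edges from the target of $e$. Its cost is the same bound, and the two visits together are $O(n+m)$. I would then add a one-line remark that the visits compute exactly $RCC(e)$, $\EN$ and $\EX$. Any path through $e$ that satisfies conditions (1) and (2) of Definition~\ref{def:RCC} splits at $e$ into a backward part and a forward part, and these are exactly the paths explored by the two visits. This is correctness rather than cost, but it justifies that no extra work is needed.

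The only point needing care is the stopping condition: MIS nodes must be recorded but never expanded. I would also handle the corner case where an endpoint of $e$ already lies in $MIS(G)$; then that endpoint is itself an entry or an exit and the corresponding visit stops immediately. Neither case affects the linear bound, so I expect no real obstacle; the conclusion is $O(n+m)=O(m)$.
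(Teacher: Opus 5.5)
Your proof is correct and follows the paper's own argument: Algorithm~\ref{alg:RCC} is just two visits, and each is linear because the nodes of $MIS(G)$ are already marked, so the stopping test costs $O(1)$. One small correction: $G$ is not assumed $st$-connected (in Fig.~\ref{fig:examplegraph}, $g$ and $h$ lie on no $s't'$-path), and discarding isolated vertices does not make it so; you do not need this, though, because $n\le 2m$ once isolated vertices are dropped, or more directly because a visit from a single vertex touches at most one more vertex than the edges it scans.
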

\iffull
\begin{proof}
Algorithm~\ref{alg:RCC} is made up of two visits, each costs $O(m)$ since all the nodes of $MIS(G)$ have been previously marked in $G$.
\end{proof}
\fi

\subsection{Recognizing the Type of Edge Addition}

Algorithm~\ref{alg:Salvo} checks the conditions in Definition~\ref{def:I-I-D} to determine whether the addition of a new edge $e$ is impactless, impactful, or destructive, by keeping the overall complexity in $O(m)$. Since there are some cases where we need to check whether a set of nodes $A$ precedes another set $B$, we need a preliminary proposition to efficiently do this.

\begin{proposition}
\label{prop:Tu}
Let $G$ be a series-parallel graph and $A, B\subseteq V_G$.
Then, $A\leq B$ if and only if there exists $w\in V_G$ such that $A\leq w \leq B$.
\end{proposition}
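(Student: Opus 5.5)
The ``if'' direction is immediate from transitivity. Since $G$ is series-parallel (and irredundant, being a MIS), it is acyclic. If $a\le w$ and $w\le b$ with $a<w<b$, a single simple $st$-path has to witness both relations at once, so the real content sits in the ``only if'' direction. I would therefore first prove a lemma: in an irredundant TTSP graph, $\le$ is transitive, i.e.\ $a<w$ and $w<b$ imply $a<b$. I would prove it by induction on the series-parallel decomposition $G=G_1\cdot G_2$ or $G=G_1\parallel G_2$. In the series case every simple $st$-path is a simple path of $G_1$ followed by one of $G_2$. In the parallel case the only common nodes are $s$ and $t$, which are comparable with everything, and two internal nodes lying in different components are incomparable. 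With this in hand, the relations are read componentwise.

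For ``only if'', assume $A\le B$ and build $w$ by induction on the TTSP decomposition of $G$. We may assume $A,B\neq\emptyset$; otherwise take $w=s$ or $w=t$ (if $A=\emptyset$, choose $w=s$, which satisfies $s\le B$ trivially). Base case: $G$ is the single edge $(s,t)$. Here either $A\subseteq\{s\}$, and $w=s$ works, or $B\subseteq\{t\}$, and $w=t$ works; otherwise $t\in A$ and $s\in B$, which would force $t\le s$, a contradiction.

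Series case $G=G_1\cdot G_2$, with cut vertex $c$. Every node of $G_1$ is $\le c$, every node of $G_2$ is $\ge c$, and comparisons within $G_i$ coincide with those in $G$. If $A\subseteq V_{G_1}$ and $B\subseteq V_{G_2}$, take $w=c$. If some $a\in A$ lies in $G_2\setminus\{c\}$, then every $b\in B$ satisfies $b\ge a>c$, so $B\subseteq V_{G_2}$. Moreover any $a'\in A$ lying in $G_1$ is $\le c$, so it is $\le$ every vertex of $G_2$. Apply induction in $G_2$ to the pair $(A\cap V_{G_2})\cup\{c\}$ and $B$; the resulting $w$ also dominates $A\cap V_{G_1}$ via $c$ and transitivity. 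The case where some $b\in B$ lies in $G_1\setminus\{c\}$ is symmetric.

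Parallel case $G=G_1\parallel G_2$. If $A$ contains an internal node $a$ of $G_1$, then every $b\in B$ must be comparable to it and $\ge a$. So $B\subseteq V_{G_1}$, the same argument forces $A\subseteq V_{G_1}$, and we apply induction in $G_1$. If $A\subseteq\{s,t\}$, then $t\in A$ forces $B=\{t\}$ and $w=t$ works; otherwise $A\subseteq\{s\}$ and $w=s$ works.

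The main obstacle is the transitivity lemma and the claim that, inside each component, the restricted order agrees with the global one. This is where irredundancy and series-parallelism are essential: in a general acyclic graph $<$ is not transitive. I would handle it by the same decomposition induction, checking that simple paths concatenate correctly at the series cut vertex.
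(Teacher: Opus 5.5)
Your proof is correct in substance, and it takes a genuinely different route from the paper. The paper proves the ``only if'' direction by contradiction. It picks incomparable $a,a'\in A$ and incomparable $b,b'\in B$, takes the paths $s\leadsto a\leadsto b\leadsto t$, $s\leadsto a'\leadsto b'\leadsto t$ and $a\leadsto b'$, and extracts from them an embedding of $\W$, which contradicts series-parallelism via the forbidden-subgraph characterization. You instead induct on the series/parallel decomposition tree. This implicitly uses the standard equivalence between the paper's reduction-based definition of TTSP graphs and the recursive composition definition. You then exhibit $w$ explicitly: a terminal, the series cut vertex $c$, or a vertex found recursively. The paper's route is local and reuses the $\W$-machinery of the rest of the paper. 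Yours needs no forbidden-subgraph theorem, is constructive, and handles arbitrary $A,B$ uniformly. By contrast, the paper's contradiction step (``otherwise there exists a $w$ between $\{a,a'\}$ and $\{b,b'\}$'') only closes the case $A=\{a,a'\}$, $B=\{b,b'\}$ as written.

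There are two corrections.

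First, in the parallel case, an internal node $a\in A$ of $G_1$ does give $B\subseteq V_{G_1}\setminus\{s\}$. However, ``the same argument forces $A\subseteq V_{G_1}$'' needs $B$ to contain an internal node of $G_1$. If $B=\{t\}$, then $A$ may contain internal nodes of both $G_1$ and $G_2$, and the inclusion fails. This case closes trivially with $w=t$, or you can dispose of $B\subseteq\{t\}$ at the outset, as you did for empty sets.

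Second, the ``main obstacle'' you identify is not one: $<$ is transitive in every acyclic graph. Take the prefix $s\leadsto a\leadsto w$ of a witness path for $a<w$, followed by the suffix $w\leadsto b\leadsto t$ of a witness path for $w<b$. This is an $st$-walk, and every walk in a DAG is a simple path; also $a\neq b$ by acyclicity. This is item 3 of Proposition~\ref{prop:leq}. So your claim that $<$ is not transitive in general acyclic graphs is false, and no decomposition induction is needed for transitivity. What your induction actually relies on is that the order of $G$, restricted to $V_{G_i}$, coincides with the order of the component $G_i$ taken with respect to its own terminals. This follows because a simple $st$-path splits at $c$ in the series case and stays inside one component in the parallel case, together with $st$-connectivity of TTSP graphs. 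State it explicitly, since every step of your case analysis uses it.
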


\begin{algorithm}[t]
    \caption{$\mathit{TypeOfAdd}(G, mis, N, X)$}
    \label{alg2}    \label{alg:Salvo}
    \begin{algorithmic}[1]
        \If{$N=\varnothing \lor X=\varnothing$} \label{algS:check-empty}
        \State\Return {\sc Impactless} \label{algS:empty}
        \EndIf
        \State $w = MinFoll(mis,X)$ \label{algS:Tleqw}  
        \State Let $N'$  be the largest subset of $N$ s.\,t. $w\leq N'$ \label{algS:impless1}
		\If{$|N'| = |N|$}  
		\State {\bf return} {\sc Impactless} \label{algS:impless2}
		\EndIf
		\If{$|N'|<|N| - 1$} \label{algS:destr1}
		\State {\bf return} {\sc Destructive} \label{algS:destr2}
		\EndIf
		\State let $\{\en\} = N\setminus N'$ \label{algS:impactful}
		\State Let $X'$ be the largest subset of $X$ s.\,t. $X' \leq \en$ \label{algS:X'}
		\If{$|X'| \neq |X| - 1$} \label{algS:destr3}
		  \State {\bf return} {\sc Destructive} \label{algS:destr4}
		\EndIf
		\State let $\{\ex\} = X\setminus X'$\label{algS:ex}
		\If{$\en < \ex \land linkable(\en,\ex) \land X'\leq N  \land X \leq N'$} \label{algS:check}
		  \State {\bf return} ({\sc Impactful}, \en, \ex)\label{algs:returnimpactful}
		  \Else 
		  \State {\bf return} {\sc Destructive}\label{algS:destr5}
		\EndIf
    \end{algorithmic}
    \label{alg:R-I-V}
    \end{algorithm}
    
We defer the development of a linear algorithm for finding the node $w$ of the previous Proposition to Algorithm~\ref{alg:minFoll};
so, let us first focus on Algorithm~\ref{alg:Salvo}.
This latter algorithm is invoked on a graph $G$ (that is non-vulnerable) for which we already know its MIS (contained in the variable $mis$) and the entries/exits of $RCC(e)$ (viz., $N$ and $X$), where $e$ is the edge that is added.\footnote{Notice that, for this algorithm to properly work, we do not need to know which is the added edge; just its entries and exits suffice.}
If either $N$ or $X$ is empty, then the edge we are adding does not lie on any $st$-path; so, the addition is impactless (lines \ref{algS:check-empty}--\ref{algS:empty}).
If this is not the case, we check whether $X \leq N$ in $MIS(G)$; by Proposition \ref{prop:Tu}, 
this is equivalent to finding any node $w$ such that $X\leq w\leq N$. 

To this aim, we first invoke Algorithm~\ref{alg:minFoll} and find the minimal (w.r.t. ‘$\leq$') $w$ that follows $X$ in $MIS(G)$, i.e. such that $X \leq w$ (line \ref{algS:Tleqw}); this always exists (at most, it is $t$).
Then, we determine the largest set $N' \subseteq N$ of nodes reachable from $w$. If $N'=N$, the algorithm says that the addition of $e$ is impactless, because $X \leq N$. 

If exactly one $\en \in N$ is not reachable from $w$, we explore whether $\en$ can be the entry of the entry-exit pair for $e$ (and, so, the addition of $e$ is impactful). 
To this end, we determine the largest subset $X'$ of nodes from $X$ that precede $\en$.
Again, if exactly one node $\ex \in X$ does not precede $\en$, this can potentially be the exit of the entry-exit pair for $e$; 
thus, we must check that (1) $\en$ precedes $\ex$ and they are linkable, and (2) all other pairs of entry $\en'\in N$ and exit $\ex' \in X$ are such that $\ex' \leq \en'$ (this is done in line \ref{algS:check}). If all these checks succeed, Algorithm \ref{alg:Salvo} says that the addition of $e$ is impactful; in all other cases, Algorithm \ref{alg:Salvo} says that the addition of $e$ is destructive.

Since the checks in lines 4, 10 and 14 rely
on 
visits and the cost of $MinFoll$ is linear (see Lemma~\ref{minFollCorrect} later on), its overall cost is linear too. 

\begin{lemma}[Correctness and Complexity]
    \label{prop_cost-R-I-V}
Algorithm~\ref{alg:Salvo} correctly returns the type of addition of $e$ in $G$
and costs $O(m)$.
\end{lemma}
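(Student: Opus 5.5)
The proof splits into two parts: correctness, i.e.\ the value returned agrees with Definition~\ref{def:I-I-D}, and an $O(m)$ running time. Throughout I would use that $\leq$ on $MIS(G)$ is a partial order compatible with reachability inside the acyclic series-parallel graph $MIS(G)$. In particular, $a \leq b$ holds iff $b$ is reachable from $a$ in $MIS(G)$, since every node of the MIS lies on a simple $st$-path. I would also assume, as stated, that $MinFoll(mis,X)$ returns the minimum $w$ with $X \leq w$, and that this minimum is unique, being a join in the series-parallel order.

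\textbf{Correctness.} I would follow the branches of the algorithm.

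First, if $N$ or $X$ is empty, no $st$-path of $G \cup \{e\}$ can pass through $e$. Moreover, $X \leq N$ holds vacuously, so the addition is impactless by definition.

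Second, take the branch where $N'=N$. Then $X \leq w \leq N$, so the addition is impactless. Conversely, if $X \leq N$, Proposition~\ref{prop:Tu} gives some $w'$ with $X \leq w' \leq N$. Minimality of $w$ gives $w \leq w'$, hence $w \leq N$ and therefore $N'=N$. So the impactless test is exact.

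Third, suppose the addition is not impactless. I would show that impactful additions reach line~\ref{algs:returnimpactful} and that everything else returns destructive. If the addition is impactful with pair $(\en,\ex)$, then every other entry $\en'$ satisfies $\en' \geq \ex'$ for all exits $\ex'$, hence $\en' \geq X$. Since $w$ is the minimum upper bound of $X$, this gives $\en' \geq w$, so $N \setminus N' \subseteq \{\en\}$. The set cannot be empty, otherwise the addition would be impactless. Thus the tests at lines~\ref{algS:destr1} and~\ref{algS:impactful} correctly isolate $\en$. Similarly, every exit $\ex' \neq \ex$ satisfies $\ex' \leq \en$, while $\ex \not\leq \en$ because $\en < \ex$. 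Hence $X \setminus X' = \{\ex\}$, and the remaining conditions on line~\ref{algS:check} hold by definition.

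Conversely, suppose line~\ref{algS:check} succeeds. The condition $X' \leq N$ covers every pair whose exit is different from $\ex$. The condition $X \leq N'$ covers every pair whose entry is different from $\en$. Together these give exactly the universally quantified clause of the definition for all pairs other than $(\en,\ex)$. Combined with $\en<\ex$ and linkability, the addition is impactful. The pair is unique because the other pairs satisfy $\en' \geq \ex'$, which excludes $\en' < \ex'$. Every earlier exit to destructive is justified by the contrapositive of the necessary conditions just derived.

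\textbf{Complexity.} Each step is linear. $MinFoll$ is linear by Lemma~\ref{minFollCorrect}. Computing $N'$ is one forward visit of $MIS(G)$ from $w$. Computing $X'$ is one backward visit from $\en$. The test $\en<\ex$ is one visit. Linkability costs $O(m)$ by Remark~\ref{rem:linkable}. The checks $X' \leq N$ and $X \leq N'$ would each reuse $MinFoll$ together with one visit, via Proposition~\ref{prop:Tu}.

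\textbf{Main obstacle.} The delicate point is the step ``$\en' \geq X$ implies $\en' \geq w$''. It requires that the least upper bound returned by $MinFoll$ lies below \emph{every} upper bound of $X$, not merely that it is minimal among them. I would justify this through Proposition~\ref{prop:Tu}, applied with $A=X$ and $B=\{w,\en'\}$ together with the specification of $MinFoll$, or else prove directly that in a series-parallel DAG the set of common upper bounds of $X$ has a unique minimum.
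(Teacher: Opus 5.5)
Your proposal is correct and follows essentially the same route as the paper. Both arguments justify Algorithm~\ref{alg:Salvo} branch by branch, resting on Proposition~\ref{prop:Tu} and the least-upper-bound property of $MinFoll$, and both realize each check by $MinFoll$, a single forward or backward visit, or the linear series-parallel test. Your two-directional argument (impactful additions pass line~\ref{algS:check} with the right pair, and passing it implies impactful) is more explicit than the paper's. The ``main obstacle'' you flag needs no extra work, since Lemma~\ref{minFollCorrect} already states that $w \leq w'$ for every $w'$ with $A \leq w'$.
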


To see how Algorithm \ref{alg:Salvo} works, let us consider the examples in Figures \ref{fig:impacteless}, \ref{fig:impactful}, \ref{fig:red}:
\begin{itemize}
 \item For the network in Fig.~\ref{fig:impacteless}, we have $N=\{b,t\}$ and $X=\{s,a\}$ (these are obtained by running Algorithm \ref{alg:RCC} on such a graph and its MIS). 
Thus, Algorithm \ref{alg:Salvo} sets $w=a$ in line \ref{algS:Tleqw} and $N'=\{b,t\}$ in line \ref{algS:impless1}; since $|N|=|N'|$, in line \ref{algS:impless2} the algorithm  correctly returns {\sc Impactless}.

\item For the network in Fig.~\ref{fig:impactful}, we have $N=\{a,t\}$ and $X=\{s,b\}$; then, Algorithm \ref{alg:Salvo} sets $w=b$ in line \ref{algS:Tleqw}, $N'=\{t\}$ in line \ref{algS:impless1}, $\en=a$ in line \ref{algS:impactful}, $X'=\{s\}$ in line \ref{algS:X'}, and $\ex=b$ in line \ref{algS:ex}. Since $a<b$, $a$ and $b$ are linkable, $\{s\}\le\{a,t\}$ and $\{s,b\}\le\{t\}$, in line \ref{algs:returnimpactful} the algorithm correctly returns {\sc Impactful}.

\item For the network in Fig.~\ref{fig:red}(c), we have $N=\{s,a\}$ and $X=\{b\}$; then, Algorithm \ref{alg:Salvo} sets $w=b$ in line \ref{algS:Tleqw} and $N'=\emptyset$ in line \ref{algS:impless1}. In line \ref{algS:destr2} the algorithm correctly returns {\sc Destructive}, since $|N'|=0<1=|N|-1$.

\item For the network in Fig.~\ref{fig:red}(d), we have $N=\{a\}$ and $X=\{b,f\}$; then, Algorithm \ref{alg:Salvo} sets $w=b$ in line \ref{algS:Tleqw},
$N'=\emptyset$ in line \ref{algS:impless1}, $\en=a$ in line \ref{algS:impactful}, and $X'=\emptyset$ in line \ref{algS:X'}. Since $|X'|=0<1=|X|-1$, in line \ref{algS:destr4} the algorithm correctly returns {\sc Destructive}.

\item For the network in Fig.~\ref{fig:red}(a), we have $N=\{a\}$ and $X=\{b\}$; then, Algorithm \ref{alg:Salvo} sets $w=b$ in line \ref{algS:Tleqw},  $N'=\emptyset$ in line \ref{algS:impless1}, $\en=a$ in line \ref{algS:impactful}, $X'=\emptyset$ in line \ref{algS:X'} and $\ex=b$ in line \ref{algS:ex}. Since $a$ and $b$ are not linkable, in line \ref{algS:destr5} the algorithm correctly returns {\sc Destructive}.

\item For the network in Fig.~\ref{fig:red}(b), the behavior is analogous to the previous case.
\end{itemize}

We are left with the task of linearly finding a node $w$ that follows all the nodes of a given set $A$.
This is what Algorithm \ref{alg:minFoll} does: it finds a minimal (w.r.t. ‘$\leq$') node $w$ such that $A\leq w$,
by adapting the algorithm from \cite{VTL82} for deciding whether a given graph is two-terminal series-parallel or not.
Indeed, lines~\ref{startTarjan}--\ref{endTarjan} (excluding lines 10 and 11) are the linear-time algorithm by Valdes et al. \cite{VTL82} to recognize whether a given graph is series-parallel; since we assume that we are running $MinFoll$ over such a kind of graph, we are also sure that we always find a $v$ with in-degree and out-degree equal to 1, unless the graph is one single edge from $s$ to $t$.

\begin{algorithm}[t]
    \caption{$MinFoll(G, A)$ \\ Returns the minimum (w.r.t.\,‘$\leq$') vertex that follows $A \subseteq V$ in the $st$-series-parallel graph $G = (V,E)$} 
    \label{alg:minFoll}
    \begin{algorithmic}[1]
        \ForAll{$v\in V$}
    	\If{$v \in A$}
	\State $np_A[v] = 1$
	\Else\ $np_A[v] = 0$
	\EndIf
        \EndFor
	\State Let $\hat G = (\hat V, \hat E)$, with $\hat V = V$ and $\hat E = E$ \label{startTarjan}
        \State Remove from $\hat E$ all parallel edges
        \While{$\hat G \neq s \rightarrow t$} 
        \State Let $v \in \hat V$ be such that $in\_degree_{\hat G}(v) = out\_degree_{\hat G}(v) = 1$
        \State Let $(u , v)$ and $(v,w)$ belong to $\hat E$
        \State $np_A[w]$ += $np_A[v]$
        \State $\mathit{ref}[v] = u$
        \State Cancel $v$ from $\hat V$ and $(u , v)$ and $(v,w)$ from $\hat E$
        \If{$(u,w) \not\in \hat E$}
        \State add $(u,w)$ to $\hat E$ \label{endTarjan}
        \EndIf
        \EndWhile
        \State $prec_A[s] = np_A[s]$
        \State $prec_A[t] = |A|$
        \ForAll{$v \in V$ touched during a topological visit of $G$}
        \If{$v \not\in \{s,t\}$}
        \State $prec_A[v] = np_A[v] + prec_A[\mathit{ref}[v]]$
        \EndIf
        \If{$prec_A[v] = |A|$}
        \State \Return $v$
        \EndIf
        \EndFor
    \end{algorithmic}
    \end{algorithm}

We exploit three data structures:
\begin{itemize}
\item $\mathit{ref}[v]$ is the immediate predecessor of $v$ in $\hat G$ at the moment of its cancellation (see line 12);
\item $np_A[v]$ counts the elements of $A$ that are ‘covered' by $v$ in all paths of the original graph $G$ that start from $\mathit{ref}[v]$ and arrive into $v$ (excluding $\mathit{ref}[v]$, if this vertex belongs to $A$);
\item $prec_A[v]$ counts the elements of $A$ that are ‘covered' by $v$; this number is obtained in the final topological visit of the original (series-parallel) graph $G$ by adding $np_A[v]$ and $prec_A[\mathit{ref}[v]]$.
\end{itemize}

\begin{lemma}
\label{lem:npA}
If $\mathit{ref}[v] = u$, then $np_A[v] = |\{a \in A \,:\, u < a \leq v \text{ in } G\}|$.
\end{lemma}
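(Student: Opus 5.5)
We prove the statement by induction on the number of iterations of the while loop of Algorithm~\ref{alg:minFoll}, maintaining an invariant on the current reduced graph $\hat G$. For every edge $(u,w)$ of $\hat G$ we let $G_{uw}$ denote the subgraph of $G$ that this edge currently \emph{represents}. Initially $G_{uw}$ is the single edge (or bundle of parallel edges) between $u$ and $w$. When a series reduction through $v$ is performed, the new edge $(u,w)$ represents $G_{uv}\cup G_{vw}$. If $(u,w)$ was already in $\hat E$, it represents the union with its old subgraph; this is the implicit parallel reduction.

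The invariant we maintain has two parts. First, for every $w\in\hat V$ other than $s$, the value $np_A[w]$ counts the elements $a\in A$ with $a\neq w$ that are internal vertices of some subgraph $G_{uw}$ represented by an edge entering $w$, plus $1$ if $w\in A$. Second, the counter is frozen at the moment $w$ is cancelled. In words: $np_A[w]$ counts the elements of $A$ that have been absorbed into the series-parallel pieces ending at $w$, together with $w$ itself.

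We check the base case and each step.
\begin{itemize}
\item \textbf{Base case.} No internal vertices exist yet, so $np_A[w]=[w\in A]$, as initialised.
\item \textbf{Series step.} We cancel $v$ with unique neighbours $(u,v)$ and $(v,w)$. The internal vertices of the new piece $G_{uv}\cup G_{vw}$ are those of $G_{uv}$ together with $v$ itself, plus those of $G_{vw}$. The latter are already accounted for in $np_A[w]$. By the invariant, $np_A[v]$ counts exactly the first group, so the update $np_A[w]\mathrel{+}=np_A[v]$ restores the invariant.
\item \textbf{Parallel step.} Merging $(u,w)$ into an existing edge only unions the represented subgraphs, and the counts are additive because internal vertex sets of distinct pieces are disjoint.
\end{itemize}

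It remains to connect these counts with the order $<$ at the moment $v$ is cancelled with $\mathit{ref}[v]=u$. At that point $v$ has in-degree $1$ in $\hat G$, so all the pieces entering $v$ have been merged into a single $uv$-series-parallel subgraph $G_{uv}$. The invariant then gives $np_A[v]=|\{a\in A: a \text{ internal to } G_{uv}\}|+[v\in A]$. We must show that this set coincides with $\{a\in A: u<a\leq v\}$. This is the main obstacle, and we handle it with the standard decomposition property of TTSP graphs: a subgraph produced by the reduction is a two-terminal component attached to the rest of $G$ only at its terminals $u$ and $v$.

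\begin{itemize}
\item \textbf{($\subseteq$).} Let $x$ be an internal vertex of $G_{uv}$. Every vertex of a two-terminal series-parallel graph lies on a terminal-to-terminal simple path, so some simple $uv$-path passes through $x$. Since $G$ is irredundant, $u$ lies on a simple $su$-path, and by the separation property this path avoids the inside of $G_{uv}$. Symmetrically, a simple $vt$-path exists that also avoids it. Concatenating these three paths yields a simple $st$-path in which $u<x<v$.
\item \textbf{($\supseteq$).} Let $a$ satisfy $u<a\leq v$. If $a=v$ we are done. Otherwise some simple $st$-path visits $u$, then $a$, then $v$. Suppose $a$ were outside $G_{uv}$. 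Since $\{u,v\}$ separates the interior of $G_{uv}$ from the rest of $G$, the subpath from $u$ to $a$ leaves through $u$ or $v$. It cannot leave through $u$, which is simple and already used. Hence $v$ must appear before $a$, so the $st$-path visits $v$ before $a$, which contradicts the order $a<v$ along that path. Therefore $a$ is internal to $G_{uv}$.
\end{itemize}

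For this argument we rely on two facts. The first is that $\leq$ is well defined on $MIS(G)$, which is acyclic. The second is that the separation property holds for the pieces of the Valdes et al.\ reduction~\cite{VTL82}; we quote it from there.
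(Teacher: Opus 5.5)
Your counting invariant for $np_A$ is correct and follows the paper's route. Your pieces $G_{xw}$ play the role of the paper's sets $precE[x,w]$, and the disjointness you assert is what $\varphi_3$ and $\varphi_4$ provide. It follows easily once you note that the pieces partition $E_G$ and their interiors partition $V\setminus\hat V$. The direction $(\subseteq)$ is also fine; acyclicity alone suffices there.

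The gap is in $(\supseteq)$. Your argument never uses the hypothesis that makes the identification true, namely that $(u,v)$ is the \emph{only} edge of $\hat E$ entering $v$ when $v$ is cancelled. It uses only that $\{u,v\}$ separates the interior of $G_{uv}$ from the rest of $G$, and from this alone the claim is false. Take $s\to u$, $u\to v$, $u\to a$, $a\to v$, $v\to t$. At the start of the loop $G_{uv}$ is the single edge $(u,v)$ with empty interior, yet $u<a<v$. The faulty step is ``the subpath from $u$ to $a$ leaves through $u$ or $v$''. Since $u$ is a terminal of the piece, that subpath may begin with an edge of $G$ outside $G_{uv}$ and never meet its interior, so there is nothing to leave.

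The repair is to argue backward from $v$ instead of forward from $u$.
First, every edge of $G$ entering $v$ lies in $G_{uv}$. Such an edge lies in exactly one piece, and $v\in\hat V$ must be a terminal of that piece. It cannot be a piece $G_{vz}$: every vertex of $G_{vz}$ is reachable from $v$, so the edge would close a cycle. Hence it lies in some $G_{yv}$ with $(y,v)\in\hat E$, and in-degree $1$ forces $y=u$.
Second, let $Q$ be the subpath from $a$ to $v$ of a simple $st$-path visiting $u$, $a$, $v$ in this order; such a path exists by Proposition~\ref{prop:leq}. The last edge of $Q$ lies in $G_{uv}$. Walking backward, each edge of $Q$ entering an interior vertex of $G_{uv}$ also lies in $G_{uv}$, by separation. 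So every tail met this way is in $V(G_{uv})$. It is not $v$, by simplicity, and not $u$, which precedes $a$ and so is not on $Q$; hence it is interior.
Therefore every vertex of $Q$ other than $v$, in particular $a$, is interior to $G_{uv}$. This is exactly the fact the paper invokes in its step $(\star)$: since $(u,v)$ is the only edge of $\hat E$ entering $v$, no path of $G$ from $u$ to $v$ touches another vertex of $\hat V$.
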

\iffull
\begin{proof}
See Appendix~\ref{app:proof}.
\end{proof}
\fi

\begin{proposition}
\label{prop:prec}
If $prec_A[v]$ is defined, then $prec_A[v] = |\{a \in A \,:\, a \leq v \text{ in } G\}|$.
\end{proposition}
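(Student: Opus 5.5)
We argue by induction on the order in which the final topological visit of $G$ defines $prec_A$, relying on Lemma~\ref{lem:npA}. The base cases are the terminals. For $s$, note that $s \le a$ for every $a$ in the $st$-series-parallel graph, so the only $a\in A$ with $a\le s$ is $s$ itself; hence $|\{a\in A : a\le s\}| = [s\in A] = np_A[s]$, since $s$ is never cancelled and keeps its initial value. For $t$, every vertex satisfies $a\le t$, so the count is $|A|$, which is exactly the value assigned. For an internal vertex $v$, let $u=\mathit{ref}[v]$. The vertex $u$ is a predecessor of $v$ in $\hat G$, and edges of $\hat G$ correspond to $uv$-subgraphs of $G$, so $u<v$ in $G$. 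Therefore $u$ is visited before $v$ in any topological order, and $prec_A[u]$ is already defined; by induction it equals $|\{a\in A : a\le u\}|$.

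It then suffices to prove the set identity
\[
\{a\in A : a\le v\} \;=\; \{a\in A : a\le u\} \;\uplus\; \{a\in A : u<a\le v\}.
\]
Adding the induction hypothesis and Lemma~\ref{lem:npA} then gives $prec_A[v]=np_A[v]+prec_A[u]$, which is the claimed value. Disjointness is immediate, because $a\le u$ and $u<a$ cannot both hold ($<$ is a strict order on an acyclic graph). The inclusion from right to left follows from transitivity of $\le$. In a series-parallel DAG, $x<y$ is equivalent to the existence of a directed path from $x$ to $y$, since every vertex lies on an $st$-path and paths compose into simple ones in a DAG. So transitivity is clear.

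The main obstacle is the inclusion from left to right. We must show that any $a\in A$ with $a\le v$ satisfies either $a\le u$ or $u<a$; in other words, no $a$ below $v$ can be incomparable with $u$. To do this, we use the structure of the reduction. At the moment $v$ is cancelled, it has in-degree $1$ in $\hat G$ with unique predecessor $u$. Every edge $(x,v)$ of $\hat G$ represents a contracted $xv$-series-parallel piece of $G$. So in $G$, every path entering $v$ from outside the already-contracted region passes through $u$, and this makes $u$ a dominator of $v$ restricted to what has been reduced.

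More precisely, we would prove the following invariant: each edge $(x,y)$ of $\hat G$ stands for a sub-series-parallel graph $G_{xy}$ of $G$ with terminals $x,y$, and every path from $s$ to an internal vertex of $G_{xy}$ enters through $x$. For a vertex $a$ with a path from $a$ to $v$, two cases arise. Either $a$ lies inside the region contracted into the edge $(u,v)$, in which case $u<a$. Or the path from $a$ reaches $v$ only via $u$, since $u$ is the sole in-neighbour, in which case $a\le u$. Establishing this invariant along the loop (lines~\ref{startTarjan}--\ref{endTarjan}) is the delicate step. Its two subcases are the series contraction of $v$ and the removal of duplicate parallel edges, and in the second one the merged region must still be entered only via $x$.
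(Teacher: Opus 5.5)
Your proof is correct and follows the paper's argument. Both use induction along the topological visit, with $s$ and $t$ as base cases and an inductive step combining Lemma~\ref{lem:npA}, the induction hypothesis at $u=\mathit{ref}[v]$, and the partition $\{a\in A : a\le v\} = \{a\in A : a\le u\}\uplus\{a\in A : u<a\le v\}$. The paper simply asserts that partition, along with the facts that $np_A[s]$ is never changed by the loop and that $u$ precedes $v$ in the topological order. Your disjointness argument, the easy inclusion, and the locality invariant for the contracted $\hat G$-edges (the region behind $(u,v)$ can only be entered through $u$, so every $a\le v$ is either $\le u$ or strictly above $u$) supply the justification the paper leaves implicit.
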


The overall cost of Algorithm~\ref{alg:minFoll} is linear, since the {\bf while} loop is essentially the linear-time algorithm by \cite{VTL82}, together with two extra constant-time operations, followed by a topological visit, that is linear too.
By this and Propositions \ref{prop:Tu} and \ref{prop:prec}, we have the following.

\begin{lemma}[Correctness and Complexity]
\label{minFollCorrect}
Let $G$ be $st$-series-parallel and $A \subseteq V_G$; if $MinFoll(G, A)$ returns $w$, then $A \leq w$ and, for all $w'$ such that $A \leq w'$, it holds that $w \leq w'$. Moreover, its cost is $O(m)$.
\end{lemma}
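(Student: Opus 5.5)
We need to prove Lemma~\ref{minFollCorrect}: correctness (the returned $w$ is the minimum follower of $A$) and linear cost. The plan is to rely almost entirely on Proposition~\ref{prop:prec}, which says that when $prec_A[v]$ is defined it equals $|\{a\in A : a\le v\}|$. Because $G$ is acyclic, these relations are taken in $G$ itself.

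\emph{Soundness.} The algorithm returns $v$ only when $prec_A[v]=|A|$. By Proposition~\ref{prop:prec} this means every $a\in A$ satisfies $a\le v$, that is, $A\le v$. I would first check that $prec_A$ is defined whenever it is read. The visit is topological, so $\mathit{ref}[v]$, an immediate predecessor of $v$ in some reduced graph $\hat G$, precedes $v$ in $G$ and has already been processed; $prec_A[s]$ and $prec_A[t]$ are set explicitly. Termination with a return is guaranteed because $t$ is visited last and $prec_A[t]=|A|$. This is consistent, since every vertex of an $st$-connected series-parallel graph satisfies $a\le t$.

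\emph{Minimality.} This is the step I expect to be the main obstacle. Let $w$ be the first vertex in topological order with $prec_A[w]=|A|$, and let $w'$ be any vertex with $A\le w'$. Since $prec_A[w']=|A|$ and $w$ was returned first, $w$ comes before $w'$ in the topological order (or $w=w'$). But topological precedence does not by itself give $w\le w'$. I would close this gap using the series-parallel structure through Proposition~\ref{prop:Tu}. Suppose $w$ and $w'$ are incomparable. Take the smallest series-parallel component $C$ that contains both; it must be a parallel composition with $w$ and $w'$ lying in distinct parallel branches, between terminals $p<q$. Any $a\le w$ lies either in $w$'s branch or at or before $p$, and any $a\le w'$ lies either in $w'$'s branch or at or before $p$. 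Because $A\le w$ and $A\le w'$, every $a\in A$ lies in both, so every $a\in A$ lies at or before $p$. Hence $A\le p$, and since $p<w$, the vertex $p$ precedes $w$ topologically and would have been returned earlier. This contradicts the choice of $w$. So $w$ and $w'$ are comparable, and the topological order then forces $w\le w'$.

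The degenerate cases are handled separately. If $A=\emptyset$, then $prec_A[s]=0=|A|$ and $s$ is returned, which is indeed the minimum. If $w=w'$, there is nothing to prove.

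\emph{Cost.} The initialization of $np_A$ is $O(n)$. The \textbf{while} loop is the Valdes--Tarjan--Lawler reduction, which is linear, and it is augmented only by the constant-time updates of $np_A[w]$ and $\mathit{ref}[v]$. The final pass is a topological visit with $O(1)$ work per vertex. The total is therefore $O(n+m)=O(m)$, since $G$ is $st$-connected.
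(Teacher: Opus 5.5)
Your proof is correct, and its skeleton matches the paper's. Soundness comes from Proposition~\ref{prop:prec}. Since $w$ is the first vertex in topological order with $prec_A=|A|$, the case $w'<w$ is excluded. The incomparable case is refuted by exhibiting a vertex strictly below $w$ that already dominates $A$. The cost is inherited from the Valdes--Tarjan--Lawler reduction.

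The only real difference is how you obtain that dominating vertex. The paper applies Proposition~\ref{prop:Tu} to $A\le\{w,w'\}$, which yields some $w''$ with $A\le w''\le\{w,w'\}$; $w''\neq w,w'$ holds by incomparability. You instead construct the witness explicitly: it is the source terminal $p$ of a minimal component of the series-parallel decomposition that contains both $w$ and $w'$.

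The paper's route is shorter, because Proposition~\ref{prop:Tu} is needed anyway for Algorithm~\ref{alg:Salvo}. Yours is self-contained and avoids the Wheatstone-embedding argument. It also handles an arbitrary set $A$ in one stroke, whereas the paper's proof of Proposition~\ref{prop:Tu} is carried out with two elements on each side.

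The price is that you use structural facts about decomposition trees without proving them, and they deserve a sentence each:
\begin{itemize}
\item A minimal common component cannot be a series composition. Otherwise $w$ and $w'$ would lie in different factors, separated by a middle terminal, and so would be comparable.
\item Neither $w$ nor $w'$ is a terminal of that component, since every vertex $x$ of a component with terminals $p,q$ satisfies $p\le x\le q$. Hence $w$ and $w'$ are internal to distinct branches.
\item Any path from outside a branch to one of its internal vertices must enter through $p$, because entering through $q$ would close a cycle. This is what justifies your claim that every $a\le w$ is either in $w$'s branch or satisfies $a\le p$.
\end{itemize}

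You also add two details the paper leaves implicit. First, $prec_A[\mathit{ref}[v]]$ is always defined when it is read, since edges of $\hat G$ correspond to paths of $G$. Second, the final loop always returns, because $prec_A[t]=|A|$.
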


\begin{figure}[t]
\centering

\begin{minipage}[t]{0.24\textwidth}
\centering
\begin{tikzpicture}[scale=0.8, every node/.style={circle,draw, minimum size=4mm, inner sep=2pt}]
\node[fill=gray!20, label=above:\textcolor{red}{1}] (s) at (0,0) {$s$};
\node[label=above:\textcolor{red}{0}] (a) at (1,0) {$a$};
\node[fill=gray!20,label=above:\textcolor{red}{1}] (b) at (2,1) {$b$};
\node[label=above:\textcolor{red}{0}] (t) at (3,0) {$t$};
\node[fill=gray!20,label=above:\textcolor{red}{1}] (c) at (1,-1.3) {$c$};

\draw[->] (s)--(a);
\draw[->] (a)--(b);
\draw[->] (a)--(t);
\draw[->] (b)--(t);
\draw[->] (s)--(c);
\draw[->] (c)--(t);
\end{tikzpicture}

\vspace{2mm}
\parbox[c][1.5em][c]{\linewidth}{\centering (1)}
\end{minipage}
\begin{minipage}[t]{0.24\textwidth}
\centering
\begin{tikzpicture}[scale=0.8, every node/.style={circle,draw, minimum size=4mm, inner sep=2pt}]
\node[fill=gray!20,label=above:\textcolor{red}{1}] (s) at (0,0) {$s$};
\node[label=above:\textcolor{red}{0}] (a) at (1,0) {$a$};
\node[label=above:\textcolor{red}{1}] (t) at (3,0) {$t$};
\node[fill=gray!20,label=above:\textcolor{red}{1}] (c) at (1,-1.3) {$c$};

\draw[->] (s)--(a);
\draw[->] (a)--(t);
\draw[->] (s)--(c);
\draw[->] (c)--(t);
\end{tikzpicture}

\vspace{2mm}
\parbox[c][1.5em][c]{\linewidth}{\centering (2)}
\end{minipage}
\begin{minipage}[t]{0.24\textwidth}
\centering
\begin{tikzpicture}[scale=0.8, every node/.style={circle,draw, minimum size=4mm, inner sep=2pt}]
\node[fill=gray!20,label=above:\textcolor{red}{1}] (s) at (0,0) {$s$};
\node[label=above:\textcolor{red}{0}] (a) at (1,0) {$a$};
\node[label=above:\textcolor{red}{2}] (t) at (3,0) {$t$};

\draw[->] (s)--(a);
\draw[->] (a)--(t);
\draw[->] (s) to[bend right] (t);
\end{tikzpicture}

\vspace{2mm}
\parbox[c][1.5em][c]{\linewidth}{\centering (3)}
\end{minipage}
\begin{minipage}[t]{0.24\textwidth}
\centering
\begin{tikzpicture}[scale=0.8, every node/.style={circle,draw, minimum size=4mm, inner sep=2pt}]
\node[fill=gray!20,label=above:\textcolor{red}{1}] (s) at (0,0) {$s$};
\node[label=above:\textcolor{red}{2}] (t) at (3,0) {$t$};

\draw[->] (s)--(t);
\end{tikzpicture}

\vspace{2mm}
\parbox[c][1.5em][c]{\linewidth}{\centering (4)}
\end{minipage}

\caption{Calculation of $np_A$ ({\bf while} loop of Algorithm~\ref{alg:minFoll}); nodes of $A$ are in grey and values of $np_A$ are in red.}
\label{ex:alg3}
\end{figure}

\begin{figure}[t]
\centering

\begin{minipage}[t]{0.24\textwidth}
\centering
\begin{tikzpicture}[
scale=0.8,
vertex/.style={circle,draw, minimum size=4mm, inner sep=2pt},
labelnode/.style={draw=none, fill=none, font=\scriptsize}
]

\node[fill=gray!20,vertex] (s) at (0,0) {$s$};
\node[labelnode,text=blue] at (0,0.5) {1};

\node[vertex] (a) at (1,0) {$a$};
\node[labelnode,text=red] at (1,0.5) {$0$};

\node[fill=gray!20,vertex] (b) at (2,1) {$b$};
\node[labelnode,text=red] at (2,1.5) {$1$};

\node[vertex] (t) at (3,0) {$t$};
\node[labelnode,text=blue] at (3,0.5) {3};

\node[fill=gray!20,vertex] (c) at (1,-1.5) {$c$};
\node[labelnode,text=red] at (1,-0.9) {$1$};

\draw[->] (s)--(a);
\draw[->] (a)--(b);
\draw[->] (a)--(t);
\draw[->] (b)--(t);
\draw[->] (s)--(c);
\draw[->] (c)--(t);
\draw [->,bend left= 30, blue, dashed] (a) to (s);
\draw [->,bend left= 30, blue, dashed] (c) to (s);
\draw [->,bend left= 30, blue, dashed] (b) to (a);

\end{tikzpicture}

\vspace{2mm}
\parbox[c][1.5em][c]{\linewidth}{\centering (1)}
\end{minipage}
\begin{minipage}[t]{0.24\textwidth}
\centering
\begin{tikzpicture}[
scale=0.8,
vertex/.style={circle,draw, minimum size=4mm, inner sep=2pt},
labelnode/.style={draw=none, fill=none, font=\scriptsize}
]

\node[fill=gray!20,vertex] (s) at (0,0) {$s$};
\node[labelnode,text=blue] at (0,0.5) {$1$};

\node[vertex] (a) at (1,0) {$a$};
\node[labelnode,text=blue] at (1,0.5) {$1$};

\node[fill=gray!20,vertex] (b) at (2,1) {$b$};
\node[labelnode,text=red] at (2,1.5) {$1$};

\node[vertex] (t) at (3,0) {$t$};
\node[labelnode,text=blue] at (3,0.5) {$3$};

\node[fill=gray!20,vertex] (c) at (1,-1.5) {$c$};
\node[labelnode,text=red] at (1,-0.9) {$1$};

\draw[->] (s)--(a);
\draw[->] (a)--(b);
\draw[->] (a)--(t);
\draw[->] (b)--(t);
\draw[->] (s)--(c);
\draw[->] (c)--(t);
\draw [->,bend left= 30, blue, dashed] (c) to (s);
\draw [->,bend left= 30, blue, dashed] (b) to (a);

\end{tikzpicture}

\vspace{2mm}
\parbox[c][1.5em][c]{\linewidth}{\centering (2)}
\end{minipage}
\begin{minipage}[t]{0.24\textwidth}
\centering
\begin{tikzpicture}[
scale=0.8,
vertex/.style={circle,draw, minimum size=4mm, inner sep=2pt},
labelnode/.style={draw=none, fill=none, font=\scriptsize}
]

\node[fill=gray!20,vertex] (s) at (0,0) {$s$};
\node[labelnode,text=blue] at (0,0.5) {1};

\node[vertex] (a) at (1,0) {$a$};
\node[labelnode,text=blue] at (1,0.5) {1};

\node[fill=gray!20,vertex] (b) at (2,1) {$b$};
\node[labelnode,text=red] at (2,1.5) {$1$};

\node[vertex] (t) at (3,0) {$t$};
\node[labelnode,text=blue] at (3,0.5) {3};

\node[fill=gray!20,vertex] (c) at (1,-1.5) {$c$};
\node[labelnode,text=blue] at (1,-1) {2};

\draw[->] (s)--(a);
\draw[->] (a)--(b);
\draw[->] (a)--(t);
\draw[->] (b)--(t);
\draw[->] (s)--(c);
\draw[->] (c)--(t);
\draw [->,bend left= 30, blue, dashed] (b) to (a);

\end{tikzpicture}

\vspace{2mm}
\parbox[c][1.5em][c]{\linewidth}{\centering (3)}
\end{minipage}
\begin{minipage}[t]{0.24\textwidth}
\centering
\begin{tikzpicture}[
scale=0.8,
vertex/.style={circle,draw, minimum size=4mm, inner sep=2pt},
labelnode/.style={draw=none, fill=none, font=\scriptsize, text=blue}
]

\node[fill=gray!20,vertex] (s) at (0,0) {$s$};
\node[labelnode] at (0,0.5) {1};

\node[vertex] (a) at (1,0) {$a$};
\node[labelnode] at (1,0.5) {1};

\node[fill=gray!20,vertex] (b) at (2,1) {$b$};
\node[labelnode] at (2,1.5) {2};

\node[vertex] (t) at (3,0) {$t$};
\node[labelnode] at (3,0.5) {3};

\node[fill=gray!20,vertex] (c) at (1,-1.5) {$c$};
\node[labelnode] at (1,-0.9) {2};

\draw[->] (s)--(a);
\draw[->] (a)--(b);
\draw[->] (a)--(t);
\draw[->] (b)--(t);
\draw[->] (s)--(c);
\draw[->] (c)--(t);

\end{tikzpicture}

\vspace{2mm}
\parbox[c][1.5em][c]{\linewidth}{\centering (4)}
\end{minipage}

\caption{Calculation of $prec_A$ (second {\bf forall} loop of Algorithm~\ref{alg:minFoll}); nodes of $A$ are in grey, values of $np_A$ are in red, values of $prec_A$ are in blue,  and values of $ref$ are dashed blue arrows.}
\label{ex:2alg3}
\end{figure}

We now show how the algorithm works on a simple example. 
Consider the net in Figure~\ref{ex:alg3}(1) and let $A=\{s,b,c\}$ (depicted as gray nodes); the initial values of $np_A$, as calculated by the first {\bf forall}, are in red over each node. 
The algorithm then enters into its {\bf while} loop.
In the first iteration, two nodes can be eliminated, $b$ and $c$, and let us assume the algorithm chooses $b$: then, $\mathit{ref}[b]=a$, $np_A(t)=0+1=1$ and $(a,t)$ is not added to $\hat{G}$, since it is already present. This yields the net in Figure~\ref{ex:alg3}(2). 
In the second iteration, the algorithm can choose either $a$ or $c$, say that it chooses $c$; then, $\mathit{ref}[c]=s$, $np_A(t)=1+1=2$ and $(s,t)$ is added to $\hat{G}$, since it is not present. This yields the net in Figure~\ref{ex:alg3}(3). 
Finally, the only node that the algorithm can choose is $a$; then, $\mathit{ref}[a]=s$, $np_A(t)=2+0=2$ and $(s,t)$ is not added to $\hat{G}$. 
This yields the net in Figure~\ref{ex:alg3}(4); having obtained the single edge $(s,t)$, the {\bf while} terminates. 
Then, the algorithm calculates $prec_A$, as shown in Figure~\ref{ex:2alg3}.
Initially, $prec_A[s]=np_A[s]=1$ and $prec_A[t]=|A|=3$, as shown in Figure~\ref{ex:2alg3}(1), where we depict in blue the values of $prec_A$ and in blue dashed arrows the values of $\mathit{ref}$. 
The algorithm then performs a topological visit of $G$. 
In the first iteration of the second {\bf forall}, the algorithm can visit $a$ or $c$; assuming that it chooses $a$, we obtain $prec_A(a)=np_A(a)+prec_A(s)=0+1=1$, see Figure~\ref{ex:2alg3}(2). 
In the second iteration, it can choose $b$ or $c$; assume $c$ and let $prec_A(c)=np_A(c)+prec_A(s)=1+1=2$, see Figure~\ref{ex:2alg3}(3). 
Then, the only choice is $b$ and let $prec_A(b)=np_A(b)+prec_A(a)=1+1=2$, see Figure~\ref{ex:2alg3}(4), where all $prec_A$ are correctly computed. 
In the very last iteration, the algorithm visits $t$ and, being this the first node that has $prec_A$ that equals $|A|$, the algorithm correctly returns $t$.

\subsection{The Main Algorithm}
\label{sec:mainAlgo}
%

Given a non-vulnerable net $(G,s,t)$, Algorithm~\ref{alg:vulnGue} states if $(G\cup \{e\},s,t)$ is vulnerable, where $e$ is the new edge. 
First, we compute $RCC(e)$, while saving all the entries and exits in $\EN$ and $\EX$.
Then, we check if the addition of $e$ is impactless, destructive or impactful.
In the first case, we return {\sc No}: the resulting graph remains non-vulnerable and maintains the very same MIS.
In the second case, we return {\sc Yes}: the new graph becomes vulnerable.
In the third case, we have to run \MS\ on the $\en\ex$-net $RCC(e)$, where $(\en,\ex)$ is the entry-exit pair for $e$. If this call returns {\sc Yes}, we also return {\sc Yes}; otherwise, we return {\sc No} and we use the MIS of $RCC(e)$ returned by \MS\ to build the MIS of $G \cup \{e\}$.

\begin{algorithm}[t]
    \caption{$IncrVuln(G,mis,e)$}
    \begin{algorithmic}
        \State $(rcc,\EN,\EX)=RCC(G,e)$
        \State ($answer$, $\en,\ex$) $=\mathit{TypeOfAdd}(G, mis,\EN,\EX)$
        \If{$answer$ = {\sc Impactless}} \Return ({\sc No}, $G \cup \{e\}$, $mis$)
        \ElsIf{$answer$ = {\sc Destructive}} \Return {\sc Yes}
        \ElsIf{$answer$ = {\sc Impactful}} 
            \If{\MS$(rcc,\en,\ex$) = {\sc Yes}} \Return {\sc Yes}
            \Else \State Let $M$ be the MIS returned by \MS\ on $rcc$ 
            \State \Return ({\sc No}, $G \cup \{e\}$, $mis \cup M$)
            \EndIf
        \EndIf
    \end{algorithmic}
    \label{alg:vulnGue}
\end{algorithm}

\begin{theorem}[Correctness]
\label{thm:correct}
Assuming that $G$ is not vulnerable and that we already know its MIS, $IncrVuln$ states correctly if $G\cup \{e\}$ is vulnerable (answer {\sc Yes}) or not  (answer {\sc No}), for every $e \not\in E_G$; moreover, in case it is not vulnerable, it also correctly calculates its MIS.
\end{theorem}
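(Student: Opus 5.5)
The plan is a case analysis along the three branches of $IncrVuln$. Each branch is justified by the lemmas of Section~3, together with the correctness of the sub-procedures.

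First, by Proposition~\ref{prop:costRCC} and the description of Algorithm~\ref{alg:RCC}, the call $RCC(G,e)$ returns exactly $RCC(e)$ together with its entries $\EN$ and exits $\EX$, as in Definition~\ref{def:RCC}. The backward and forward visits stop at nodes of $MIS(G)$ and follow only non-MIS (hence redundant) edges, which matches conditions (1) and (2) of that definition. By Lemma~\ref{prop_cost-R-I-V}, $\mathit{TypeOfAdd}$ then correctly classifies the addition as impactless, impactful (with its entry-exit pair $(\en,\ex)$), or destructive, in the sense of Definition~\ref{def:I-I-D}. It therefore suffices to check that each branch produces the right answer.

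\emph{Impactless branch.} Lemma~\ref{redred} gives that $e$ is redundant and $MIS(G\cup\{e\})=MIS(G)$. Since $MIS(G)$ is series-parallel (as $G$ is non-vulnerable), the characterisation of vulnerability via the MIS recalled in Section~2 shows that $G\cup\{e\}$ is non-vulnerable. Hence returning {\sc No} together with $mis$ is correct.

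\emph{Destructive branch.} Lemma~\ref{vulnvuln} directly gives that $G\cup\{e\}$ is vulnerable, so returning {\sc Yes} is correct.

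\emph{Impactful branch.} This is the main one. By Lemma~\ref{irrvuln}, $G\cup\{e\}$ is vulnerable if and only if the $\en\ex$-net $RCC(e)$ is vulnerable. The correctness of \MS\ \cite{MS23} then gives the right {\sc Yes}/{\sc No} answer. In the {\sc No} case, \MS\ also returns the MIS $M$ of the $\en\ex$-net $rcc$, and we must argue that $mis\cup M = MIS(G\cup\{e\})$. By Corollary~\ref{cor:newMIS}, $MIS(G\cup\{e\})$ is $MIS(G)$ plus exactly those edges and vertices of $RCC(e)$ that become irredundant in $G\cup\{e\}$. Lemma~\ref{irredirred} says these are precisely the edges of $RCC(e)$ that are irredundant when $RCC(e)$ is seen as an $\en\ex$-net, i.e. exactly $M$.

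The delicate point, and the one to handle most carefully, is this last identification: "irredundant in $G\cup\{e\}$" must coincide with "irredundant in the $\en\ex$-net $RCC(e)$". I would spell it out in both directions.

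\emph{From $RCC(e)$ to $G\cup\{e\}$.} A simple $\en\ex$-path in $RCC(e)$ meets $MIS(G)$ only at its extremes, since internal nodes of $RCC(e)$ are outside $MIS(G)$. Because $\en<\ex$, there is a simple $st$-path in $MIS(G)$ through $\en$ and then $\ex$. We splice the $RCC(e)$ path in place of its $\en\ex$ segment, and the result is still a simple $st$-path.

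\emph{From $G\cup\{e\}$ to $RCC(e)$.} Conversely, a simple $st$-path of $G\cup\{e\}$ through an edge of $RCC(e)$ must enter and leave $RCC(e)$ through an entry and an exit. The uniqueness clause in the impactful definition forces every other entry/exit pair to satisfy $\en'\ge\ex'$, so any such use would create a cycle with the MIS portion. Hence the segment must run from $\en$ to $\ex$, giving a simple $\en\ex$-path inside $RCC(e)$. This is essentially the content of Lemma~\ref{irredirred}, which I would invoke rather than reprove, completing the argument.
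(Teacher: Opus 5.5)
Your proposal follows the paper's proof. It uses the same three-way split on the output of $\mathit{TypeOfAdd}$, settled by Lemmas~\ref{redred}, \ref{vulnvuln}, \ref{irrvuln} and Corollary~\ref{cor:newMIS}, and your two-direction argument makes explicit the identification of $M$ with the newly irredundant part of $RCC(e)$, which the paper takes for granted.

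One justification there is not right as written and should be tightened. $RCC(e)$ does contain $MIS(G)$-nodes other than $\en,\ex$, namely the remaining entries and exits. So "internal nodes of $RCC(e)$ are outside $MIS(G)$" does not by itself show that a simple $\en\ex$-path in $rcc$ avoids $MIS(G)$ internally. You need the impactful condition: such a node $m$ is entered and left by $rcc$ edges, so it would be both an exit and an entry. The pairs $(\en,m)$ and $(m,\ex)$ then force $\ex<m<\en$, contradicting $\en<\ex$.
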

\begin{proof}
    By Lemma~\ref{redred}, if the addition of $e$ is impactless, then $e$ is redundant; so, the addition of $e$ cannot create any new simple path. Being $G$ not vulnerable, $G\cup \{e\}$ is not vulnerable too and has the same MIS as before.
    
    By Lemma~\ref{vulnvuln}, if the addition of $e$ is destructive, then $G\cup \{e\}$ is vulnerable.
    
    By Lemma~\ref{irrvuln}, if the addition of $e$ is impactful with entry-exit pair $(\en,\ex)$, then $G\cup \{e\}$ is vulnerable if and only if the $\en\ex$-net $RCC(e)$ is vulnerable. In case it is not, the MIS of the new graph is exactly the old MIS extended with the MIS of $RCC(e)$ (see Corollary~\ref{cor:newMIS}).
\end{proof}

\label{subsec:Amortized Analysis}

\begin{proposition}
\label{prop:cost}
Let $G=(V,E)$ be a non-vulnerable graph with $|E|=m$.
    An impactless addition costs $O(m)$.
    An impactful addition of an edge $e$ costs $O(m+kl)$, where $l$ is the number of edges of $RCC(e)$ and $k$ is the number of edges in $RCC(e)$ that become irredundant after adding $e$ (including $e$ itself).
\end{proposition}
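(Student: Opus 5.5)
The plan is to walk through Algorithm~\ref{alg:vulnGue} line by line and bound each step with the results already established. In every case the algorithm first computes $RCC(G,e)$. This costs $O(m)$ by Proposition~\ref{prop:costRCC}, because all nodes and edges of $MIS(G)$ are marked. Note that after the addition the graph has $m+1$ edges, which only changes constants.

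Next comes $\mathit{TypeOfAdd}(G,mis,\EN,\EX)$, which costs $O(m)$ by Lemma~\ref{prop_cost-R-I-V}. That bound in turn rests on $MinFoll$ being linear (Lemma~\ref{minFollCorrect}), on the visits in lines~\ref{algS:impless1}, \ref{algS:X'} and~\ref{algS:check} being linear, and on the single $linkable$ test being linear (Remark~\ref{rem:linkable}). For an impactless addition the algorithm then returns $mis$ unchanged. Its total cost is therefore $O(m)+O(m)=O(m)$. This assumes $mis$ and the marking are kept by reference rather than copied, so that returning them costs $O(1)$ or at worst $O(m)$.

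For an impactful addition, the same $O(m)$ covers the first two calls. The remaining work is the call \MS$(rcc,\en,\ex)$ and, if the answer is negative, the update $mis\cup M$. The net passed to \MS\ is $RCC(e)$ seen as an $\en\ex$-net, and it has $l$ edges. Its number of irredundant edges, with respect to $\en$ and $\ex$, is exactly $k$: these are the edges of $RCC(e)$ that become irredundant in $G\cup\{e\}$.

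The key step is to justify this identification, and I expect it to be the main obstacle. I would take it from Lemma~\ref{irredirred} together with Corollary~\ref{cor:newMIS}. These state that the edges of $RCC(e)$ that become irredundant in $G\cup\{e\}$ are precisely those that are irredundant in the $\en\ex$-net $RCC(e)$, and that they form a new parallel component between $\en$ and $\ex$. One direction is immediate: a simple $\en\ex$-path inside $RCC(e)$ extends to a simple $st$-path through $MIS(G)$, because $\en<\ex$ and the component is attached in parallel. For the other direction, I would argue as follows. A simple $st$-path of $G\cup\{e\}$ through an edge of $RCC(e)$ can only enter and leave $RCC(e)$ through its entries and exits. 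The impactful condition forces it to enter at $\en$ and exit at $\ex$, since every other entry--exit pair satisfies $\en'\ge\ex'$. So the portion of the path inside $RCC(e)$ is a simple $\en\ex$-path.

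Once this is in place, Proposition~\ref{prop:MS} gives that \MS\ costs $O(kl)$ on this net. Since $k\ge 1$ (the edge $e$ itself is irredundant by Lemma~\ref{irredirred}), this bound also absorbs the linear overhead of reading $rcc$. Finally, the union $mis\cup M$ only marks the $k$ new edges and their nodes, which costs $O(k)\subseteq O(kl)$. Summing up, an impactful addition costs $O(m)+O(kl)=O(m+kl)$.
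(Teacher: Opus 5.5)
Your argument is the paper's: $RCC$ and $\mathit{TypeOfAdd}$ cost $O(m)$ by Proposition~\ref{prop:costRCC} and Lemma~\ref{prop_cost-R-I-V}, and in the impactful case the only extra work is the call to \MS\ on the $\en\ex$-net $RCC(e)$, which Proposition~\ref{prop:MS} bounds. The paper takes for granted that $k$ counts the irredundant edges of that net, and an upper bound only needs your ``immediate'' direction. That direction holds because an internal node $z\in MIS(G)$ of a simple $\en\ex$-path in $RCC(e)$ would be both an entry and an exit, forcing $\ex<z<\en$. So you can drop the converse, whose sketch tacitly assumes that a simple $st$-path meets $RCC(e)$ in one contiguous stretch.
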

\begin{proof}
In the case of an impactless addition, the only costs are those of running $RCC$ and $\mathit{TypeOfAdd}$ (i.e., Algorithms~\ref{alg:RCC} and~\ref{alg:R-I-V}), and both are $O(m)$ (see Proposition~\ref{prop:costRCC} and Lemma~\ref{prop_cost-R-I-V}).
In the case of an impactful addition, there is also an extra cost due to the execution of \MS\ on $RCC(e)$, whose complexity is given by Proposition~\ref{prop:MS}. 
\end{proof}

\begin{theorem}[Amortized analysis]
\label{thm:anal}
Assume to start with an empty graph and to have a sequence of $m$ non-destructive edge additions; the amortized cost of every addition is $O(m)$.
\end{theorem}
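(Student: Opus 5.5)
The plan is to sum the per-addition costs given by Proposition~\ref{prop:cost} over the whole sequence and show that the total is $O(m^2)$; dividing by the $m$ additions then gives amortized cost $O(m)$. Let the additions be $e_1,\dots,e_m$ and let $G_i$ be the graph after the first $i$ of them, so $|E_{G_i}| = i \le m$. Since every addition is non-destructive, by Theorem~\ref{thm:correct} each $G_i$ is non-vulnerable and its MIS is maintained correctly. Hence Proposition~\ref{prop:cost} applies at every step.

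The $i$-th addition costs $O(i)\subseteq O(m)$ if it is impactless. If it is impactful, it costs $O(i + k_i l_i)$, where $l_i$ is the number of edges of $RCC(e_i)$ and $k_i$ is the number of its edges that become irredundant. The additive $O(i)$ terms sum to $O(m^2)$ overall. It therefore remains to bound $\sum_i k_i l_i$.

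\emph{Main step.} For this we use $l_i \le |E_{G_i}| \le m$, so it suffices to show $\sum_i k_i \le m$. This is the key charging argument. Each edge counted in $k_i$ is an edge that is redundant in $G_{i-1}$ (or is $e_i$ itself, which is not in $G_{i-1}$) and irredundant in $G_i$. By Corollary~\ref{cor:newMIS} it enters $MIS(G_i)$, which is obtained from $MIS(G_{i-1})$ by adding only new edges, so MIS membership is monotone along the sequence.

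We must then argue that an edge, once in the MIS, never leaves it. Monotonicity holds in the impactless case by Lemma~\ref{redred} and in the impactful case by Corollary~\ref{cor:newMIS}. Destructive additions are excluded by hypothesis. Consequently each of the at most $m$ edges is counted in at most one $k_i$, so $\sum_i k_i \le m$ and $\sum_i k_i l_i \le m\cdot m = O(m^2)$.

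Adding everything up, the total cost is $O(m^2) + O(m^2) = O(m^2)$, i.e.\ $O(m)$ amortized per addition. The only delicate point is this monotonicity/charging step. In particular, we must make sure that edges of $RCC(e_i)$ that stay redundant are charged only via the factor $l_i \le m$ and not via $k_i$. This is exactly why Proposition~\ref{prop:MS} is stated with $k$ counting irredundant edges rather than all edges of the component.
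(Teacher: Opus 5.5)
Your argument is correct and is essentially the paper's: you sum the per-step bounds of Proposition~\ref{prop:cost}, bound $l_i \le m$, and charge each edge to the unique addition at which it becomes irredundant, so that $\sum_i k_i \le m$ (the paper's $\sum_i h_i \le r+d$), which you justify more explicitly than the paper via monotonicity of irredundancy under edge addition. One minor inaccuracy is that non-destructiveness alone does not make every $G_i$ non-vulnerable, since an impactful addition can still create vulnerability inside $RCC(e_i)$ (Lemma~\ref{irrvuln}); so, as the paper implicitly does, you should take as a hypothesis that the additions keep the net non-vulnerable (the algorithm stops at the first {\sc Yes} anyway), rather than derive it from Theorem~\ref{thm:correct}.
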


\begin{proof}
    Let $r$ be the number of edges that, when added, are redundant (i.e., the number of impactless additions) and let $d$ be the number of edges that, when added, are irredundant (i.e., the number of impactful additions); thus, $m = r + d$.
    Furthermore, for each of the $d$ edges $e_1,\ldots,e_d$, let $h_i$ be the number of edges that become irredundant after adding edge $e_i$, including itself;
thus, $\sum_{i=1}^d h_i\le r+d$. By Proposition~\ref{prop:cost}, each of the $r$ impactless additions costs $O(m)$ and the impactful addition of edge $e_i$ costs $O(m+h_im)$.
    Thus, the overal cost of adding all the $r+d$ edges is
    $$C\in O\left(rm+dm+m\cdot\sum_{i=1}^d h_i\right)=O((r+d)m+m(r+d))=O((r+d)m)$$
    Hence, the amortized cost is $\frac{C}{r+d}\in O(m)$.
\end{proof}
We remark that, in the previous Theorem, we assumed to start with the empty graph and build the net edge-by-edge; this helped us in carrying out the amortized analysis. However, Algorithm~\ref{alg:vulnGue} works correctly also when it receives in input a non-vulnerable non-empty starting graph (together with its MIS).


\section{Conclusions}

Vulnerability to the Braess Paradox is a problem that can arise in nets which can change dynamically, gaining or losing links between their nodes. Since nets can be very big, it is important to determine whether edge additions or deletions change the vulnerability of the net without running every time the static $O(m^2)$ algorithm by \cite{MS23} on the whole net.
In the presented approach, we study the addition of a single edge and classify it into three possible scenarios: impactful, impactless, or destructive. Our algorithm recognizes in linear time the type of addition and runs the algorithm of \cite{MS23} to solve the impactful case. Since we avoid checking the same subnet twice, our strategy achieves an amortized cost of $O(m)$ for every addition; thus, we are paying $O(m^2)$ to add $m$ edges.

Our algorithm works only in the incremental case (i.e., when adding edges).
In a companion paper, we consider the decremental case (i.e., when removing edges) and provide also in that scenario an amortized linear time algorithm,  by leveraging the approach in \cite{CGS19}.
Moreover, we also prove \cite{F26} that these two dynamic algorithms can be extended to multi-source/multi-target nets (called \emph{multi-commodities}), as done in the static case in \cite{FGS23}.
In contrast, the fully dynamic case (i.e., when edge additions and removals can arbitrarily interleave) seems to be a more complex problem, since 
it is not clear what information should be kept from one addition/deletion to the next one to enable an efficient vulnerability checking. 
This is a challenging line for future research.

Another challenging line is the problem of finding the minimal set of edges whose removal turns a vulnerable net into a non-vulnerable one.
We conjecture that this problem is NP-hard and, consequently, it would be natural to find approximation algorithms for this optimization problem.

Finally, we also plan to study dynamic algorithms when $G$ is an undirected graph. In that case, absence of Braess's Paradox coincides with having a series-parallel MIS \cite{Milch06}; differently from the directed case, this can be checked in linear time. Thus, it would be interesting to see if an amortized sub-linear time incremental/decremental/fully-dynamic algorithm for vulnerability exists.

\bibliography{threeForms,chan}

\clearpage


\appendix

\section{Omitted Material from Section 2}

\subsection{Traffic Networks, Braess Paradox and Vulnerability}
\label{app:braess}

A \emph{flow} for a net $(G,s,t)$ is a function 
$\varphi:\SP(G) \rightarrow \mathbb{R}^+$.
The value of a flow is the sum of the values sent over all paths. 
A flow induces a unique flow $\varphi(e)$ on edges: for any edge $e\in E$, 
$\varphi(e)=\sum_{P\in \SP(G): e\in P}\varphi(P)$. 

A {\em latency function} $l_e:\mathbb{R}^+\rightarrow \mathbb{R}^+$ 
assigns to each edge $e$ a latency as a function of the flow in it;
as usual, we only consider continuous and non-decreasing latency functions.

The latency of an $st$-path $P$ under a flow $\varphi$ is the sum of the latencies of 
all edges in the path under $\varphi$, i.e., $l_P(\varphi)=\sum_{e\in P} l_e(\varphi(e))$.
If $H$ is a subgraph of $G$, we denote with 
$l|_H$ the restriction of the latency function $l$ on the edges of $H$.

Given a net $G$, a real number $r\in\mathbb{R}^+$ and a latency function $l$, 
we call the triple $(G, r, l)$ an {\em instance} (of $G$). A flow $\varphi$ is {\em feasible} for 
$(G, r, l)$ if the value of $\varphi$ is $r$. Notice that, since we do not have any
constraint on edges or vertices, every $r$ admits at least one feasible flow.

A feasible flow $\varphi$ for $(G, r, l)$ is at {\em Wardrop equilibrium} 
(or is a {\em Wardrop flow})\footnote{
	Here we use as definition the characterisation of Wardrop flow
	put forward by Proposition 2.2 in \cite{Rough06}.
} if, for all $st$-paths $P$ and $Q$  
such that $\varphi(P)>0$, we have $l_P(\varphi)\leq l_Q(\varphi)$.
In particular, this implies that, if $\varphi$ is a Wardrop flow, 
all $st$-paths to which $\varphi$ assigns a positive flow
have the same latency. It is known \cite{Rough06} that every
instance admits a Wardrop flow and that different Wardrop flows for the same
instance have the same latency along all $st$-paths with a positive flow. Thus,
we denote with $L(G, r, l)$ the latency of all $st$-paths 
with positive flow at Wardrop equilibrium; whenever $r = 0$,
we let $L(G, r, l)$ be 0.

{\em Braess paradox} \cite{braessFormal,braessOriginal} originates when latency 
at Wardrop equilibrium decreases because of removing some edges (or equivalently, by increasing 
latency functions on them): an instance $(G, r, l)$ suffers from Braess paradox 
if a subgraph of $G$ exists with a lower latency.
In \cite{Rough06} it is shown that 
it is \NP-hard to decide whether or not an instance of a net  
suffers from the Braess paradox. A question raised in \cite{Rough06} is to study
the problem from a graph-theoretical perspective, 
that is whether a net admits instances suffering from 
the paradox.
This leads to the following definition \cite{Rough06}.
\begin{definition}
	A net $G$ is {\em vulnerable} if there exist a value $r$, a latency function
	$l$ and a subgraph $H$ of $G$ such that $L(G, r, l)>L(H, r, l|_H)$.
\end{definition}

\subsection{The \MS\ Algorithm and its Refined Complexity Analysis}

We report in Figure~\ref{alg:algMS} the algorithm \MS\ as devised in \cite{MS23}.

\begin{algorithm}
\caption{\textsc{SP\_Test}($G, s, t$)}
\label{alg:algMS}
\begin{algorithmic}[1]
\State \textbf{Input:} A directed two-terminal graph $G$ with source $s$ and sink $t$.
\State \textbf{Output:} ``Yes'' if $MIS(G)$, call it $\tilde{G}$, is series-parallel, ``No'' otherwise.

\If{$\tilde{G}$ is a single edge $(s,t)$}
    \State \textbf{return} ``Yes''
\EndIf

\State Perform \textsc{Series\_Decomposition}$(G, s, t)$.

\ForAll{$s't'$-subgraph $G'$ returned by \textsc{Series\_Decomposition}}
    
    \State Perform \textsc{Fast\_Parallel\_Decomposition}$(G', s', t')$.
    
    \If{it returns ``No'' or a single subgraph whose MIS is not a single edge $(s',t')$}
        \State \textbf{return} ``No''
    \EndIf

    \ForAll{$s''t''$-subgraph $G''$ returned by \textsc{Fast\_Parallel\_Decomposition}}
        \State Perform \textsc{SP\_Test}$(G'', s'', t'')$ recursively. \label{line:recCall}
    \EndFor

\EndFor

\If{all recursive calls in line~\ref{line:recCall} return ``Yes''}
    \State \textbf{return} ``Yes''
\Else
    \State \textbf{return} ``No''
\EndIf

\end{algorithmic}
\end{algorithm}
\begin{algorithm}
\caption{\textsc{Series\_Decomposition}($G, s, t$)}
\begin{algorithmic}[1]
\State \textbf{Input:} A directed two-terminal graph $G$ with source $s$ and sink $t$.
\State \textbf{Output:} The maximum number $k$ of two-terminal subgraphs $G_1, \dots, G_k$ of $G$, such that $\tilde{G}$ is obtained by series composition of $\tilde{G}_1, \dots, \tilde{G}_k$, and that for each $1 \le i < k$, $G_i$ and $\bigcup_{j>i} G_j$ share $v_{i+1}$ only.

\State Find all $st$-articulation points $v_0 = s, v_1, \dots, v_{k-1}, v_k = t$ appearing in this order on an $st$-path.

\For{$i = 1$ to $k$}
    \State Find a set $V_i$ of vertices $x$ such that there exists a $v_{i-1}x$-path in $G$ consisting of edges neither leaving $v_i$ nor entering a vertex in $\bigcup_{j=1}^{i-1} V_j$.
    \State Return the graph induced by $V_i$ as $G_i$.
\EndFor

\end{algorithmic}
\end{algorithm}

\begin{algorithm}
\caption{\textsc{Fast\_Parallel\_Decomposition}($G, s, t$)}
\begin{algorithmic}[1]
\State \textbf{Input:} A directed two-terminal graph $G$ with source $s$ and sink $t$.
\State \textbf{Output:} Either the maximum number $k$ of two-terminal subgraphs $G_1, \dots, G_k$ of $G$, sharing $s$ and $t$ only, such that $\tilde{G}$ is obtained by parallel composition of $\tilde{G}_1, \dots, \tilde{G}_k$, or ``No'' meaning $\tilde{G}$ is not series-parallel.

\State Set $i = 1$ and suppose that there are $d$ edges leaving $s$, denoted by $(s,u_1), \dots, (s,u_d)$.

\For{$j = 1$ to $d$}
    \If{$u_j = t$}
        \State Define $P_i$ as the $st$-path consisting of the single edge $(s,u_j)$.
        \State Increment $i$ by $1$.
    \Else
        \State Perform DFS starting at $s$, traversing the edge $(s,u_j)$ first, and avoiding vertices visited by previous DFS for smaller $j$.
        \If{a vertex incident to an edge entering $t$ is reached}
            \State Define $P_i$ as the $st$-path visited by the DFS.
            \State Quit the DFS and increment $i$ by $1$.
        \Else
            \State If DFS backtracks to $s$ with no $st$-path found, quit the DFS.
        \EndIf
    \EndIf
\EndFor

\State Suppose that we have $\ell$ $st$-paths $P_1, \dots, P_\ell$.

\For{$i = 1$ to $\ell$}
    \State Let $U_i$ be the set of internal vertices of $P_i$.
    \State Let $U_i^{o}$ be the set of vertices not in $P_i$ and incident to an edge leaving a vertex in $U_i$.
    \State Let $U_i^{i}$ be the set of vertices not in $P_i$ and incident to an edge entering a vertex in $U_i$.
\EndFor

\State Find the set $S$ of vertices $x$, excluding $s$, such that there exists an $sx$-path disjoint with $P_1, \dots, P_\ell$, by DFS starting at $s$ and avoiding vertices in $\bigcup_{i=1}^{\ell} U_i \cup \{t\}$.

\State Find the set $T$ of vertices $x$, excluding $t$, such that there exists an $xt$-path disjoint with $P_1, \dots, P_\ell$, by reverse DFS starting at $t$ and avoiding vertices in $\bigcup_{i=1}^{\ell} U_i \cup \{s\}$.

\For{$i = 1$ to $\ell$}

    \State Find the set $O_i$ of vertices visited by DFS starting at every vertex $v \in U_i^{o}$, avoiding vertices in $P_i$, in $\bigcup_{j<i} O_j$, and already in $O_i$.
    
    \State During the DFS:
    \begin{itemize}
        \item If a vertex in $U_j$ with $j \ne i$ is reached, return ``No''.
        \item If a vertex incident to an edge entering a vertex in $O_j \cap T$ for some $j < i$ is reached, return ``No''.
    \end{itemize}

    \State Find the set $I_i$ of vertices visited by reverse DFS starting at every vertex $v \in U_i^{i}$, avoiding vertices in $P_i$, in $\bigcup_{j<i} I_j$, and already in $I_i$.

    \State During the reverse DFS:
    \begin{itemize}
        \item If a vertex in $U_j$ with $j \ne i$ is reached, return ``No''.
        \item If a vertex incident to an edge leaving a vertex in $I_j \cap S$ for some $j < i$ is reached, return ``No''.
    \end{itemize}

\EndFor

\For{$i = 1$ to $\ell$}
    \State Define $V_i = \{s,t\} \cup U_i \cup (O_i \cap I_i) \cup (O_i \cap T) \cup (I_i \cap S)$.
    \State Return the graph induced by $V_i$ as $G_i$.
\EndFor

\end{algorithmic}
\end{algorithm}
The authors proved a worst-case complexity of $O(m^2)$; however, we can refine this analysis (without changing the worst-case scenario) as stated in Proposition~\ref{prop:MS}, that we now prove.

\begin{proof}[\bf Proof of Proposition~\ref{prop:MS}]
    The \MS\ algorithm is composed by a sequence of recursive calls that alternate two functions: \textit{Series\_Decomposition} (SD) and \textit{Fast\_Parallel\_Decomposition} (FPD), each costing $O(m)$. 
By abuse of notation, we call $s$ and $t$ the nodes source and target in each decomposed subgraph.

    SD first runs on the entire graph; 
    it finds a certain number $l$ of articulation points (nodes that, once removed, disconnect $s$ from $t$). So, all $st$-paths pass through these articulation points. The algorithm returns $l$ subgraphs, which are node-disjoint except for the articulation points; these represent the series components of $G$. For the nature of the articulation points, in each of those $l$ graphs there must be at least one edge that belongs to a simple path, which hence is irredundant. 
    
    FPD now runs on each of the series component made by SD. It first finds a number $h$ which represents the maximal number of simple paths which are node-disjoint, except for $\{s,t\}$. If $h=1$, either the component is the single edge $(s,t)$ or the entire $G$ is vulnerable. If $h>1$, the algorithm can either find that the graph is vulnerable or it returns $h$ graphs, node-disjoint except for $\{s,t\}$, representing the parallel components of $G$. Since $h$ is the maximal number of node-disjoint simple paths, each of those is composed by at least one distinct irredundant edge.
    
    It is then possible to build a recursion tree in which:
    \begin{itemize}
        \item Each level has an overall cost of $O(m)$ because each series and parallel component is node-disjoint from the others, except for one or two nodes.
        \item The depth of the tree is $O(k)$ because in each iteration FPD isolates at least one irredundant edge from the others.
    \end{itemize}
Consequently, the cost of the \MS\ algorithm is $O(km)$.
\end{proof}


\section{Proofs for Section 3}

\begin{proposition}
\label{prop:leq}
If $G$ is an $st$-connected DAG, then
\begin{enumerate}
\item If $a<b$, then it cannot be $b<a$;
\item $a < b$ if and only if there exists a simple path from $a$ to $b$ in $G$;
\item ‘$<$' is transitive.
\end{enumerate}
\end{proposition}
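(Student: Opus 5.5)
The plan is to prove item 2 first, since items 1 and 3 follow quickly from it together with acyclicity. Throughout, I use the fact that in an $st$-connected DAG every node $v$ lies on some $st$-path. Because the graph is acyclic, every path is automatically simple, so each $v$ has a simple path $s \pat v$ and a simple path $v \pat t$.

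\emph{Item 2.} The forward direction is immediate. If $a<b$, take the simple $st$-path witnessing it; its segment from $a$ to $b$ is a simple path from $a$ to $b$. For the converse, let $P$ be a simple path $a \pat b$ with $a \neq b$. Choose a path $Q_1: s \pat a$ and a path $Q_2: b \pat t$, which exist by $st$-connectivity, and form the concatenation $Q_1 P Q_2$. This walk is an $st$-path in which $a$ appears before $b$. The key step is to show that it is simple, and here acyclicity is what I expect to carry the argument. Suppose some node $x$ occurred twice. The portion of the walk between the two occurrences would then be a closed directed walk of positive length, hence would contain a directed cycle. This is impossible in a DAG. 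Therefore $Q_1 P Q_2$ is a simple $st$-path witnessing $a<b$. The main (small) obstacle is exactly this: checking that repetitions across different segments $Q_1$, $P$, $Q_2$ also yield a cycle. The argument above handles all cases uniformly, because any repeated node in a walk gives a closed subwalk, regardless of which segments the two occurrences lie in.

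\emph{Item 1.} Suppose both $a<b$ and $b<a$. By item 2 there are paths $a \pat b$ and $b \pat a$. Since $a\ne b$, concatenating them gives a closed walk of positive length, and hence a directed cycle, contradicting acyclicity.

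\emph{Item 3.} Suppose $a<b$ and $b<c$. By item 2 there are paths $a \pat b$ and $b \pat c$, and their concatenation is a walk $a \pat c$. It is a simple path, by the same argument as in item 2: a repeated node would produce a cycle. It remains to check $a\neq c$. If $a=c$, then the paths $a \pat b$ and $b \pat a$ would form a cycle, which again contradicts acyclicity. Hence item 2 gives $a<c$.
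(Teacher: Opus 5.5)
Your proof is correct and follows essentially the same route as the paper: prove item 2 by concatenating $s \pat a$, $a \pat b$, $b \pat t$ and using acyclicity to rule out repeated nodes, then get transitivity by concatenating $a \pat b$ and $b \pat c$ the same way. Your explicit checks that $a \neq b$ in item 2 and $a \neq c$ in item 3 are small details the paper leaves implicit.
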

\begin{proof}
The first claim holds by acyclicity of $G$.

For the second claim, the ‘only if' is trivial. For the other direction, assume a simple path $P: a \leadsto b$; since the graph is $st$-connected, there exist two other simple paths $P' : s \leadsto a$ and $P'' : b \leadsto t$. Now consider $s \stackrel {P'} \leadsto a \stackrel {P} \leadsto b \stackrel {P''} \leadsto t$: if it was not simple, then $G$ would have a cycle; so, $a < b$.

For the last claim, assume that $a<b$ and $b<c$; by the previous claim of this Proposition, there exist two simple paths $P : a \rightsquigarrow b$ and $P' : b \rightsquigarrow c$. Now consider the path $a \stackrel P \rightsquigarrow b \stackrel {P'} \rightsquigarrow c$: if it was not simple, then $G$ would have a cycle; thus, $a<c$. 
\end{proof}

\begin{proposition}\label{greaterredundant}
    If $u,v \in MIS(G)$ and $v<u$, then there cannot exist two simple node-disjoint paths $P_{su}:s\rightsquigarrow u$ and $P_{vt}:v\rightsquigarrow t$.
\end{proposition}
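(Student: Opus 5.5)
We argue by contradiction: if two node-disjoint simple paths $P_{su}:s\rightsquigarrow u$ and $P_{vt}:v\rightsquigarrow t$ exist, we combine them with a simple path through $v$ and $u$ to expose a directed cycle inside $MIS(G)$, or a $\W$-embedding, contradicting the standing assumption that $G$ is non-vulnerable (so that $MIS(G)$ is an $st$-connected series-parallel DAG).

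First, I need to settle where the paths live. Since $v<u$ is evaluated in $MIS(G)$, I read the statement in $MIS(G)$, which is acyclic and $st$-connected. By Proposition~\ref{prop:leq}(2) there is a simple path $Q: v\rightsquigarrow u$ in $MIS(G)$. The concatenation $s\stackrel{P_{su}}\rightsquigarrow u$ together with $v\stackrel{Q}\rightsquigarrow u$ and $v\stackrel{P_{vt}}\rightsquigarrow t$ is the basic configuration to exploit.

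The key step is as follows. Since $v<u$ there is a simple $st$-path through $v$ then $u$; call its prefix $R: s\rightsquigarrow v$. Consider the walk $s\stackrel{P_{su}}\rightsquigarrow u$. Because $P_{su}$ and $P_{vt}$ are node-disjoint, $v\notin P_{su}$ and $u\notin P_{vt}$. Now $Q$ starts at $v$ (on $P_{vt}$) and ends at $u$ (on $P_{su}$). Let $x$ be the last vertex of $Q$ on $P_{vt}$ and $y$ the first vertex after $x$ on $Q$ lying on $P_{su}$. The segment $Q[x,y]$ is internally disjoint from both paths. This gives a ``cross'' edge-path from $P_{vt}$ to $P_{su}$. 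Then $s\rightsquigarrow y$ on $P_{su}$, followed by $y\rightsquigarrow u$, shows $y\le u$. Also $x \le$ (later vertices of $P_{vt}$) $\le t$.

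Next I derive the contradiction. In an acyclic graph, $x$ reaches $y$ via $Q$. If some vertex $z$ of $P_{su}$ reached $P_{vt}$ via a disjoint path, we could form a cycle or two crossing chords. The cleanest route is to use that $v<u$ and $s\to u$, $v\to t$ disjoint, along with a path $R$ from $s$ to $v$. Let $z$ be the last vertex of $R$ on $P_{su}$ (possibly $s$); then $R[z,v]$ leads from $P_{su}$ to $P_{vt}$. Together with $Q[x,y]$ leading back from $P_{vt}$ to $P_{su}$, acyclicity forces $z<y$ along $P_{su}$ and $v\le x$ along $P_{vt}$. Then the nodes $\phi(s)=z$, $\phi(v')=y$ on $P_{su}$, and $\phi(u')=v$ (or $x$) on $P_{vt}$, with $\phi(t)=t$ and end $u\rightsquigarrow t$, should form a Wheatstone: $z\rightsquigarrow y$, $z\rightsquigarrow x$, $x\rightsquigarrow y$, $x\rightsquigarrow t$, $y\rightsquigarrow u\rightsquigarrow t$. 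This is an $st$-embedding of $\W$, so $MIS(G)$ is not series-parallel and $G$ is vulnerable, a contradiction.

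The main obstacle is the last step. I must make the five paths pairwise internally node-disjoint, in particular $y\rightsquigarrow t$ through $u$ versus $P_{vt}$, and the $s$-prefix. For this I would choose $x,y,z$ extremally and take the continuation $u\rightsquigarrow t$ from an $st$-path via $u$, rerouting at its first hit of $P_{vt}$. Finally, I would invoke \cite{CGS18} for the embedding-to-vulnerability implication.
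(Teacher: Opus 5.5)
Your approach is essentially the paper's. Take a simple $st$-path through $v$ and then $u$. The Wheatstone source is the last vertex of its $s\rightsquigarrow v$ prefix that lies on $P_{su}$ (the paper's $a$, your $z$). The bridge is a crossing from $P_{vt}$ back to $P_{su}$, and every disjointness claim comes from acyclicity of $MIS(G)$.

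Your choice of bridge is actually more careful than the paper's. The paper uses $c\rightsquigarrow b$ along $P$, where $b$ is the first return of $P$ to $P_{su}$ after $v$ and $c$ is the last vertex of $P$ on $P_{vt}$ before $u$. This tacitly assumes that $c$ precedes $b$ on $P$, which can fail: take $P=s\to v\to b\to c\to u\to t$, $P_{su}=s\to b\to u$ and $P_{vt}=v\to c\to t$. Your $x$ (the last vertex of $Q$ on $P_{vt}$) and $y$ (the next vertex of $Q$ on $P_{su}$) make $Q[x,y]$ a correctly oriented bridge, internally disjoint from both paths, in every case. Note that the middle vertex must be $x$, not $v$, since $Q[v,y]$ passes through $x\in P_{vt}$.

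The step you leave open does close, but not with $\phi(t)=t$. If the $y$-branch switches onto $P_{vt}$ at its first hit $d$ and then continues to $t$, it shares $P_{vt}[d,t]$ with the $x$-branch. Instead, let $d$ be the first vertex on $P_{vt}$ of some path $S:u\rightsquigarrow t$, and make $d$ itself the image of the sink; this $d$ is exactly the paper's $d$. Then $x$ strictly precedes $d$ on $P_{vt}$, because $x\rightsquigarrow y\rightsquigarrow u\rightsquigarrow d$ in a DAG. The five branches are $P_{su}[z,y]$, $R[z,v]\,P_{vt}[v,x]$, $Q[x,y]$, $P_{vt}[x,d]$ and $P_{su}[y,u]\,S[u,d]$. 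The outer paths are $P_{su}[s,z]$ and $P_{vt}[d,t]$.

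Every pairwise disjointness check then follows from one of four facts:
\begin{itemize}
\item $P_{su}\cap P_{vt}=\emptyset$;
\item the extremal choices of $z,x,y,d$;
\item $R$ meets $P_{vt}$ only in $v$, because $P_{vt}$ starts at $v$;
\item acyclicity. For instance, a common vertex $w$ of $S[u,d]$ and $R[z,v]$ would close the cycle $u\rightsquigarrow w\rightsquigarrow v\rightsquigarrow u$.
\end{itemize}
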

\begin{proof}
    By contradiction. Since $v < u$, let $P:s\rightsquigarrow v \rightsquigarrow u \rightsquigarrow t$. 
    Since $P$ and $P_{su}$ both pass through $s$ and $u$, but $P_{su}$ does not contain $v$,
    it must be that $P$ and $P_{su}$ must split before $v$ and then join between $u$ and $v$, viz.
     $P=s\rightsquigarrow a \rightsquigarrow v \rightsquigarrow b \rightsquigarrow u \rightsquigarrow t$ and $P_{su}=s \rightsquigarrow a \rightsquigarrow b \rightsquigarrow u$, where $a$ is the node where they last split and $b$ is the node where they first join afterwards.
     Hence, $a \stackrel P \rightsquigarrow b\ \cap\ a \stackrel {P_{su}} \rightsquigarrow b = \{a,b\}$.
    A similar reasoning can be done for $P$ and $P_{vt}$, where we let $c$ be the node where they last split and $d$ the node where they first join afterwards. 
We can pictorially represent this as follows (where $P$ is in black, $P_{su}$ is in green, and $P_{vt}$ is in red):
\[
\begin{tikzpicture}[every node/.style={circle,draw, minimum size=4mm, inner sep=2pt}]
\node (s) at (0,0) {$s$};
\node (a) at (1,0) {$a$};
\node (v) at (2,0) {$v$};
\node (c) at (3,0) {$c$};
\node (b) at (4,0) {$b$};
\node (u) at (5,0) {$u$};
\node (d) at (6,0) {$d$};
\node (t) at (7,0) {$t$};
\draw [->, decorate, decoration={snake}] ([yshift=10pt]s) -- ([yshift=10pt]a);
\draw [->, decorate, decoration={snake}, green] ([yshift=-10pt]s) -- ([yshift=-10pt]a);
\draw [->, decorate, decoration={snake}] (a) -- (v);
\draw [->, decorate, decoration={snake}] ([yshift=-10pt]v) -- ([yshift=-10pt]c);
\draw [->, decorate, decoration={snake}, red] ([yshift=10pt]v) -- ([yshift=10pt]c);
\draw [->, decorate, decoration={snake}] (c) -- (b);
\draw [->, decorate, decoration={snake}] ([yshift=10pt]b) -- ([yshift=10pt]u);
\draw [->, decorate, decoration={snake}, green] ([yshift=-10pt]b) -- ([yshift=-10pt]u);
\draw [->, decorate, decoration={snake}] (u) -- (d);
\draw [->, decorate, decoration={snake}] ([yshift=-10pt]d) -- ([yshift=-10pt]t);
\draw [->, decorate, decoration={snake}, red] ([yshift=10pt]d) -- ([yshift=10pt]t);
\draw [->, bend right= 50, decorate, decoration={snake},green] (a) to (b);
\draw [->, bend left= 50, decorate, decoration={snake},red] (c) to (d);
\end{tikzpicture}
\]

Now consider the three paths: 
$P_1=a\rightsquigarrow v \rightsquigarrow c {\ \color{red}\rightsquigarrow\ } d$, 
$P_2= a {\ \color{green}\rightsquigarrow\ } b \rightsquigarrow u \rightsquigarrow d$, and 
$P_3=c \rightsquigarrow b$.
If $a {\ \color{green}\rightsquigarrow\ } b\ \cap\ c {\ \color{red}\rightsquigarrow\ } d = \emptyset$, then we would have found a homeomorphic copy of $W$ (with endpoints $a$ and $b$, and with the bridge being $P_3$), and this would make $MIS(G)$ not series-parallel: contradiction.
Otherwise, let $w$ be the last intersection along $a {\ \color{green}\rightsquigarrow\ } b$ (i.e., $w {\ \color{green}\rightsquigarrow\ } b\ \cap\ c {\ \color{red}\rightsquigarrow\ } d = \emptyset$); again we could find a homeomorphic copy of $W$ (this time with endpoints $c$ and $d$, and with the bridge being $w {\ \color{green}\rightsquigarrow\ } b$).
\end{proof}

\begin{proof}[\bf Proof of Lemma~\ref{redred}]
    If the addition of $e$ is impactless, then every exit precedes every entry in $G$. By contradiction, assume that $e$ is irredundant in $G \cup \{e\}$; so, there must exist a simple $st$-path $P$ in $G\cup \{e\}$ that uses $e$. Let $\en$ be the last node of $P$ in $MIS(G)$ before an edge in $RCC(e)$ and $\ex$ be the first node of $P$ in $MIS(G)$ after an edge in $RCC(e)$. Because $P$ is simple and contains $e$, it must be that $P=s\rightsquigarrow \en \rightsquigarrow \ex \rightsquigarrow t$ with $\en \neq \ex$. 
    This would imply the existence of two simple node-disjoint paths $P_{s\en}:s\rightsquigarrow \en$ and $P_{\ex t}: \ex \rightsquigarrow t$. Since $\ex < \en$, this would contradict Proposition~\ref{greaterredundant}. Since no new paths are created with the add of $e$, $MIS(G)=MIS(G\cup\{e\})$.
\end{proof}

\begin{proof}[\bf Proof of Lemma~\ref{irredirred}]
Let $\en$ and $\ex$ be the unique linkable pair such that $\en < \ex$; hence, there is a simple path $\en \rightsquigarrow \ex$ in $G$. Moreover, in $G$ there exists also a path $P=s\rightsquigarrow \en \rightsquigarrow \ex \rightsquigarrow t$ that must be simple, otherwise $MIS(G)$ would be cyclic. Then, there exist two simple node-disjoint paths $P_{s\en}=s\stackrel P \rightsquigarrow \en$ and $P_{\ex t}=\ex\stackrel P \rightsquigarrow t$ both in $MIS(G)$. By definition of $RCC(e)$, there exists therein a simple path $P'=\en \rightsquigarrow \ex$ passing through $e$. Then, the path $P_{s\en} P' P_{\ex t}$ in $G \cup\{e\}$ is simple and contains $e$; hence, $e$ is irredundant in the new graph.

All the new simple $st$-paths that touch $RCC(e)$ must enter $RCC(e)$ in $\en$ and exit $RCC(e)$ in $\ex$. Indeed, if there was a path (containing $e$) from some $\en'$ to some $\ex'$ different (as a pair) from $\en$ and $\ex$, then $\ex' < \en'$ in $G$, being the addition impactful; we would then obtain a contradiction by a reasoning similar to that in the proof of Proposition~\ref{redred}.
Hence, the irredundant part of $RCC(e)$ has a unique source $\en$ and target $\ex$; since $\en$ and $\ex$ are linkable, the thesis follows.
\end{proof}

\begin{proof}[\bf Proof of Lemma~\ref{vulnvuln}]
    If the addition of $e$ is destructive, it means that it is not impactless nor impactful; so, there are different possibilities:
    \begin{enumerate}
        \item If there is an entry $\en$ and an exit $\ex$ that are not comparable, then there is a new simple path $\en \rightsquigarrow \ex$ which contains $e$ and can be used to build a homeomorphic copy of $W$, together with the paths $s \rightsquigarrow \en \rightsquigarrow t$ (that does not contain $\ex$) and $s \rightsquigarrow \ex \rightsquigarrow t$ (that does not contain $\en$).

        \item If there is only one entry $\en$ and exit $\ex$ such that $\en < \ex$, but $\en$ and $\ex$ are not linkable, the thesis follows by Definition~\ref{def:linkable} and by \cite[Theorems 1 and 3]{Duf65}. 

        \item If there are two different linkable pairs $(\en_1,\ex_1)$ and $(\en_2,\ex_2)$ such that $\en_i < \ex_i$ (for $i =1,2$), we can assume that $\en_1 \neq \en_2$ (that case for $\ex_1 \neq \ex_2$ is dual). Then, there must exist two simple paths in the new graph  $P_1 = s\rightsquigarrow \en_1 \rightsquigarrow \ex_1 \rightsquigarrow t$ and $P_2 = s \rightsquigarrow \en_2 \rightsquigarrow \ex_2 \rightsquigarrow t$ such that both $\en_1 \stackrel{P_1}\rightsquigarrow \ex_1$ and $\en_2 \stackrel{P_1}\rightsquigarrow \ex_2$ contain $e = (u,v)$.
        \begin{enumerate}
            \item  If $\en_1$ and $\en_2$ are comparable (w.l.o.g., suppose $\en_1< \en_2$), there must exist a simple path $s \leadsto \en_1\rightsquigarrow \en_2 \leadsto t$  in $MIS(G)$. 
Since $P_1$ and $P_2$ both contain $e$, they must have at least one node in common, call $c$ the first. 
Now, $\en_1 \stackrel{P_1}\rightsquigarrow c$ and $\en_2\stackrel{P_2}\rightsquigarrow c$ both belong to $RCC(e)$ and are disjoint by the way in which we chose $c$.
Hence, there must exist an exit node $\ex$ such that $c\rightsquigarrow u \rightarrow v \rightsquigarrow \ex$ is in $RCC(e)$ and is disjoint from $\en_1\rightsquigarrow c$ and $\en_2\rightsquigarrow c$ because of acyclicity. 
So, the two parallel node-disjoint simple paths $\en_1\rightsquigarrow \en_2\rightsquigarrow \ex$ in $MIS(G)$ and $\en_1\rightsquigarrow c\rightsquigarrow u \rightarrow v \rightsquigarrow \ex$ in $RCC(e)$ are connected by $\en_2\rightsquigarrow c$; this yields an homeomorphic copy of $W$.

            \item  If $\en_1$ and $\en_2$ are not comparable, they belong to different parallel components; so, there exists the last articulation point  $a$ such that $s \le a < \en_1$ and $s \le a < \en_2$. Let $c$ and $\ex$ be the nodes like in point 3(a); then, $a \rightsquigarrow \en_2\rightsquigarrow \ex$ in $MIS(G)$ is disjoint from $a\rightsquigarrow \en_1\rightsquigarrow c \rightsquigarrow u \rightarrow v \rightsquigarrow \ex$ because $a\rightsquigarrow \en_1$ is on a different parallel component and $\en_1\rightsquigarrow c\rightsquigarrow u \rightarrow v \rightsquigarrow \ex$ is in $RCC(e)$. Moreover, $\en_2\rightsquigarrow c$ is disjoint from the first path because it is in $RCC(e)$ and from the second one for the same reasons as in case 3(a). This again suffices to conclude.
        \end{enumerate}
        \item Finally, if there is only one linkable pair with the correct ordering (with the entry that proceeds the exit), but there exists another entry-exit pair whose entry does not follow the exit, then either that latter pair is uncomparable (and we fall in case 1 of this proof) or the entry proceeds the exit but they are not linkable (and we fall in case 2 of this proof).
        \qedhere
    \end{enumerate}
\end{proof}

\newcommand{\Pleads}{{\color{blue}\, \leadsto\, }}
\newcommand{\PPleads}{{\color{red}\, \leadsto\, }}
\begin{proof}[\bf Proof of Proposition~\ref{Prop:RE}]
If the addition of $e$ is impactless, the claim is trivial since the addition of $e$ does not change the set of simple $st$-paths.
The delicate case is when the addition is impactful, and let $(\en,\ex)$ be the entry-exit pair for $e$.
By contradiction, assume that there exists a non-empty set $E' \subseteq E_G$ such that its edges are redundant in $G$, do not belong to $RCC(e)$, but are irredundant in $G \cup \{e\}$. This means that, in the latter graph, for every edge in $E'$ there must exist a simple $st$-path $P$ that passes through it. Let us fix any such $P$. First, $P$ cannot be simple in $G$, and so it must touch $RCC(e)$, by entering this set in $\en$ and exiting it in $\ex$. Hence, 
$$P = s \Pleads \en \Pleads \ex \Pleads t$$ 
where, $\en \Pleads \ex$ is fully contained in $RCC(e)$. 
To simplify reading, we shall always denote portions of $P$ in blue.
Moreover, since $\en < \ex$ in $G$, there exists a simple path $P'$ in $G$ such that
$$P' = s \PPleads \en \PPleads \ex \PPleads t$$ 
To simplify reading, we shall always denote portions of $P'$ in red.

Since $P$ touches $E'$ at least once and $E'\cap RCC(e) = \emptyset$, there must be edges of $E'$ in $s \Pleads \en$ and/or in $\ex \Pleads t$. Let $e'$ be the first edge of $E'$ in $s \Pleads \en$ (if any) and $e''$ be the last edge of $E'$ in $\ex \Pleads t$ (if any). Then:
\begin{itemize}
\item If there is such an $e'$ in $s \Pleads \en$, then the path $s \Pleads \en \PPleads t$ cannot be simple, otherwise $e'$ would not be redundant in $G$. Hence, $s \Pleads \en$ and $\en \PPleads t$ must share at least another vertex, apart from $\en$; let $\hat u$ be their first intersection along $s \Pleads \en$. Now, $e'$ must follow $\hat u$ in $s \Pleads \en$, otherwise the path $s \Pleads \hat u \PPleads t$ would be simple in $G$ and contain $e'$. Being $e'$ the first vertex of $E'$ in $s \Pleads \en$, then $s \Pleads \hat u$ is fully contained in $MIS(G)$.

If there is no such an $e'$ in $s \Pleads \en$, then we let $\hat u = \en$ in what follows.

\item Dually, if there is such an $e''$ in $\ex \Pleads t$, then the path $s \PPleads \ex \Pleads t$ cannot be simple, otherwise $e''$ would not be redundant in $G$. Hence, $s \PPleads \ex$ and $\ex \Pleads t$ must share at least another vertex, apart from $\ex$; let $\hat v$ be their last intersection along $\ex \Pleads t$. Now, $e''$ must proceed $\hat v$ in $\ex \Pleads t$, otherwise the path $s \PPleads \hat v \Pleads t$ would be simple in $G$ and contain $e''$. Being $e''$ the last vertex of $E'$ in $\ex \Pleads t$, then $\hat v \Pleads t$ is fully contained in $MIS(G)$.

If there is no such an $e''$ in $\ex \Pleads t$, then we let $\hat v = \ex$ in what follows.
\end{itemize}

We now consider three cases:
\begin{enumerate}

\item If $\hat v \neq \ex$ and $\hat v \in \en \PPleads \ex$, then we could contradict linkability of $\en$ and $\ex$.
Consider
\[
\begin{tikzpicture}[every node/.style={circle,draw, minimum size=4mm, inner sep=2pt}]
\node (s) at (0,0) {$s$};
\node (n) at (2,0) {$\en$};
\node (x) at (3,1) {$\ex$};
\node (v) at (3,-1) {$\hat v$};
\node (t') at (4,0) {$t'$};
\node (t) at (6,0) {$t$};
\draw [->, decorate, decoration={snake}, red] (s) to (n);
\draw [->, decorate, decoration={snake}, red] (n) -- (v);
\draw [->, decorate, decoration={snake}, red] (v) -- (x);
\draw [->, decorate, decoration={snake}, red] (x) -- (t');
\draw [->, decorate, decoration={snake}, red] (t') to (t);
\draw [->, decorate, decoration={snake}, blue] (v) -- (t');
\end{tikzpicture}
\]
where $t'$ is the first intersection along $\hat v \Pleads t$ with $\ex \PPleads t$ (at worst, $t'=t$).
If we add the edge $(\en,\ex)$, we would have an embedding of $W$. Indeed, 
being $\hat v$ the last intersection along $\ex \Pleads t$ with $s \PPleads \ex$, we have that 
$\hat v \Pleads t' \ \cap\ \hat v \PPleads t' = \{\hat v, t'\}$; moreover, 
$\hat v \Pleads t' \ \cap\ s \PPleads \hat v = \{\hat v\}$ and
$\hat v \Pleads t' \ \cap\ t' \PPleads t = \{t'\}$, otherwise $MIS(G)$ would contain a cycle.

\item If $\hat u \neq \en$ and $\hat u \in \en \PPleads \ex$, again we could contradict linkability of $\en$ and $\ex$.
Consider
\[
\begin{tikzpicture}[every node/.style={circle,draw, minimum size=4mm, inner sep=2pt}]
\node (s) at (0,0) {$s$};
\node (s') at (2,0) {$s'$};
\node (n) at (3,1) {$\en$};
\node (u) at (3,-1) {$\hat u$};
\node (x) at (4,0) {$\ex$};
\node (t) at (6,0) {$t$};
\draw [->, decorate, decoration={snake}, red] (s) to (s');
\draw [->, decorate, decoration={snake}, red] (s') -- (n);
\draw [->, decorate, decoration={snake}, blue] (s') -- (u);
\draw [->, decorate, decoration={snake}, red] (n) -- (u);
\draw [->, decorate, decoration={snake}, red] (u) to (x);
\draw [->, decorate, decoration={snake}, red] (x) -- (t);
\end{tikzpicture}
\]
where $s'$ is the last intersection along $s \Pleads \hat u$ with $s \PPleads \en$ (at worst, $s'=s$).
If we add the edge $(\en,\ex)$, we would have an embedding of $W$. Indeed, 
being $\hat u$ the first intersection along $s \Pleads \en$ with $\en \PPleads t$, we have that 
$s' \Pleads \hat u \ \cap\ s' \PPleads \hat u = \{s',\hat u\}$; moreover, 
$s' \Pleads \hat u \ \cap\ \hat u \PPleads t = \{\hat u\}$ and
$s' \Pleads \hat u \ \cap\ s \PPleads s' = \{s'\}$, otherwise $MIS(G)$ would contain a cycle.

\item Otherwise, we have the following situation:
\vspace*{-.3cm}
\[
\begin{tikzpicture}[every node/.style={circle,draw, minimum size=4mm, inner sep=2pt}]
\node (s) at (0,0) {$s$};
\node (u) at (1,0) {$\hat u$};
\node (n) at (2,0) {$\en$};
\node (x) at (3,0) {$\ex$};
\node (v) at (4,0) {$\hat v$};
\node (t) at (5,0) {$t$};
\draw [->, bend right= 50, decorate, decoration={snake}, red] (s) to (v);
\draw [->, bend right= 40, decorate, decoration={snake}, red] (v) to (n);
\draw [->, bend left= 10, decorate, decoration={snake}, red] (n) to (x);
\draw [->, bend left= 40, decorate, decoration={snake}, red] (x) to (u);
\draw [->, bend left= 50, decorate, decoration={snake}, red] (u) to (t);
\draw [->, decorate, decoration={snake}, blue] (s) to (u);
\draw [->, decorate, decoration={snake}, blue] (u) to (n);
\draw [->, bend right= 10, decorate, decoration={snake}, blue] (n) to (x);
\draw [->, decorate, decoration={snake}, blue] (x) to (v);
\draw [->, decorate, decoration={snake}, blue] (v) to (t);
\end{tikzpicture}
\vspace*{-.3cm}
\]
We now let
\begin{itemize}
\item $s'$ be the last intersection along $s \PPleads \hat v$ with $s \Pleads \hat u$ (at worst, $s' = s$);
\item $t'$ be the first intersection along $\hat u \PPleads t$ with $\hat v \Pleads t$ (at worst, $t' = t$);
\item $\bar v$ be the last intersection along $\hat v \PPleads \hat u$ with $\hat v \Pleads t'$ (at worst, $\bar v = \hat v$);
\item $\bar u$ be the first intersection along $\bar v \PPleads \hat u$ with $s' \Pleads \hat u$ (at worst, $\bar u = \hat u$).
\end{itemize}
We now show a contradiction with non-vulnerability of $G$ (that we silently assume whenever we consider an edge addition); to this aim, let us consider the following subgraph of $G$ (actually, of $MIS(G)$):
\[
\begin{tikzpicture}[every node/.style={circle,draw, minimum size=4mm, inner sep=2pt}]
\node (s) at (-.5,0) {$s$};
\node (s') at (1,0) {$s'$};
\node (bu) at (3.5,1) {$\bar u$};
\node (hv) at (2,-1) {$\hat v$};
\node (bv) at (3.5,-1) {$\bar v$};
\node (hu) at (5,1) {$\hat u$};
\node (t') at (6,0) {$t'$};
\node (t) at (7.5,0) {$t$};
\draw [->, decorate, decoration={snake}, red] (s') to (hv);
\draw [->, decorate, decoration={snake}, red] (bv) -- (bu);
\draw [->, decorate, decoration={snake}, red] (hu) -- (t');
\draw [->, decorate, decoration={snake}, blue] (s) -- (s');
\draw [->, decorate, decoration={snake}, blue] (s') -- (bu);
\draw [->, decorate, decoration={snake}, blue] (hv) -- (bv);
\draw [->, decorate, decoration={snake}, blue] (bu) -- (hu);
\draw [->, decorate, decoration={snake}, blue] (bv) -- (t');
\draw [->, decorate, decoration={snake}, blue] (t') -- (t);
\end{tikzpicture}
\]
This is an embedding of $W$ in $G$; indeed:
\begin{enumerate}
\item $s' \PPleads \hat v$ intersects
\begin{itemize}
\item $s \Pleads s'$ only in $s'$, otherwise $MIS(G)$ would contain a cycle;
\item $s' \Pleads \hat u$ only in $s'$, because $s'$ is the last intersection;
\item $\hat v \Pleads t$ only in $\hat v$, otherwise $MIS(G)$ would contain a cycle.
\end{itemize}

\item $\bar v \PPleads \bar u$ intersects
\begin{itemize}
\item never $s \Pleads s'$, otherwise $MIS(G)$ would contain a cycle;
\item $s' \Pleads \bar u$ only in $\bar u$, because $\bar u$ is the first intersection;
\item $\hat v \Pleads \bar v$ only in $\bar v$, otherwise $MIS(G)$ would contain a cycle;
\item $\bar u \Pleads \hat u$ only in $\bar u$, otherwise $MIS(G)$ would contain a cycle;
\item $\bar v \Pleads t'$ only in $t'$, because $\bar v$ is the last intersection;
\item never $t' \Pleads t$, otherwise $MIS(G)$ would contain a cycle.
\end{itemize}

\item $\hat u \PPleads t'$ intersects
\begin{itemize}
\item $s \Pleads \hat u$ only in $\hat u$, otherwise $MIS(G)$ would contain a cycle;
\item $\hat v \Pleads t'$ only in $t'$, because $t'$ is the first intersection;
\item $t' \Pleads t$ only in $t'$, otherwise $MIS(G)$ would contain a cycle.
\qedhere
\end{itemize}
\end{enumerate}
\end{enumerate}
\end{proof}

\begin{proof}[\bf Proof of Corollary~\ref{cor:newMIS}]
Clearly, every simple $st$-path in $G$ remains simple in $G \cup \{e\}$; hence, all $MIS(G)$ belongs to $MIS(G\cup\{e\})$. Moreover,
by Lemma~\ref{Prop:RE}, only the edges and vertices in $RCC(e)$ that become irredundant after the addition of $e$ are in $MIS(G\cup\{e\})$.
\end{proof}

\begin{proof}[\bf Proof of Lemma~\ref{irrvuln}]
 ($\Leftarrow$) Because the addition of $e$ is impactful with entry-exit pair $(\en,\ex)$, we have that $\en<\ex$ in $MIS(G)$; so, in $G$ there are  two disjoint simple paths $s\rightsquigarrow \en$ and $\ex \rightsquigarrow t$ that are also disjoint from $RCC(e)\setminus\{\en,\ex\}$.
Thus, the embedding of $W$ into $RCC(e)$ can be used to embed $W$ into $G \cup \{e\}$.
 
 ($\Rightarrow$) Since $G$ is not vulnerable but $G\cup \{e\}$ is, it means that the vulnerability comes from some edges that were redundant but now are irredundant, or from $e$. Since the addition of $e$ is impactful, by Proposition~\ref{irredirred} $e$ is irredundant and $RCC(e)$ is a new parallel component w.r.t. $MIS(G)$. By Proposition~\ref{Prop:RE}, all the redundant edges in $G$ not in $RCC(e)$ remain redundant in $G\cup \{e\}$. Thus, the only possibility is that $RCC(e)$, seen as an $\en\ex$-net, is vulnerable itself.
\end{proof}


\section{Proofs for Section 4}

\begin{proof}[\bf Proof of Proposition~\ref{prop:Tu}]
The if part is trivial by transitivity of $\leq$ (see Proposition~\ref{prop:leq}). 
So, by contradiction, let us assume that $A\leq B$ but there is no such node $w$. 
The interesting case is when $|A|\geq 2$ and $|B|\geq 2$: indeed, if $A=\{a\}$ (resp., $B=\{b\}$), 
the node $a$ (resp., $b$) can play the role of $w$ in the statement.
Hence, let $a, a'\in A$ and $b, b'\in B$; again, $a$ and $a'$, as well as $b$ and $b'$, must be incomparable (w.r.t. $\leq$), otherwise one of them can play the role of $w$.

Now consider the simple $st$-paths $s \leadsto a\leadsto b \leadsto t$, $s \leadsto a'\leadsto b' \leadsto t$, and $a \leadsto b'$;
let $s'$ be the last vertex common to $s \leadsto a$ and $s \leadsto a'$ (at worst, $s$), and $t'$ be the first vertex common to $b \leadsto t$ and $b' \leadsto t$ (at worst, $t$). Furthermore, let $i_1$ be the last intersection between $a\leadsto b$ and $a\leadsto b'$ (at worst, $a$) and
$i_2$ be the first successive intersection between $a\leadsto b'$ and $a'\leadsto b'$ (at worst, $b'$).
Pictorially, we can represent the situation as follows:
\[
\begin{tikzpicture}[every node/.style={circle,draw, minimum size=4mm, inner sep=2pt}]
\node (s') at (0,0) {$s'$};
\node (a) at (1,1) {$a$};
\node (i1) at (2,1) {$i_1$};
\node (b) at (3,1) {$b$};
\node (a') at (1,-1) {$a'$};
\node (i2) at (2,-1) {$i_2$};
\node (b') at (3,-1) {$b'$};
\node (t') at (4,0) {$t'$};
\draw [->, decorate, decoration={snake}] (s') -- (a);
\draw [->, decorate, decoration={snake}, green] ([yshift=10pt]a) -- ([yshift=10pt]i1);
\draw [->, decorate, decoration={snake}, green] (i1) -- (b);
\draw [->, decorate, decoration={snake}] (b) -- (t');
\draw [->, decorate, decoration={snake}, blue] ([yshift=-10pt]a) -- ([yshift=-10pt]i1);
\draw [->, decorate, decoration={snake}, blue] (i1) -- (i2);
\draw [->, decorate, decoration={snake}, blue] ([yshift=10pt]i2) -- ([yshift=10pt]b');
\draw [->, decorate, decoration={snake}] (s') -- (a');
\draw [->, decorate, decoration={snake}, red] (a') -- (i2);
\draw [->, decorate, decoration={snake}, red] ([yshift=-10pt]i2) -- ([yshift=-10pt]b');
\draw [->, decorate, decoration={snake}] (b') -- (t');
\end{tikzpicture}
\]
If we prove that all these paths are disjoint, we provide an embedding of $W$ into $G$, in contradiction with the hypothesis that $G$ is series-parallel.
First, we want to prove that intersection between the upper and the lower paths is $\{s',t'\}$; to this aim, we observe that:
\begin{itemize}
\item $s' \leadsto a\ \cap\ s' \leadsto a' = \{s'\}$ by construction;
\item $s' \leadsto a\ \cap\ a' \leadsto t' = \emptyset$ otherwise $a' < a$;
\item $a \leadsto b\ \cap\ s' \leadsto a' = \emptyset$ otherwise $a < a'$;
\item $a \leadsto b\ \cap\ a' \leadsto b' = \emptyset$ otherwise there exists a $w$ between $\{a,a'\}$ and $\{b,b'\}$;
\item $a \leadsto b\ \cap\ b' \leadsto t' = \emptyset$ otherwise $b' < b$;
\item $b \leadsto t'\ \cap\ s' \leadsto b' = \emptyset$ otherwise $b < b'$;
\item $b \leadsto t'\ \cap\ b' \leadsto t' = \{t'\}$ by construction.
\end{itemize}
Finally, let us focus on $i_1 \leadsto i_2$:
\begin{itemize}
\item $i_1 \leadsto i_2\ \cap\ s' \leadsto i_1 = \{i_1\}$ otherwise $G$ would have a cycle;
\item $i_1 \leadsto i_2\ \cap\ i_1 \leadsto b = \{i_1\}$ by construction and 
$i_1 \leadsto i_2\ \cap\ b \leadsto t = \emptyset$ otherwise $b < b'$;
\item $i_1 \leadsto i_2\ \cap\ s' \leadsto a' = \emptyset$ otherwise $a < a'$ and 
$i_1 \leadsto i_2\ \cap\ a' \leadsto i_2 = \{i_2\}$ by construction;
\item $i_1 \leadsto i_2\ \cap\ i_2 \leadsto t' = \{i_2\}$ otherwise $G$ would have a cycle.
\qedhere
\end{itemize}
\end{proof}

\begin{proof}[\bf Proof of Lemma~\ref{prop_cost-R-I-V}]
For the impactless case, we observe that $X \leq N$ whenever any of the two sets is empty (lines \ref{algS:check-empty}--\ref{algS:empty}) or whenever there exists a $w$ such that $X \leq w \leq N$ (by Proposition \ref{prop:Tu}). For the latter case, notice that,
after running $MinFoll(mis,X)$ in line \ref{algS:Tleqw} (and this costs $O(m)$ -- see Lemma~\ref{minFollCorrect}), we have a node $w$ that is minimal (w.r.t. $\leq$) such that $X\leq w$ (see Lemma~\ref{minFollCorrect} later on); thus, it suffices to check whether $w \leq N$ (i.e., whether the largest subset $N'$ of $N$ whose nodes are all reachable from $w$ is all $N$ -- that is, the two sets have the same cardinality, being one a subset of the other); this is done in lines \ref{algS:impless1}--\ref{algS:impless2}. Notice that calculating $N'$ is done through any visit from $w$ to $N$ in $G$ and so it costs $O(m)$.

If $|N'| < |N|-1$, this means that there exist two entries that are not proceeded by the exits; so, there exists at least one entry, different from the entry of the entry-exit pair for $e$, that is not proceeded by all exits. Hence, the addition of $e$ cannot be impactless nor impactfull; so, it must be destructive (see lines \ref{algS:destr1}--\ref{algS:destr2}).

Then, we calculate $X'$, made up by all the nodes of $X$ that proceed the only entry $\en$ that is not proceeded by $w$; this can be done by a visit in $G^T$ from $\en$ to $X$ (i.e., a backwards reachability) and costs $O(m)$. If $X'$ coincides with $X$, then no exit is proceeded by $\en$ and so no entry-exit pair for $e$ can exist; if there is more than one exit that does not proceed $\en$, then again the addition of $e$ cannot be impactful. In both cases, Algorithm~\ref{alg:Salvo} correctly says that the addition of $e$ is destructive (see lines \ref{algS:destr3}--\ref{algS:destr4}).

So, after line \ref{algS:destr4}, we are left with just one pair $(\en,\ex)$ such that $\ex \not\leq \en$; this can potentially be the entry-exit pair for the impactful addition of $e$. To verify this, we have to check that
\begin{itemize}
\item {\em $\en < \ex$:} this is simply a reachability from $\en$ to $\ex$ in $MIS(G)$ and costs $O(m)$.
\item {\em $\en$ and $\ex$ are linkable:} this can be done by adding $(\en, \ex)$ to $MIS(G)$ and by checking whether the resulting graph is still series-parallel; by Remark~\ref{rem:linkable}, the cost is $O(m)$.
\item {\em all other pairs made up by an entry $\en'$ and an exit $\ex'$ are such that $\ex' \leq \en'$:}\linebreak
this is implied by checking that $X'\ (= X \setminus \{\ex\}) \leq N$ and $X \leq N'\ (= N \setminus \{\en\})$.
These last checks can be again done by relying on Proposition~\ref{prop:Tu}: for example, for checking $X' \leq N$, we run $MinFoll$ 
to find a node $w'$ such that $X' \leq w'$ and then we do a visit in $G$ from $w'$ to $N$ to check if $w'\leq N$.
Again, this costs $O(m)$, since $MinFoll$ costs $O(m)$ (see Remark~\ref{minFollCorrect}).
\end{itemize}
If all these checks succeed, Algorithm~\ref{alg:Salvo} correctly says that the addition of $e$ is impactful, otherwise it correctly says that it is destructive.
\end{proof}

\begin{proof}[\bf Proof of Proposition~\ref{prop:prec}]
By induction on the topological order used in line 17 of Algorithm~\ref{alg:minFoll}.
The base case is for $s$ and is immediate: by line 15, $prec_A[s]$ is always defined and, by lines 1--4, it is 1 if $s \in A$ and 0 otherwise.
For the inductive case, if $v = t$ then the case is trivial as well, since $t$ covers all vertices of $G$, and so in particular all $A$ (see line 16).
Otherwise, by line 11, $v$ has a $ref$, say it is $u$. By Lemma~\ref{lem:npA}, we have that $np_A[v] = |\{a \in A \,:\, u < a \leq v \text{ in } G\}|$.
Moreover, since $u$ must appear before $v$ in the topological order, $prec_A[u]$ is defined and, by induction, $prec_A[u]=|\{a \in A \,:\, a \leq u  \text{ in } G\}|$. By looking at line 19, we can conclude, since $|\{a \in A \,:\, a \leq v \text{ in } G\}| = |\{a \in A \,:\, a \leq u \text{ in } G\}| + |\{a \in A \,:\, u < a \leq v \text{ in } G\}|$.
\end{proof}

\begin{proof}[\bf Proof of Lemma~\ref{minFollCorrect}]
By lines 17--21 of Algorithm~\ref{alg:minFoll}, if $w$ is returned, then its $prec_A$ has been set to $|A|$; 
by Proposition~\ref{prop:prec}, $A \leq w$. Moreover, if $MinFoll$ returned $w$, this means that $w$ is the first one in the chosen topological order with this property; hence, if $a \leq w'$, it cannot be that $w' \leq w$, otherwise $w'$ must appear in the topological order before $w$ and so $MinFoll$ would not have returned $w$. So, it can either be that $w \leq w'$ (as desired), or that $w$ and $w'$ are unrelated w.r.t. ‘$\leq$'. 
But this letter case is not possible since otherwise we would have that $A \leq \{w,w'\}$ and so, by Proposition~\ref{prop:Tu}, there would exist another $w'' (\neq w,w')$ such that $A \leq w'' \leq \{w,w'\}$; but then $MinFoll$ would have returned $w''$.

Being a (minor) variation of the algorithm in \cite{VTL82}, the complexity of Algorithm \ref{alg:minFoll} is: 
(1) $O(n)$ for lines 1--4;
(2) $O(n+m)$ for lines 5--14;
(3) $O(n)$ for lines 15--21.
So, overall the complexity of the algorithm is $O(m)$, being the input graph $st$-series-parallel.
\end{proof}

\subsection{Proof of Lemma~\ref{lem:npA}}
\label{app:proof}

To prove the Lemma, we found it easier to enrich Algorithm~\ref{alg:minFoll} with two additional data structures, that will enable a simpler formulation and proof of the desired result. The result is Algorithm~\ref{alg:minFollInv}, where we also introduce
 \begin{itemize}
     \item for every pair $u,v \in V_G$, a set of vertices $precE[u, v]$ that contains all the vertices of $A$ contained in the $uv$-subgraph of $G$, $u$ excluded; and
     \item for every vertex $v$, a set of vertices $precV[v]$ that contains all the vertices of $A$ covered by $v$ in $G$ contained in the subgraph of $G$ from any predecessor of $v$ in $\hat G$ (the predecessors excluded), that is, the union of all the $precE[u, v]$.
 \end{itemize}
 The formal definition of such sets is stipulated by the formulae $\varphi_1$ and $\varphi_2$ below. 
We remark that the algorithm we shall use in practice is Algorithm~\ref{alg:minFoll} (where we just keep the number of the predecessors and not which are them); this keeps the computational cost in $O(m+n)$. The blue parts in Algorithm~\ref{alg:minFollInv} have only been added to formulate the following Lemma~\ref{lem:invariants}, that proves some loop invariants expressed in terms of $precV$ and $precE$.

\begin{algorithm}
    \caption{$MinFollInv(G, A)$ \\ Like $MinFoll$ but with extra data structures used to formulate and prove invariants (REMARK: the new parts w.r.t. Algorithm~\ref{alg:minFoll} are depicted in blue)} 
    \label{alg:minFollInv}
    \begin{algorithmic}[1]
        \ForAll{$v\in V$}
    	\If{$v \in A$}
{	\State $np_A[v] = 1$
    {\color{blue}
 	    \State $precV[v]=\{v\}$}
            {\color{blue}            \ForAll{$u\in V$} 
            \State $precE[u, v] = \{v\}$
            \EndFor}}
	\Else
{	    \State$np_A[v] = 0$
	    {\color{blue}\State $precV[v]=\emptyset$}
	   {\color{blue}\ForAll{$u\in V$} 
            \State $precE[u,v] = \emptyset$
            \EndFor}}
	\EndIf
        \EndFor
	\State Let $\hat{G} = (\hat{V},\hat{E})$, with $\hat{V} = V$ and $\hat{E} = E$ 
        \State Remove from $\hat{E}$ all parallel edges
        \While{$\hat{G} \neq s \rightarrow t$} 
        \State Let $v \in \hat{V}$ be such that $in\_degree_{\hat G}(v) = out\_degree_{\hat G}(v) = 1$
        \State Let $(u , v)$ and $(v,w)$ belong to $\hat{E}$
        \State $np_A[w]$ += $np_A[v]$  \label{it1}
        \State $ref[v] = u$
        \State \textcolor{blue}{$precE[u, w]=precE[u, v]\cup precE[v, w]\cup precE[u, w]$}\label{it2}
        \State \textcolor{blue}{$precV[w]=precV[w]\cup precV[v]$} \label{it3}
        \State Cancel $v$ from $\hat{V}$ and $(u , v)$ and $(v,w)$ from $\hat{E}$
        \If{$(u,w) \not\in \hat{E}$}
        \State add $(u,w)$ to $\hat{E}$
        \EndIf
        \EndWhile
        \State $prec_A[s] = np_A[s]$
        \State $prec_A[t] = |A|$
        \ForAll{$v \in V$ touched during a topological visit of $G$}
        \If{$v \neq s,t$}
        \State $prec_A[v] = np_A[v] + prec_A[ref[v]]$
        \EndIf
        \If{$prec_A[v] = |A|$}
        \State \Return $v$
        \EndIf
        \EndFor
    \end{algorithmic}
    \end{algorithm}

\newcommand{\pathPlus}{\overset{+}{\rightsquigarrow}}
\newcommand{\innNodes}[1]{\overset{\circ}{#1}}

Notationally, we let $G=(V,E)$ be the starting graph, $\hat{G}=(\hat{V},\hat{E})$ be the graph at the beginning of a generic iteration, and $\hat{G}'=(\hat{V}',\hat{E}')$ be the graph at the end of that generic iteration (if in that iteration we cancel vertex $v$, then $\hat{V}=\hat{V}'\cup\{v\}$). We use $precE'$, $precV'$ and $np_A'$ to denote the values of $precE$, $precV$ and $np_A$ when they refer to $\hat{G}'$, whereas the ‘‘non-primed" versions refer to $\hat G$.
Moreover, $a \pathPlus b$ denotes a non-empty path from $a$ to $b$. Finally, given a path $P$, we denote with $\innNodes P$ the internal nodes of $P$ (extremes excluded). 

Armed with these notations, we define the following logical formulae (that we shall prove to be invariants for the loop in lines 14--23 of Algorithm~\ref{alg:minFollInv}):
\[
\begin{array}{rcl}
\varphi_1(G,\hat{G}) & \triangleq & \forall (b,c) \in \hat E\, .\ precE[b , c] = 
\left \{ \!\!
\begin{array}{ll}
a\in A\ |\ \exists P \text{ in } G\ . & \!\!\!\!\! P =b \pathPlus a \rightsquigarrow c \\
& \!\!\!\!\!\wedge\ \innNodes P \cap \hat V = \emptyset
\end{array}
\!\!\right\}
\vspace*{.2cm}
\\
\varphi_2(G,\hat{G}) & \triangleq & \forall b \in \hat V\, .\ precV[b] = \underset{(x,b) \in \hat{E}}{\bigcup}precE[x,b]
\vspace*{.2cm}
\\
\varphi_3(G,\hat{G}) & \triangleq & \forall b,c,d,x \in \hat V\, .
        \left(\!\!
        \begin{array}{l} 
        (c\neq d) \Rightarrow (precE[b,c] \cap precE[x,d] = \emptyset)\ \wedge
        \\
        (x\neq b \wedge c \not\in A) \Rightarrow (precE[b,c] \cap precE[x,c] = \emptyset)\ \wedge
        \\
        (x\neq b \wedge c \in A) \Rightarrow (precE[b,c] \cap precE[x,c] = \{c\})
        \end{array}
        \!\!\right)
\vspace*{.2cm}
\\
\varphi_4(G,\hat{G}) & \triangleq & \forall b,c \in \hat V\, .\ (b\neq c) \Rightarrow (precV[b] \cap precV[c] = \emptyset)
\vspace*{.2cm}
\\
\varphi_5(G,\hat{G}) & \triangleq & \forall b \in \hat V\, .\ np_A[b] = |precV[b]|
\vspace*{.2cm}
\\
\varphi(G,\hat{G}) & \triangleq & \varphi_1(G,\hat{G})\land\varphi_2(G,\hat{G})\land\varphi_3(G,\hat{G})\land\varphi_4(G,\hat{G})\land\varphi_5(G,\hat{G})
\end{array}
\]
Finally, let $lg(\hat{G}) \triangleq \hat{G} \neq s \rightarrow t$ be the loop guard of the iteration of the {\bf while} starting with $\hat G$.
 
\begin{lemma}[Loop invariants]
\label{lem:invariants}
    If $\varphi(G,G)$ then $(\varphi(G,\hat{G})\land lg(\hat{G}))\Rightarrow \varphi(G,\hat{G}')$.
\end{lemma}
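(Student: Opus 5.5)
We prove the implication by a direct one-step analysis of the loop body. Fix an iteration of the \textbf{while} loop on $\hat G$ with $lg(\hat G)$. Since $G$ is $st$-series-parallel (and not the single edge $s\rightarrow t$), the reduction of \cite{VTL82} guarantees a vertex $v\notin\{s,t\}$ with unique in-edge $(u,v)$ and unique out-edge $(v,w)$ in $\hat G$. The iteration changes only three things. It updates $precE[u,w]$ (line~\ref{it2}), $precV[w]$ (line~\ref{it3}) and $np_A[w]$ (line~\ref{it1}). It removes $v$ together with $(u,v),(v,w)$. It adds $(u,w)$ if that edge is absent. All other entries are untouched, so for each conjunct $\varphi_i$ we only check the instances that mention $u$, $v$ or $w$; all other instances carry over verbatim from $\varphi(G,\hat G)$. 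The hypothesis $\varphi(G,G)$ gives the base facts: every original edge $(b,c)$ has $precE[b,c]$ equal to $\{c\}\cap A$, precisely the set prescribed by $\varphi_1$ when $\hat V = V$.

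\emph{Step 1 ($\varphi_1$ for $(u,w)$).} We must show that the new set $precE[u,v]\cup precE[v,w]\cup precE[u,w]$ equals the set of $a\in A$ lying on some $G$-path $u\pathPlus a\rightsquigarrow w$ whose interior avoids $\hat V'=\hat V\setminus\{v\}$. Such a path either avoids $v$ in its interior, so it was already a witness for $precE[u,w]$ (this case is possible only if $(u,w)\in\hat E$; otherwise the set is empty, since every surviving path is represented by a $\hat E$-edge). Or it passes through $v$ exactly once, since $G$ is acyclic. Splitting the path at $v$ gives a witness for $precE[u,v]$ (if $a$ precedes or equals $v$) or for $precE[v,w]$. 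The converse direction concatenates witnesses; simplicity follows from acyclicity. The key auxiliary fact is that in $\hat G$, the only $\hat E$-edges incident to $v$ are $(u,v)$ and $(v,w)$. Hence any $G$-path with interior outside $\hat V'$ that reaches $v$ must come from $u$ and leave towards $w$ within the already-contracted parts. We derive this from $\varphi_1$ itself, by showing that each $\hat E$-edge summarizes exactly the contracted subgraph between its endpoints.

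\emph{Step 2 ($\varphi_2$, $\varphi_4$, $\varphi_5$).} For $\varphi_2$ at $w$, the in-edges of $w$ in $\hat G'$ are those of $\hat G$ minus $(v,w)$, plus $(u,w)$. We therefore need $precV[w]\cup precV[v]=\bigcup_{(x,w)\in\hat E'}precE'[x,w]$. This follows from $precV[v]=precE[u,v]$ (since $v$ has the unique in-edge $(u,v)$) and Step~1. Vertices other than $w$ keep their in-edges, except that $v$ is deleted. For $\varphi_4$, only $precV'[w]=precV[w]\cup precV[v]$ changes. Its disjointness from $precV[b]$ for $b\neq v,w$ follows from $\varphi_4$ on $\hat G$ applied twice. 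For $\varphi_5$, we need $|precV[w]\cup precV[v]|=np_A[w]+np_A[v]$, which holds because $\varphi_4$ makes the union disjoint.

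\emph{Step 3 ($\varphi_3$).} We must verify the three disjointness clauses for the new $precE'[u,w]$ against every other $precE[x,d]$ with $x,d\in\hat V'$. Each of the three components of the union is already disjoint from $precE[x,d]$ by $\varphi_3$ on $\hat G$, provided the endpoints differ appropriately. The delicate subcase is $d=w$ and $x\neq u$. Here $precE[v,w]\cap precE[x,w]\subseteq\{w\}\cap A$, which is exactly what the clause requires, and $precE[u,v]\cap precE[x,w]=\emptyset$ since $v\neq w$. We expect this step to be the main obstacle. Care is needed because $\varphi_3$ as stated quantifies over all vertex pairs, not only over $\hat E$-edges, and because the equal-target cases produce the special element $c\in A$. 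The plan is to do a careful case split on which of $x,d$ coincide with $u,w$, reducing each case to an instance of $\varphi_3(G,\hat G)$.
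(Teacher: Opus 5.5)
Your decomposition is the paper's. You check only the instances the iteration touches, prove $\varphi_1$ at $(u,w)$ by double inclusion with witness paths split at $v$, obtain $\varphi_2,\varphi_4,\varphi_5$ by direct computation from the update lines, and reduce $\varphi_3$ to instances of $\varphi_3(G,\hat G)$ by a case split on $x,d$ versus $u,w$. Steps~2 and~3 are fine.

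The gap is the auxiliary fact in Step~1, which you plan to derive ``from $\varphi_1$ itself''. That cannot work. $\varphi_1$ only constrains elements of $A$ and says nothing when the relevant $precE$ set is empty, so it tells you nothing about which $G$-paths an $\hat E$-edge stands for. Both of your inclusions need exactly that information:
\begin{itemize}
\item to concatenate when $a\in precE[u,v]$, you need a $G$-path from $v$ to $w$ whose interior avoids $\hat V$, and $\varphi_1$ at $(v,w)$ provides one only if $precE[v,w]\neq\emptyset$;
\item to discard witnesses avoiding $v$ when $(u,w)\notin\hat E$, you need that no such path exists at all.
\end{itemize}

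The same fact is needed where you say the remaining instances ``carry over verbatim''. The right-hand side of $\varphi_1$ depends on $\hat V$, which has shrunk to $\hat V'$, so for every other $(b,c)\in\hat E'$ you must exclude new witnesses that have $v$ in their interior. $\varphi$ alone does not suffice for this. Take $G=s\to b\to v\to a\to c\to t$, $A=\{a\}$, and $\hat G$ with $\hat V=\{s,b,v,c,t\}$, $\hat E=\{(s,v),(v,t),(s,b),(b,c),(c,t)\}$, and all sets empty. Then $\varphi(G,\hat G)$ holds and $v$ qualifies for contraction. After contracting $v$, the path $b\to v\to a\to c$ is a witness for $a$ at $(b,c)$, while $precE'[b,c]=\emptyset$.

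The repair is a separate invariant, independent of $A$, proved by its own induction on the loop and used as a side hypothesis for every $\hat G$ the loop actually reaches: for distinct $b,c\in\hat V$, $(b,c)\in\hat E$ iff $G$ has a non-empty path from $b$ to $c$ whose interior is disjoint from $\hat V$. The base case holds because parallel edges are merged; the step uses that $(u,v),(v,w)$ are the only $\hat E$-edges at $v$, together with acyclicity. With this invariant, a witness through $v$ for $(b,c)$ splits into paths giving $(b,v),(v,c)\in\hat E$, which forces $(b,c)=(u,w)$, and both inclusions of Step~1 go through.

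The paper uses this fact only implicitly (``by construction'' and its in/out-degree contradictions). It likewise skips the unchanged edges and the case $(u,w)\notin\hat E$. Once the invariant is made explicit, your argument is the paper's, done more carefully.
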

 \begin{proof}
 We first note that $\varphi(G,G)$ is trivial, because of lines 1--11.
 We now prove $(\varphi(G,\hat{G})\land lg(\hat{G}))\Rightarrow \varphi_i(G,\hat{G}')$, for $i=1,...,5$.
 Since $lg(\hat{G}) == TRUE$, it must exist a $v\in \hat V$ such that $(u , v)$ and $(v,w)$ are the only edges $\hat{E}$ touching it.
\begin{enumerate}
    \item If $(b, c)\neq (u, w)$, then $precE'[b, c]=precE[b, c]$ and the thesis follows because of $\varphi_1(G,\hat{G})$.
      Let now be $(b, c)=(u, w)$; we want to prove that 
        \begin{equation}\label{eq:phi1}
        precE'[u, w]=\{a\in A\ |\ \exists P \text{ in } G\, :\, P=u \pathPlus a \rightsquigarrow w \, \wedge\, \innNodes P \cap \hat V' =\emptyset\}
        \quad
        \end{equation}
        By line~\ref{it2}, we know that 
        $$
        precE'[u, w]=precE[u, v]\cup precE[v, w]\cup precE[u, w]
        $$ 
        and, by $\varphi_1(G,\hat{G})$, we know that
        \begin{itemize}
            \item $precE[u, v]=\{a\in A\ |\ \exists P$ in $G: P=u\pathPlus a\rightsquigarrow v\ \wedge\ \innNodes P \cap \hat V=\emptyset\}$,
            \item $precE[v, w]=\{a\in A\ |\ \exists P$ in $G: P=v\pathPlus a\rightsquigarrow w\ \wedge\ \innNodes P \cap \hat V=\emptyset\}$,
            \item $precE[u, w]=\{a\in A\ |\ \exists P$ in $G: P=u\pathPlus a\rightsquigarrow w\ \wedge\ \innNodes P \cap \hat V=\emptyset\}$.
        \end{itemize}
        By $\varphi_3(G,\hat{G})$, we know that
        \begin{itemize}
            \item $precE[u, v]\cap precE[v, w]=\emptyset$,
            \item $precE[u, v]\cap precE[u, w]=\emptyset$,
            \item $precE[v, w]\cap precE[u, w]$ is $\{w\}$, if $w \in A$, and is $\emptyset$, otherwise.
        \end{itemize}
        We now prove \eqref{eq:phi1} by double inclusion:
        \begin{itemize}
            \item [($\subseteq$)] By contradiction, $\exists a\in A$ such that $a \in precE[u, w]$ but $a\notin precE[u, v]$, $a\notin precE[v, w]$, and  $a\notin precE[u, w]$. 
            Since $a\in precE'[u, w]$ and $a\notin precE[u, w]$, and since $\hat{V}'=\hat{V}\setminus\{v\}$, the only possibility is that all $P=u\pathPlus a \rightsquigarrow w$ in $G$ are such that $\innNodes P\cap \hat{V}=\{v\}$. 
            Now consider a generic such $P$; then, within $P$, $v$ can occur either before $a$ or after it.
            In the first case, we would have that $P=u \pathPlus v\rightsquigarrow a\rightsquigarrow w$, 
            but this would imply that $a\in precE[v, w]$.
            In the second case, we would have that $P=u \pathPlus a \rightsquigarrow v \rightsquigarrow w$, 
            but this would imply that $a\in precE[u, v]$.
            
            \item [($\supseteq$)] By contradiction, $\exists a\in A$ such that $a \in (precE[u, v]\cup precE[v, w]\cup precE[u, w])$ but $a\notin precE'[u, w]$. 
            If $a=w$, then by construction $w\in precE'[u, w]$: indeed, $precE[u, w] = \{w\}$ in line 6 and the algorithm never removes elements from $precE$.
            Hence, we can assume that the three $prec_E(\cdot)$ are pair-wise disjoint and we can consider the three cases is isolation:
            \begin{itemize}
            \item $a\in precE[u, w]$: this would imply that $a\in precE'[u, w]$, since $precE$ cannot decrease.
            \item $a\in precE[u, v]$: let $P_1$ be a path in $G$ such that $P_1= u \pathPlus a \leadsto v$ and $\innNodes P_1\cap \hat{V}=\emptyset$. 
            By construction, there exists in $G$ a $P_2=v\rightsquigarrow w$, and this must be disjoint from $P_1$ (otherwise, there would be a cycle in $G$). 
            Let us consider the path in $G$ $u\overset{P_1}\rightsquigarrow a \overset{P_1}\rightsquigarrow v \overset{P_2}\rightsquigarrow w$; then, there must exist an $x\in P_2\cap \hat{V}$ such that $u\overset{P_1}\rightsquigarrow a \overset{P_1}\rightsquigarrow v\overset{P_2} \rightsquigarrow x \overset{P_2}\rightsquigarrow w$. 
            This is not possible, otherwise $out\_degree_{\hat G}(v)$ would not be 1.
            
                \item $a\in precE[v, w]$: let $P_2$ be a path in $G$ such that $P_2= v \pathPlus a \leadsto w$ and $\innNodes P_2\cap \hat{V}=\emptyset$. 
                By construction, there exists in $G$ a $P_1=u\rightsquigarrow v$, and this must be disjoint from $P_2$ (otherwise, there would be a cycle in $G$).
                Let us consider $u\overset{P_1}\rightsquigarrow v\overset{P_2}\rightsquigarrow a  \overset{P_2}\rightsquigarrow w$; then, there must exist an $x\in P_2 \cap \hat{V}$ such that $u \overset{P_1}\rightsquigarrow x\overset{P_1}\rightsquigarrow v \overset{P_2}\rightsquigarrow a \overset{P_2}\rightsquigarrow w$.
                This is not possible, otherwise $in\_degree_{\hat G}(v)$ would not be 1.
            \end{itemize}
            \end{itemize}
    
    \item If $b\neq w$, then $precV'[b]=precV[b]$ and the thesis follows by $\varphi_2(G,\hat G)$.
    Let $b=w$, we want to prove that
    \begin{equation}\label{eq:phi2}
    precV'[w]=\underset{(x, w) \text{ in } \hat{E}'}{\bigcup}precE'[x, w]
    \end{equation}
    First, let us observe that, for all $x\in \hat{V}'$ such that $x\neq u$, by construction $precE'[x, w]=precE[x, w]$.
    Then, since $(v,w) \not\in \hat{E}'$, we have that 
    \begin{equation}\label{eq:phi2-app}
    \underset{(x, w) \in \hat{E}' : x\ne u}{\bigcup}precE'[x, w]=
    \underset{(x, w) \in \hat{E} : x\not\in \{u,v\}}{\bigcup}precE[x, w]
    \end{equation}
    So, we can prove \eqref{eq:phi2} as follows:
    $$
    \begin{array}{rcll}
    precV'[w] & = & precV[w]\cup precV[v] & \text{(by line~\ref{it3})}
    \vspace*{.2cm}
    \\
    & = & 
    \underset{(x, w) \in \hat{E}}{\bigcup}precE[x, w]\ \ \cup\ \ 
    \underset{(x, v) \in \hat{E}}{\bigcup}precE[x, v]
    & \text{(by $\varphi_2(G,\hat G)$)}
    \vspace*{.2cm}
    \\ 
    & = & 
    \underset{(x, w) \in \hat{E}}{\bigcup}precE[x, w]\ \ \cup\ \ precE[u, v]
    & \text{($in\_degree_{\hat G}(v)=1$)}
    \vspace*{.2cm}
    \\ 
    & = & \underset{(x, w) \in \hat{E}:x\not\in \{u, v\}}{\bigcup}precE[x, w]\ \ \cup\ precE[u, w]
    \vspace*{-.35cm}
    \\
    && \hspace*{4.35cm}\cup\ \,precE[v, w]
    \\
    && \hspace*{4.35cm}\cup\ \,precE[u, v]
    \vspace*{.2cm}
    \\ 
    & = & \underset{(x, w) \in \hat{E}:x\not\in \{u, v\}}{\bigcup}precE[x, w[\ \ \cup\ precE'[u, w]
    & \text{(by line~\ref{it2})}
    \vspace*{.2cm}
    \\
    & = & \underset{(x, w) \in \hat{E}':x\ne u}{\bigcup}precE'[x, w]\ \ \cup\ precE'[u, w]
    & \text{(by \eqref{eq:phi2-app})}
    \vspace*{.2cm}
    \\
    & = & \underset{(x, w) \in \hat{E}'}{\bigcup}precE'[x, w]
    \end{array}
    $$

    \item For all pairs $(x,y)$ different from $(u,w)$, by construction $precE'[x,y]=precE[x,y]$; so, for all 4-tuples $b,c,d,x \in \hat V'$ in which $(u,w)$ does not appear, the thesis follows by using $\varphi_3(G,\hat{G})$. W.l.o.g., let $(b, c)=(u, w)$; we want to prove that
        \begin{enumerate} 
        \item $precE'[u, w] \cap precE'[x, d] = \emptyset$, whenever $d\neq w$;
        \item $precE'[u, w] \cap precE'[x, w] = \{w\}$, whenever $x\neq u$ and $w \in A$;
        \item $precE'[u, w] \cap precE'[x, w] = \emptyset$, whenever $x\neq u$ and $w \not\in A$.
        \end{enumerate}
First, $precE'[u, w]=precE[u, w]\cup precE[u, v]\cup precE[v, w]$, by line~\ref{it2};
moreover, since $v \not\in \hat{V}'$, we have that $d \ne v$ and $x \ne v$.
Then, (a) follows from $\varphi_3(G,\hat{G})$, since $d \not\in \{v,w\}$.
For (b) and (c), still by $\varphi_3(G,\hat G)$, we have that
            \begin{itemize}
                \item if $x\ne u$, then $precE[u, w]\cap precE[x, w]$ is $\{w\}$, if $w \in A$, and is $\emptyset$, otherwise;
                \item if $x\ne u$, then $precE[u, v]\cap precE[x, w]=\emptyset$;
                \item if $x\ne v$, then $precE[v, w]\cap precE[x, w]$ is $\{w\}$, if $w \in A$, and is $\emptyset$, otherwise.
            \end{itemize}
So, we can conclude that, whenever $x \ne u$, we have that $precE'[u, w] \cap precE'[x, w]$ is $\{w\}$, if $w \in A$, and is $\emptyset$, otherwise.

\item 
For all $x\ne w$, by construction $precV'[x]=precV[x]$; so, if both
$b\ne w$ and $c\ne w$, then the thesis follows by $\varphi_4(G,\hat{G})$.
W.l.o.g., let $b=w$ and $c\ne w$; by line~\ref{it3}, $precV'[w]=precV[w]\cup precV[v]$. 
Because of $\varphi_4(G,\hat{G})$, we have that:
        \begin{itemize}
            \item $precV[w]\cap precV[c]=\emptyset$, since $c\ne w$;
            \item $precV[v]\cap precV[c]=\emptyset$, since $c\ne v$ (being $v \not\in \hat V'$ and $c \in \hat V'$).
        \end{itemize}
This allows us to conclude $precV'[b] \cap precV'[c] = \emptyset$, as required.
    
\item 
If $b\ne w$, then, by construction, $np_A'[b] = np_A[b]$ and $precV[b] = precV'[b]$; by $\varphi_5(G,\hat G)$, we have that
$np_A[b] = |precV[b]|$ and conclude.
If $b=w$, then
\[
    \begin{array}{rcll}
        np_A'[w] & = & np_A[w] + np_A[v] & \text{(by line~\ref{it1})}
        \\
         & = & |precV[w]| + |precV[v]| & \text{(by $\varphi_5(G,\hat G)$)}
         \\
         & = & |precV[w] \cup precV[v]| & \text{(by $\varphi_4(G,\hat G)$, being $w \ne v$)}
         \\
         & = & |precV'[w]| & \text{(by line~\ref{it3})}
         \vspace*{-.7cm}
    \end{array}
\]
\end{enumerate}     
 \end{proof}
 
\begin{proof}[\bf Proof of Lemma~\ref{lem:npA}]
If $ref[v] = u$, then at some iteration of the {\bf while} in line 14 of Algorithm~\ref{alg:minFollInv} it happened that $v$ was selected for elimination, and this implies that, in that iteration, $in\_degree_{\hat G}(v)$ was 1 and $u$ was the only vertex such that $(u,v) \in \hat E$. Thus, by Lemma~\ref{lem:invariants}, $\varphi(G,\hat G)$ holds and so we have that
\[
\begin{array}{llll}
np_A[v] & = & |precV[v]| & \text{(by $\varphi_5(G,\hat G)$)}
\vspace*{.2cm}
\\
& = & \left | \underset{(x,v) \in \hat{E}}{\bigcup}precE[x,v] \right |  & \text{(by $\varphi_2(G,\hat G)$)}
\vspace*{.2cm}
\\
& = &  | precE[u,v] |  & \text{($in\_degree_{\hat G}(v)=1$)}
\vspace*{.2cm}
\\
& = & \left |
\left \{ 
\begin{array}{ll}
a\in A\ |\ \exists P \text{ in } G\ . & \!\!\!\!\! P = u \pathPlus a \rightsquigarrow v \\
& \!\!\!\!\!\wedge\ \innNodes P \cap \hat V = \emptyset
\end{array}
\right\}
\right |  & \text{(by $\varphi_1(G,\hat G)$)}
\vspace*{.2cm}
\\
& = & |\{a \in A \,:\, u < a \leq v \text{ in } G\}| & (\star)
\end{array}
\]
To prove $(\star)$, it suffices to note that, since $(u,v)$ is the only edge of $\hat E$ entering into $v$,  there cannot exist in $G$ a path from $u$ to $v$ that touches another vertex of $\hat V$.
We easily conclude, by noting that: (1) after the iteration in which $v$ is deleted from $\hat G$, its $np_A$ does not change anymore; and (2) the handling of $np_A$ is the same in Algorithms~\ref{alg:minFollInv} and~\ref{alg:minFoll}.
\end{proof}

\end{document}